\def\lncs{0}
\newtheorem{theorem}{Theorem}
\newtheorem{definition}{Definition}
\newtheorem{claim}{Claim}
\newtheorem{corollary}{Corollary}
\crefname{claim}{Claim}{Claims}
\newcommand{\A}{{\mathcal A}}
\newcommand{\B}{{\mathcal B}}
\newcommand{\C}{{\mathcal C}}
\newcommand{\cH}{{\mathcal H}}
\newcommand{\cL}{{\mathcal L}}
\newcommand{\N}{{\mathbb{N}}}
\newcommand{\F}{{\mathbb{F}}}
\newcommand{\HH}{{\mathbb{H}}}
\newcommand{\bit}{\{0,1\}}
\newcommand{\ie}  {i.e.,\ }
\newcommand{\eg}  {e.g.,\ }
\newcommand{\etal}{{et~al.\ }}
\newcommand{\etalcite}[1]{{et~al.~\cite{#1}}}
\newcommand{\poly}{\mathsf{poly}}
\newcommand{\polylog}{\mathsf{polylog}}
\newcommand{\concat}{\circ}
\newcommand{\verifier}{\mathcal{V}}
\newcommand{\prover}{\mathcal{P}}
\newcommand{\ith}[1]{{#1}\textsuperscript{th}}
\newcommand{\timec}{\tau}
\newcommand{\ignore}[1]{}
\newcommand{\class}[1]{\textsf{#1}}
\newcommand{\NP}{{\class{NP}}}
\newcommand{\AM}{{\class{AM}}}
\newcommand{\PP}{{\class{P}}}
\newcommand{\NC}{{\class{NC}}}
\newcommand{\problems}[1]{\textsc{#1}}
\newcommand{\SetEquality}{\problems{SetEquality}}
\newcommand{\Clique}{\problems{Clique}}
\newcommand{\Distinctness}{\problems{Distinctness}}
\newcommand{\GNI}{\problems{GNI}}
\newcommand{\MST}{\problems{MST}}
\newcommand{\Sym}{\problems{Sym}}
\newcommand{\DSym}{\problems{DSym}}
\newcommand{\ASym}{\problems{Asym}}
\newcommand{\Permutation}{\problems{Permutation}}
\newcommand{\LeaderElection}{\problems{LeaderElection}}
\newcommand{\AMProtocol}[1]{\textsf{#1}}
\newcommand{\dAM}{\AMProtocol{dAM}}
\newcommand{\dAMAM}{\AMProtocol{dAMAM}}
\newcommand{\dMAMAM}{\AMProtocol{dMAMAM}}
\newcommand{\dMAM}{\AMProtocol{dMAM}}
\newcommand{\dIP}{\AMProtocol{dIP}}
\newcommand{\IP}{\AMProtocol{IP}}
\newcommand{\ID}{\AMProtocol{ID}}
\newcommand{\VerifierAlg}{{\mathcal{V}}}
\newcommand{\ProverAlg}{{\mathcal{P}}}
\newcommand{\parent}{\mathsf{parent}}
\newcommand{\depth}{\mathrm{depth}}
\newenvironment{boxfig}[2]{\begin{figure}[#1]\fbox{\begin{minipage}{\linewidth}
				\vspace{0.2em}
				\makebox[0.025\linewidth]{}
				\begin{minipage}{0.95\linewidth}
					{{
							#2 }}
				\end{minipage}
				\vspace{0.2em}
	\end{minipage}}}{\end{figure}}
\newcommand{\pprotocol}[4]{
	\begin{boxfig}{h!}{
			\begin{center}
				\textbf{#1}
			\end{center}
			#4
			\vspace{0.2em} } \caption{\label{#3} #2}
	\end{boxfig}
}
\newcommand{\protocol}[4]{
	\pprotocol{#1}{#2}{#3}{#4} }
\title{The Power of Distributed Verifiers in Interactive Proofs}
\author{Moni Naor\thanks{Department of Computer Science
		and Applied Mathematics, Weizmann Institute of Science Israel. Supported in part by a
		grant from the Israel Science Foundation (no.\ 950/16). Incumbent of the Judith Kleeman Professorial Chair.}
		\and Merav Parter\thanks{Department of Computer Science
			and Applied Mathematics, Weizmann Institute of Science Israel. Supported in part by grants from the Israel Science Foundation (no.\ 2084/18)}
		\and Eylon Yogev\thanks{Department of Computer Science, Technion, Haifa, Israel. Supported by the European Union's Horizon 2020 research and innovation program under grant agreement no.\ 742754.}}
\date{}
\begin{document}
\maketitle
\thispagestyle{empty}

\begin{abstract}
We explore the power of interactive proofs with a distributed verifier. In this 
setting, the verifier consists of $n$ nodes and a graph $G$ that defines their 
communication pattern. The prover is a single entity that communicates with all 
nodes by short messages.
The goal is to verify that the graph $G$ belongs to some language in a small
number of rounds, and with small communication bound, \ie the proof size.

This interactive model was introduced by Kol, Oshman and Saxena (PODC 2018) as a generalization of non-interactive distributed proofs. They demonstrated the power of interaction in this setting by constructing protocols for problems as Graph Symmetry and Graph Non-Isomorphism -- both of which require proofs of $\Omega(n^2)$-bits without interaction.

In this work, we provide a new general framework for distributed interactive proofs that allows one to
translate standard interactive protocols (i.e., with a centralized verifier) to ones where the verifier is distributed with a proof size that depends on the computational complexity of the verification algorithm run by the centralized verifier.
We show the following:
\begin{itemize}
\item Every (centralized) computation that can be performed in time $O(n)$ can
be translated into three-round distributed interactive protocol with
$O(\log n)$ proof size. This implies that many graph problems for sparse graphs
have succinct proofs (\eg testing planarity).

\item Every (centralized) computation implemented by either a small space or by uniform $\NC$ circuit can
be translated into a
distributed protocol with $O(1)$ rounds and $O(\log n)$ bits proof size for the low space
case and $\polylog(n)$ many rounds and proof size for $\NC$.

\item We also demonstrate the power of our compilers for problems not captured
by the above families.
We show that for Graph Non-Isomorphism, one of the striking demonstrations of
the power of interaction, there is a 4-round protocol with $O(\log n)$ proof size, improving upon the $O(n \log n)$ proof size of Kol et al.

\item For many problems we show how to reduce proof
size
below the naturally seeming barrier of $\log n$. By employing our RAM compiler, we get a 5-round
protocols with proof size $O(\log \log n)$ for a family of problems including Fixed
Automorphism, Clique and Leader Election (for the later two problems we
actually get $O(1)$ proof size).
\item Finally we discuss how to make these proofs non-interactive {\em arguments} via random
oracles.
\end{itemize}
Our compilers capture many natural problems and demonstrates the difficultly in
showing lower bounds in these regimes. 
\end{abstract}

\newpage
\thispagestyle{empty}
\tableofcontents
\newpage

\setcounter{page}{1}

\section{Introduction}

\epigraph{\textsf{That rug really tied the room together.}}{\textit{
		Big Lebowski, Coen Brothers, 1991}}

Interactive proofs are an extension of non-determinism and have proven to be
a fundamental tool in complexity theory and cryptography. Their development has led
us, among others, to the exciting notions of zero knowledge proofs
\cite{GoldwasserMR89,GoldreichMW91} and probabilistically checkable proofs
($\mathsf{PCP}$s).

Interactive proof is a protocol between a randomized verifier and a powerful
but untrusted prover. The goal of the prover is
to convince the verifier regarding the validity of a statement, usually stated
as membership of an instance $x$ to a language $\mathcal{L}$. The two main
requirements of the protocol are:  {\em completeness}: a verifier should accept with high probability (or probability one if we want perfect completeness) a true statement if the prover is honest, and {\em soundness}: if the statement is false,
then for any dishonest (unbounded) prover participating in the protocol the
verifier should reject with high probability (over its internal random coins).
In the classical case, the prover is computationally all powerful and the
verifier runs in polynomial time.
In a celebrated result, interactive proofs are proved to be very powerful
allowing for efficient verification of any language in $\mathsf{PSPACE}$ with a polynomial verifier
\cite{LundFKN92,Shamir92}. Another striking result was the \cite{GoldreichMW91} protocol for Graph Non-isomorphism (GNI).

Interactive proofs are largely concerned with verifiers that are
computationally bounded, but are relevant for verifiers with any sort
of limitation (\eg finite automata~\cite{DworkS92,Condon92}).
They have been studied in other settings such as communication
complexity~\cite{BabaiFS86,GoosPW18} and their connection to circuit
complexity~\cite{KarchmerW90,AaronsonW09,Williams16} and property
testing~\cite{RothblumVW13,GurR18}. Of particular interest to us are
interactive proofs for graph problems in $\PP$ with a
presumably weaker verifier (e.g.\ in
$\NC$)~\cite{GoldwasserKR15,ReingoldRR16} and a polynomial prover (i.e., prover restricted to polynomial computation).
Our results however also capture problems that go beyond $\PP$.

One schism in interactive proofs is whether the verifier has some private coins where the prover does not get to see them
(as in the original~\cite{GoldwasserMR89}) or if all coins are public (as
in~\cite{BabaiM88}), usually denote with AM for Arthur-Merlin. Goldwasser and
Sipser~\cite{GoldwasserS89} gave a compiler for converting private-coins into
public-coins that is relevant for polynomial-time verifiers. When applied to
the protocol of \cite{GoldreichMW91} for Graph Non-isomorphism it yields a two
round
public-coins (AM) protocol for showing that two graphs are {\em not}
isomorphic.

In this work we study interactive proofs where
the verifier is a {\em distributed system}: a network of nodes that interact
with a single untrusted prover. The prover sees the
entire network graph while each node in the network has only a \emph{local}
view (i.e., sees only its immediate
neighbors) in
the graph. The goal of the prover is to convince the nodes of a \emph{global} statement
regarding the network. The two main complexity measures of the
protocol (which we aim to minimize) are the number of rounds, and the
size of the proof (i.e., communication bound between the network and the prover). In this context, we ask:
\begin{center}
{\em What is the power of interactive proofs with a distributed verifier?}
\end{center}

The notion of interactive proofs with a distributed verifier was introduced
recently by Kol, Oshman and Saxena~\cite{KolOS18} as a generalization of its
non-interactive version known as ``distributed NP'' proofs (in its various
versions, e.g., \cite{KormanKP10,GoosS16,fraigniaud2011local}). The prover interacts with the nodes of the network in rounds. In each round, a node $u$ sends the prover a random challenge $R_u$. Then, the prover responds by sending each node $u$ its respond $Y_u$. Nodes can exchange their proof $Y_u$ only with their immediate neighbors $N(u)$ in the network in order to decide whether to accept the proof.
For accepting a proof all nodes must accept and to reject it is enough that one node rejects.

A simple example for a ``distributed NP'' proof is 3-coloring of a graph: the prover
gives each node
in the graph its color, and nodes exchange colors with their neighbors to
verify the validity of the coloring. In such a case, we say that the proof size
is a constant (each color can be described using two bits). Korman et.\
al.~\cite{KormanKP10} introduced this notion as a ``proof labeling scheme'' and showed that there is a long list of
problems for which a short distributed proof exists. Other problems (see also e.g.\ \cite{GoosS16}) requires proofs with $\Omega(n^2)$ bits, and thus cannot be distributed in any non-trivial manner.

There is a long line of research on the power of distributed proofs focusing on different notions of ``proof''. For example, G\"{o}\"{o}s and  Suomela~\cite{GoosS16}
studied distributed proofs that can be verified with a constant-round verification algorithm. Baruch et al.~\cite{MorFP15} studied the power of a randomized verifier in distributed proofs and Fraigniaud et al.~\cite{FraigniaudHK12}  studied the effect of such proofs when nodes are anonymous. Feuilloley et al.~\cite{FeuilloleyFH16} considered the first interactive proof system which consists of three players: a centralized prover, disprover and a distributed network verifier.
We further discuss these works in \Cref{sec:related}.

Kol \etal \cite{KolOS18} took an important step towards understanding
the power of interaction in distributed proofs. As an analog to the class
$\mathsf{AM}$ (Arthur-Marlin), they defined the class $\dAM[f(n)]$ to contain
all $n$-vertex graph problems  that admit a two-message
protocol where the communication between the prover and each node in network
is bounded by $f(n)$. As in $\AM$, the protocols in this class must be
``public-coins'', that
is, the node's messages to the prover are simply independent random bits (no
other randomness is allowed). The class $\dMAM[f(n)]$ is defined similarly for
three-message protocols (and so forth), and in general $\dIP[r,f(n)]$ denotes
protocols with $r$ rounds and communication complexity bounded by $f(n)$.

Their main positive results are for two problems $\Sym$ and $\GNI$ which have an $\Omega(n^2)$-bit lower bound in the non-interactive setting \cite{GoosS16}. In the problem of $\Sym$, the network should decide whether the network graph has
a non-trivial automorphism. In $\GNI$ problem, the goal is to decide whether the network graph is not
isomorphic to an additional input graph. Specifically,
they show that $\Sym \in \dMAM[O(\log n)]$ and that $\GNI \in \dAMAM[O(n\log
n)]$. This is a huge improvement over the $\Omega(n^2)$ lower bound for the non interactive version of this problems. On the hardness/impossibility side they show an (unconditional) lower bound for the
$\Sym$ problem for
{\em two}-message protocols: if $\Sym \in \dAM[f(n)]$ then $f(n) \in \Omega(\log
\log n)$.\footnote{The authors of \cite{KolOS18} also reported an improvement of $\Omega(\log n)$, see~\cite{Oshman18}.}

\subsection{Our Results}\label{sec:our-results}

\textbf{The Model.} We follow the distributed interactive proof model of Kol et al.\ \cite{KolOS18}:
The protocol proceeds in rounds in which nodes exchange (short) messages with the prover as well as with their neighbors in the graph.
The messages that the nodes send to each other are essentially the proofs they received from the prover. Thus in the model of \cite{KolOS18} the nodes are assumed to get from the prover their own proof as well as the proofs of their neighbors.
Note that in the distributed interactive setting, a proof size of $O(n^2)$ bits is a trivial upper bound for all graph problems, since the nodes are computationally unbounded (i.e., only their information on the network is bounded).
Our key results demonstrate that many natural graph problems on sparse graphs admit logarithmic-size proofs. In addition, it is also possible to go below the $\log n$-regime (even for dense graphs), and obtain $O(\log\log n)$-size proofs for a wide class of problems.

\paragraph{General Compiler for RAM-Verifiers.}
One of our key contribution is in presenting general methods for converting
``standard'' interactive
proofs (\ie proofs where the verifier is a centralized algorithm) to protocols
where with distributed verifier. The cost of this transformation in terms of the \emph{proof-size} depends on the computational complexity of the \emph{centralized} verification algorithm.
Our first result concerns a RAM-verifier (i.e., where the verification algorithm runs a RAM machine).
We show a general compiler that takes any $r$-protocol with a RAM-verifier with
verification complexity $\timec$ (i.e., the time complexity of the verification
algorithm is $\timec$ operations over words on length $\log n$) and transforms
is into an $r+2$-round distributed
interactive protocol with proof-size $\timec \log n/n$.
Specifically, for a verifier that runs in time $O(n)$ a $\dIP[r+2,O(\log n)]$
distributed protocol.
\begin{theorem}\label{thm:ram-compiler}
Let $\pi \in \IP$ be an $r$-round public-coin protocol for languages of
$n$-vertex
graphs where the verifier is a RAM program with running time $\timec$, then $\pi
\in \dIP[r+2,O(\timec\log n /n)]$.
In particular, if $\pi \in \NP$ and the verifier runs in time $O(n)$
then $\pi \in \dMAM[O(\log n)]$.
\end{theorem}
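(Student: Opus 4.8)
The plan is to have the $n$ nodes jointly emulate the centralized RAM verifier $V$ of $\pi$ and then certify that this emulation accepts, with the prover supplying a short certificate whose validity can be checked by purely local neighbor-to-neighbor computation. First I would run the $r$ messages of $\pi$ essentially verbatim: in each Arthur round every node sends the prover a fresh uniform string, and the \emph{combined} Arthur message of the round is declared to be the concatenation of these strings ordered by vertex identifier; in each Merlin round the prover sends every node a designated chunk, and the combined Merlin message is the concatenation of the chunks. This pins down a concrete input $(G,\rho,\mu)$ for $V$ --- the true graph $G$, the combined randomness $\rho$, and the combined Merlin transcript $\mu$ --- and soundness of $\pi$ already guarantees that $V(G,\rho,\mu)$ rejects with overwhelming probability over $\rho$ whenever $G\notin\cL$, no matter how the prover chose the chunks making up $\mu$. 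The whole task thus reduces to letting the network decide, with $O((\timec/n)\log n)$ bits of help per node, whether a fixed $\timec$-step RAM computation on a \emph{distributed} input ends in \texttt{accept}.

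To do this I would append two rounds. In the extra Arthur round the nodes jointly sample (again as a concatenation of per-node random strings) a uniform evaluation point $\alpha$ in a field $\F$ of size $\poly(n)\cdot\timec$. In the extra Merlin round the prover sends: (i) a spanning tree $T$ of $G$ by parent pointers, which the nodes verify is a genuine spanning tree by the standard proof-labeling-scheme test in $O(1)$ neighbor rounds; (ii) the full time-ordered execution transcript of $V$ on $(G,\rho,\mu)$, where step $t$ --- its instruction, program counter, the $O(1)$ touched registers, and for a memory operation its address and value, all $O(\log n)$ bits --- is handed to vertex $\lceil tn/\timec\rceil$, giving each vertex a contiguous block of $O(\timec/n)$ steps; (iii) the same multiset of memory accesses sorted by (address, time), split into contiguous per-vertex blocks of $O(\timec/n)$ accesses; and (iv) for each consistency check below, one field element per vertex claimed to be the $\alpha$-evaluation of a subtree product over $T$. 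Each vertex then checks locally that consecutive steps inside its transcript block obey the RAM transition function; that its sorted block is in nondecreasing (address, time) order and every read returns the value of the immediately preceding access to the same address (a first access to an address returning that address's ``initial'' content); and that the final step outputs \texttt{accept}. The initial contents are precisely the virtual-tape cells holding $G$, $\rho$, $\mu$ (work memory initialised to $0$), and each such cell has a unique owner who knows its true value: a cell of $\rho$ is owned by the node that sampled it, a cell of $\mu$ by the node that received it, and a cell describing an edge of $G$ by an endpoint of that edge, with the prover assigning the certification load across endpoints and this assignment being checked locally between the two endpoints.

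The hard part is that the ``gluing'' constraints are not local: the end-state of one transcript block must equal the start-state of the next, the boundary accesses of consecutive sorted blocks must be compatible, the multiset of accesses in the transcript must equal the one in the sorted list, and the sorted list's initial reads restricted to tape cells must match the owners' true values --- and in each case the two parties involved are in general non-adjacent in $G$. I would reduce every such requirement to the single template \emph{equality of two multisets of $O(\log n)$-bit records}, verified by a Schwartz--Zippel fingerprint aggregated along $T$: encoding a record $e$ by a field element $\phi(e)$, the multisets $L$ and $R$ are equal iff $\prod_{e\in L}(\alpha-\phi(e))=\prod_{e\in R}(\alpha-\phi(e))$ as polynomials, so it suffices to test this at the random $\alpha$; the prover gives each vertex $u$ its two subtree products (over the $L$- and $R$-records that the subtree of $u$ in $T$ contributes), each vertex checks that its claimed subtree product equals the product of its tree-children's claimed products times the factor it computes itself from its own block, and the root checks that the two top-level products agree. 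A constant number of such aggregations covers all gluing constraints, each adding $O(\log n)$ bits per vertex and $O(1)$ neighbor rounds. Completeness is immediate for the honest prover. For soundness, when $G\notin\cL$ the true execution $V(G,\rho,\mu)$ rejects except with the error probability of $\pi$, and then any certificate the prover offers is either not a faithful execution on the true distributed tape --- caught by the transition, sortedness, read-consistency, or initial-value checks, up to the $O(\timec/|\F|)$ Schwartz--Zippel error --- or is faithful and halts in reject, caught by the output check; amplifying $\pi$ and enlarging $\F$ drives the total error below $1/3$. Summing the pieces, each vertex exchanges $O((\timec/n)\log n)$ bits with the prover over $r+2$ rounds, which for $\pi\in\NP$ with an $O(n)$-time verifier specialises to $\dMAM[O(\log n)]$.
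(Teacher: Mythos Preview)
Your overall architecture---distribute the transcript of the RAM verifier across the $n$ nodes, reduce all non-local consistency constraints to multiset equalities, and test each multiset equality by a Schwartz--Zippel fingerprint aggregated along a spanning tree---is exactly the paper's approach. The difference in the memory-checking layer (you sort the access list and check read-after-write locally, whereas the paper puts the RAM into Blum--Evans--Gemmell--Kannan--Naor canonical form so that the read multiset equals the write multiset) is a matter of taste; both are standard offline memory-checking techniques and either would work.

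There is, however, a genuine gap in your round ordering. You append an Arthur round in which $\alpha$ is sampled, and only \emph{afterwards}, in the final Merlin round, does the prover send the execution transcript and the sorted access list. This means the multisets you are fingerprinting---in particular the multiset $L$ of claimed initial reads and the multiset $R$ of true initial values at the same addresses---are chosen by the prover \emph{after} seeing $\alpha$. The Schwartz--Zippel bound $\Pr_\alpha[\prod(\alpha-\phi(l))=\prod(\alpha-\phi(r))]\le \timec/|\F|$ requires the two polynomials to be fixed before $\alpha$; once the prover sees $\alpha$ it may well be able to pick fake initial values (and hence a fake accepting execution) whose fingerprint collides with the true one at that particular $\alpha$, and you give no argument ruling this out. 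The paper avoids this by sending the execution tuples in the \emph{same} Merlin message as $\pi$'s final round, so that the multisets $R,W$ (and $S,S'$) are committed \emph{before} the verifier's challenge for the $\SetEquality$ test; only the subtree-product certificates are sent in the last Merlin round. With that reordering your argument goes through and the round count is still $r+2$, since the execution can piggyback on $\pi$'s last prover message.
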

The benefit of this compiler is in its generality: the transformation works for
any problem while paying only in the running-time of the verifier. This is particularly useful
when the graph is sparse. For instance, it is possible to verify whether a
graph is planar in $\dMAM[O(\log n)]$ using the linear time algorithm for
planarity \cite{HopcroftT74}. Any other linear time algorithms on sparse graphs
can be applied as well.  As we will see next, this compiler is used as a basic
building block in many of our protocols. Even for those that concern
\emph{dense} graphs, and even for those the go \emph{below} the $\log n$
regime. On of the most notable example for the usefulness of this compiler is
for the problems of graph non-isomorphism and related variants.

\paragraph{Graph Isomorphism and Asymmetry with $O(\log n)$-bit Proofs.}
We combine our linear-RAM compiler with the the well-known Goldwasser-Sipser
$\GNI$ protocol~\cite{GoldwasserS89,GoldreichMW91}. Note that the $\GNI$
problem involves two graphs, and its definition in the distributed setting
might be interpreted in two ways (either the second graph is also a
communication
graph, or not). For this reason, we start by considering an almost equivalent
problem of ``graph asymmetry'' ($\ASym$) where the prover wishes to prove that
the communication graph has {\em no} (non-trivial) automorphism. The protocol for $\GNI$ can
be naturally augmented to this problem we well.
Since the running
time of the verifier in the (centralized) protocol is linear in the size of the graph, applying our compiler immediately yields that $\ASym \in
\dAMAM[O(n\log n)]$ and $\GNI \in [O(n\log n)]$ (for any definition of $\GNI$) which matches the result of \cite{KolOS18} for the same problem.

To achieve the desired bound of $O(\log n)$ proof-size, we will not use the compiler as a black-box. Instead our strategy is based on
first reducing the problem to one that is verifiable in linear-time (in the number of vertices) using $O(\log n)$-bits of proofs. Then in the second phase, we will apply the RAM-compiler on this reduced problem, using proofs of size $O(\log n)$-bit again. Our end result is a $\dAMAM[O(\log n)]$ protocol for $\ASym$, an exponential
improvement over Kol et al.\ protocol. This also applies to the $\GNI$ problem where both graphs are part of the communication network (see in case they do not correspond to the communication graph below). In contrast, for proof labeling schemes there is an $\Omega(n^2)$ lower bound~\cite{GoosS16}.
\begin{theorem}
	$\GNI \in \dAMAM[O(\log n)]$, and $\ASym \in \dAMAM[O(\log n)]$.
\end{theorem}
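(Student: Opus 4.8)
The plan is to implement the public-coin Goldwasser--Sipser set-size protocol for $\GNI$/$\ASym$~\cite{GoldwasserS89,GoldreichMW91} in the distributed model, in two phases: a short interactive \emph{reduction} that leaves the network with a claim checkable by a \emph{linear-time} RAM verifier, followed by an application of the RAM compiler of \Cref{thm:ram-compiler}. I describe $\ASym$; the $\GNI$ variant in which both graphs are communication graphs is identical using the standard union/pair set, and the variant where the second graph is not a communication graph is treated separately. Recall that $G$ is asymmetric iff the set $S_G=\{H : H\cong G\}$ of labeled $n$-vertex graphs has size exactly $n!$, and has size at most $n!/2$ otherwise. So it suffices to run a set-size lower-bound protocol for $S_G$ — after a constant-fold parallel repetition that blows the factor-$2$ gap past the constant threshold the set-size protocol requires — in which Arthur picks a hash $h$ on the universe of graphs together with a uniform target $y$, and Merlin exhibits an $H\in S_G$ with $h(H)=y$ together with a membership witness, namely a permutation $\pi$ with $\pi(G)=H$.

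To make this distributable, I would take the hash to be a \emph{row-wise product with short descriptions}: writing an $n$-vertex graph as its $n$ rows, let $h=(h_1,\dots,h_n)$ where each $h_i$ is a polynomial-evaluation-plus-shift hash from $\{0,1\}^n$ to a field $\F_q$ with $\log q=O(\log n)$, so $h_i$ is described by $O(\log n)$ bits and is (almost) pairwise independent. In the first (Arthur) round, node $v$ draws $O(\log n)$ random bits specifying a pair $(h_v,y_v)$ associated with index $v$. In the second (Merlin) round the prover sends each node $v$ the value $\pi(v)$ together with $(h_{\pi(v)},y_{\pi(v)})$ — i.e.\ the prover \emph{routes} the $i$-th hash/target pair to the node owning row $i$ of $H=\pi(G)$, and also names $\pi(v)$ as the claimed source index. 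After exchanging $\pi$-values with its neighbors, node $v$ knows the set $\pi(N_G(v))$, which is exactly row $\pi(v)$ of $H$; it therefore \emph{locally} evaluates $h_{\pi(v)}$ on this row and checks equality with $y_{\pi(v)}$ (and similarly for the $O(1)$ repeated copies). If all nodes accept, then $h(H)=y$ for this graph $H$, \emph{provided} that $\pi$ is a genuine permutation and that the parameters were genuinely routed according to $\pi$. These two conditions form an $\NP$-style claim about only $O(n)$ words of data — the $n$ values $\pi(v)$, the $n$ round-one coin strings (which sit in the public transcript), and the $n$ routed parameter tuples — and a RAM verifier checks them in $O(n)$ word operations: verify that $(\pi(v))_v$ is a permutation of $[n]$, and verify that the tuple delivered to each node $v$ equals the round-one coins of node $\pi(v)$. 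Crucially this RAM verifier never touches the (possibly dense) adjacency matrix: every graph-dependent step — learning $\pi(N_G(v))$ and evaluating the hash on it — is done by the nodes themselves from their local views, for free in the distributed model. Hence by \Cref{thm:ram-compiler} this linear-time check compiles into a $\dMAM[O(n\log n/n)]=\dMAM[O(\log n)]$ protocol, and prepending the initial hash/target message yields an overall $\dAMAM[O(\log n)]$ protocol.

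Completeness and soundness reduce to those of the set-size protocol. If $G$ is asymmetric then $|S_G|=n!$, so with good probability over Arthur's choice some $H\in S_G$ (with its witness $\pi$) hits $y$, and the honest prover routes truthfully so all local and compiled checks pass. If $G$ is symmetric then $|S_G|\le n!/2$; any prover surviving the RAM-compiled check is forced to have committed to a genuine permutation $\pi$ with truthfully routed parameters, so passing all local checks genuinely certifies $h(H)=y$ for an actual $H=\pi(G)\in S_G$, which the amplified set-size bound rules out. The main obstacle — and the reason the routing step is unavoidable — is precisely this soundness point: the hash must be evaluated on the \emph{distinct graph} $H$ under the canonical index-to-row alignment (row $i$ hashed by $h_i$), not under a $\pi$-permuted alignment. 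If instead each node simply applied its own $h_v$ to its own neighborhood, then the number of ``satisfying'' objects would be a count of \emph{permutations} ($\approx n!$), which is blind to $\mathrm{Aut}(G)$ and collapses the gap between the asymmetric and symmetric cases. Forcing the canonical alignment is exactly what the prover's routing accomplishes, and certifying that the routing is honest — together with checking that $\pi$ is a permutation — is precisely the $O(n)$-word, linear-time problem fed into \Cref{thm:ram-compiler}. The remaining care points are routine: designing a short-description, node-locally-evaluable, almost-pairwise-independent product hash; choosing $q$ and the constant number of repetitions so the range size lands in the window needed by the set-size protocol; and the standard handling of symmetric input graphs in the $\GNI$ version (replacing $S_G$ by the appropriate union of orbits, or of permutation pairs).
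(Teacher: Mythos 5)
Your overall architecture matches the paper's (Goldwasser--Sipser set lower-bound protocol, a row-wise node-local hash, permutation verification, and the RAM compiler for the residual $O(n)$-word checks), but there is a genuine gap in the hashing step: you use the local product hash $h=(h_1,\dots,h_n)$, with total range $|\F_q|^n$ squeezed into the counting window $\approx n!$, \emph{directly} as the Goldwasser--Sipser hash. That hash is nowhere near pairwise independent on $S$. For two isomorphic graphs $H\neq H'$ that differ on only $k$ rows, the $n-k$ agreeing coordinates hit their targets simultaneously for free, so $\Pr[h(H)=y \wedge h(H')=y]\le (1+\epsilon)^k/q^{n+k}$, which exceeds the pairwise-independent value $1/q^{2n}$ by a factor of about $q^{n-k}$. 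Plugging the paper's count of at most $n^k k!$ pairs at row-distance $k$ into the inclusion--exclusion lower bound, the correction term $\frac12\sum_k n^k k!\,(1+\epsilon)^k/q^{n+k}$ with $q\approx(n!)^{1/n}\approx n/e$ is on the order of $\sum_k k!\,(e(1+\epsilon))^k/n!\cdot(n!/q^n)$ relative to the main term, i.e.\ exponentially large; the lower bound on $\Pr[\exists H\in S: h(H)=y]$ becomes vacuous and completeness in the asymmetric case is not established. (Even injectivity of $h$ on $S$ fails at this range: the expected number of colliding pairs at distance $k$ is about $k!(e(1+\epsilon))^k\ge e$.) Choosing $q$ large enough to fix this pushes the range far above $n!$ and destroys the counting gap, so the single hash cannot do both jobs.

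The paper resolves exactly this tension with a two-stage hash, which is the ingredient your proposal is missing. Stage one is a local product hash with a \emph{large} per-row range of $3\log n$ bits (total range $n^{3n}\gg n!$), used only to compress graphs to $O(n\log n)$ bits; a distance-stratified union bound, exploiting that $S$ is a single isomorphism class so there are few isomorphic pairs at small row-distance, shows $h$ is injective on $S$ with probability $1-O(1/n)$ (Claim~\ref{claim:size-of-s}). Stage two applies a genuinely pairwise-independent, \emph{global} hash $g_K$ of Ishai et al.\ (\Cref{cor:linear_ikos}) -- computable in linear time, with its $O(n\log n)$-bit seed split among the nodes -- to the compressed string, and it is $g_K$, with range $2^\ell\approx 2n!$, that carries the set-size counting argument. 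Consequently the RAM program fed to \Cref{thm:ram-compiler} must do more than check the permutation and the routing: it must assemble the seed $K$ and the local hash outputs $y_1,\dots,y_n$ and evaluate $g_K(y)$, which is the global mixing step your protocol omits. Your observation about forcing the canonical index-to-row alignment (the routing of $(h_i,y_i)$ to the node owning row $i$ of $H$) is correct and needed in the paper's protocol as well, but it does not substitute for the second-stage hash.
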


One of the tools used for the compiler is a protocol for the permutation problem $\Permutation$. Here, each node has a value $\pi_i$ and we need to verify that these values form a permutation over $\{1,\ldots,n\}$. We give an $\dAM$ protocol for this problem using proofs of size $O(\log n)$. This was posed as an open problem by the authors of \cite{KolOS18}.\footnote{The  problem was posed in the Interactive Complexity workshop at the Simon's Institute~\cite{Oshman18}.}
\begin{theorem}
	$\Permutation \in \dAM[O(\log n)]$.
\end{theorem}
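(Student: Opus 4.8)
The plan is to reduce the permutation test to a single polynomial‑identity test and verify that identity by a fingerprinting (Schwartz--Zippel) argument, using a \emph{certified} spanning tree of $G$ as the device that lets a distributed verifier aggregate a global product out of purely local checks. Concretely, fix a prime $p$ with $n^{3}\le p<2n^{3}$ (known to all nodes, since they are computationally unbounded; $\log p=O(\log n)$) and work over $\F_p$. Each node first checks locally that $\pi_i$ lies in the valid range $\{1,\dots,n\}$. Then $\{\pi_1,\dots,\pi_n\}$ is a permutation of $\{1,\dots,n\}$ iff the monic polynomials $\prod_{i=1}^{n}(x-\pi_i)$ and $\prod_{j=1}^{n}(x-j)$ are equal in $\F_p[x]$ (unique factorization into linear factors, and $1,\dots,n$ are distinct mod $p$), so it suffices to test this identity at a uniformly random $r\in\F_p$.

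\emph{The protocol.} Round $1$ (public coins): every node $u$ sends the prover $O(\log n)$ uniformly random bits, read as $R_u\in\F_p$. Round $2$: the honest prover picks a spanning tree $T$ of $G$, lets $u^{\star}$ be its root, sets $r:=R_{u^{\star}}$, and sends to each node $i$ (i) the standard $O(\log n)$-bit spanning‑tree proof‑labeling data $(\parent_i,d_i,\mathsf{rid}_i)$ of Korman--Kutten--Peleg~\cite{KormanKP10} together with its subtree size $s_i$; (ii) the value $r$; and (iii) a value $v_i\in\F_p$ meant to equal $\prod_{i'\in\mathrm{subtree}(i)}(r-\pi_{i'})$. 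After exchanging the received proofs with its neighbors, node $i$ checks: (a) the spanning‑tree PLS predicate on $(\parent_i,d_i,\mathsf{rid}_i,s_i)$, which certifies that the labels encode a genuine spanning tree of $G$ and that $s_{u^{\star}}=n$; (b) every neighbor reports the same $r$, and if $i=u^{\star}$ then $r=R_i$; (c) letting $C_i=\{c\in N(i):\parent_c=i\}$ be the children it reads off its neighbors' proofs, that $v_i=(r-\pi_i)\cdot\prod_{c\in C_i}v_c$; (d) if $i=u^{\star}$, that $v_i=\prod_{j=1}^{n}(r-j)$ (using $n=s_{u^{\star}}$; if $n$ is known a priori, step (a)'s subtree sizes are unnecessary). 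It accepts iff all checks pass. This is a two‑message public‑coin protocol whose message from the prover to each node has size $O(\log n)$, so it is a $\dAM[O(\log n)]$ protocol.

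\emph{Completeness and soundness.} On a \textsc{yes} instance the prover plays honestly and (a)--(d) hold with probability $1$. For soundness, fix a \textsc{no} instance. By (a), together with connectedness of $G$ and unique identifiers, the labels must encode one spanning tree with a single root $u^{\star}$; by (b), $r$ is a single value equal to $R_{u^{\star}}$; by (c), induction from the leaves forces $v_i=\prod_{i'\in\mathrm{subtree}(i)}(r-\pi_{i'})$ for every $i$, so (d) reads $\prod_{i=1}^{n}(r-\pi_i)=\prod_{j=1}^{n}(r-j)$. Since the instance is \textsc{no}, $P(x):=\prod_i(x-\pi_i)-\prod_j(x-j)$ is a nonzero polynomial of degree $\le n-1$ and thus has at most $n-1$ roots in $\F_p$. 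The prover chooses $u^{\star}$ after seeing all the $R_u$, so the verifier can be made to accept only if $P(R_{u^{\star}})=0$; a union bound over the $n$ vertices bounds the acceptance probability by $n(n-1)/p<1/n$. (A larger polynomial $p$, or independent parallel repetitions, drives the error below any $1/\poly(n)$ while keeping the proof size $O(\log n)$.)

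\emph{Main obstacle.} The step that genuinely needs care is \emph{why one may not simply aggregate the product along an arbitrary prover‑supplied subgraph}: otherwise a cheating prover could decompose a \textsc{no} instance into several connected pieces each of which locally looks like a permutation of a smaller $\{1,\dots,m\}$ (already $\pi\equiv(1,1)$ splits into two singletons), and every local check — including an independent ``root $=\prod_{j\le m}(r-j)$'' check inside each piece — would pass. Forcing a single \emph{genuine} spanning tree, with exactly one root whose subtree accounts for all $n$ vertices, is precisely what defeats this, and is the reason I invoke the spanning‑tree proof‑labeling scheme as a black box and carry the subtree‑size field (or prior knowledge of $n$). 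The only remaining mild subtlety is making the shared randomness unforgeable: tying $r$ to the root's own coins and absorbing the prover's freedom to pick the root into a union bound over the $n$ vertices is what forces $p=\poly(n)$, but this costs only $O(\log n)$ bits, matching the claimed bound.
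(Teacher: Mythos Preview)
Your proof is correct and follows essentially the same approach as the paper: both reduce the permutation test to a polynomial identity $\prod_i(x-\pi_i)\equiv\prod_j(x-j)$ over a field of size $\poly(n)$, evaluate at a single random point, and aggregate the product along a prover-supplied, PLS-certified spanning tree. The only technical differences are cosmetic: the paper obtains the target multiset $\{1,\dots,n\}$ by the ``shift'' $y_i:=a_i+1\bmod n$ and then invokes its general $\SetEquality$ protocol on the two distributed multisets, whereas you have the root compute $\prod_{j\le n}(r-j)$ directly; and for shared randomness the paper uses the $\alpha$-minimum trick (each node also draws $\alpha_u\in[n^2]$ and the unique minimizer's seed is used, verified by a tree count), while you tie $r$ to the root's own coins and absorb the prover's choice of root into a union bound over the $n$ vertices---both yield $O(1/n)$ soundness with $O(\log n)$-bit messages.
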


\paragraph{Compilers for small space and low depth verifiers.}
If we allow even more rounds of communication, then we can capture a richer class of languages. Specifically, we show how to leverage the
RAM-compiler to transform the protocols of Goldwasser, Kalai and Rothblum~\cite{GoldwasserKR15} and Reingold, Rothblum and Rothblum~\cite{ReingoldRR16}
into distributed protocols. The result is that any low space (and poly-time) computation can be
compiled to constant-rounds distributed protocols with $O(\log n)$ proof size
and any ``uniform $\NC$'' (circuits of polylog depth, polynomial size and unbounded
fan-in) computations can be compiled into a distributed protocol
with $\polylog(n)$ rounds and proof size. The main work performed by the
verifier in both of these protocols is interpreting the input as a function and
evaluating  its low degree extension at a random point. We show how to
implement this using a distributed verifier. See more details in~\Cref{sec:RRR}.
This is true also for the case when the computation verified can be performed by a low depth (uniform) circuit, but in this case we need a number of rounds proportional to the depth of the circuit. 
\begin{theorem}\label{thm:intro-small-space-depth}
	Let $L$ be a language.
\begin{enumerate}
\item There exists a constant $\delta$ such that if $L$ can be decided in time
$\poly(n)$ and space $S=n^{\delta}$ then $L \in
\dIP[O(1),O(\log n)]$.
\item If $L$ is in uniform $\NC$ then $L \in 	
\dIP[\polylog(n),\polylog(n)]$.
\end{enumerate}
\end{theorem}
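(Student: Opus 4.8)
The plan is to instantiate \Cref{thm:ram-compiler} with the two centralized protocols in question: the protocol of Goldwasser, Kalai and Rothblum~\cite{GoldwasserKR15} for log-space-uniform bounded-depth circuits, and the protocol of Reingold, Rothblum and Rothblum~\cite{ReingoldRR16} for bounded-space polynomial-time computation. Both are public-coin $r$-round protocols with a RAM verifier, where $r=\polylog(n)$ for the $\NC$ case (the circuit depth times $\polylog(n)$) and $r=O(1)$ for the space-$n^{\delta}$ case. The obstacle is that the verifier's running time $\timec$ in both protocols is dominated by a single $\widetilde{O}(N)$-time pass over the input, where $N=\Theta(n^2)$ is the length of the adjacency-matrix encoding of the $n$-vertex graph: plugging $\timec=\widetilde{O}(n^2)$ into \Cref{thm:ram-compiler} only yields $\widetilde{O}(n)$-bit proofs, which is useless. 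So, exactly as in our $\GNI$ protocol, we cannot use the compiler as a black box; we must first peel off the expensive pass and have the network carry it out itself.

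That expensive pass is precisely the evaluation of the low-degree extension $\widehat{x}$ of the input at the point $z$ produced by the interaction: after its sum-checks, the verifier must compare a value supplied by the prover against $\widehat{x}(z)$. Writing $\widehat{x}$ in Lagrange form over a field $\F$ of size $\poly(n)$ and grouping the sum by source vertex, $\widehat{x}(z)=\sum_{i\in[n]} y_i(z)$, where $y_i(z)=\sum_{j\in[n]} x_{i,j}\cdot\prod_{k=1}^{m}\delta_{w_k}(z_k)$ with $w$ the image of the index $(i,j)$ under the (fixed, explicit) embedding into $H^m$; crucially $y_i(z)$ depends only on row $i$ of the adjacency matrix, i.e.\ on vertex $i$'s local view, and is computable by node $i$ in $\widetilde{O}(n)$ field operations once $z$ is known. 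The network therefore performs the pass on its own: each node computes $y_i(z)$, the prover supplies a spanning tree of $G$ together with, at each node, the partial sum of $y(z)$ over that node's subtree, every node locally checks its partial sum against its own $y_i(z)$ and its children's partial sums (the latter received as part of the neighbors' proofs), the spanning tree is certified by a standard $O(\log n)$-bit proof-labeling scheme, and the root compares the total against the value the prover committed to. This costs $O(\log n)$ proof size and $O(1)$ rounds --- these are exactly the ``$+2$'' rounds of \Cref{thm:ram-compiler}, during which $z$ is announced by the prover and checked for consistency across neighbors.

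Once this pass is stripped out, the residual RAM verifier --- re-running the sum-check checks and evaluating the wiring-predicate low-degree extensions that the protocol needs --- runs in only $\polylog(n)$ steps in the $\NC$ case, by log-space uniformity, and in $n^{O(\delta)}$ steps in the space-$n^{\delta}$ case; its remaining input (the $\polylog(n)$- resp.\ $n^{O(\delta)}$-bit transcript, the point $z$, and the certified value $\widehat{x}(z)$) is likewise small and is spread over the network. Feeding this residual protocol through \Cref{thm:ram-compiler} now contributes only a $\log n/n$ factor times its (tiny) running time to the per-node proof, which is $o(\log n)$ --- indeed $o(1)$ --- provided $\delta$ is a small enough absolute constant (and the round/error parameters of~\cite{ReingoldRR16} are fixed to constants), and is $\polylog(n)$ in the $\NC$ case. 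Summing the contributions: the space-$n^{\delta}$ protocol has $O(1)$ rounds and $O(\log n)$ proof size, dominated by the spanning-tree scheme and the subtree sums, and the uniform-$\NC$ protocol inherits GKR's $\polylog(n)$ round count and has $\polylog(n)$ proof size. Soundness is inherited: if $G\notin L$ the simulated GKR/RRR execution rejects with constant probability, and if instead the summation certificate or any consistency check is violated then some node rejects outright; completeness holds since honest prover behaviour passes every local check.

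The step I expect to be the real obstacle is the second one: recognizing that the only super-$\widetilde{O}(n)$ part of the GKR/RRR verifier is this single low-degree-extension evaluation, and that it decomposes cleanly into one locally computable summand per vertex, so that the network itself can absorb the cost that makes a black-box use of \Cref{thm:ram-compiler} fail. The rest --- the spanning-tree summation certificate, announcing and checking $z$, and the final invocation of the compiler on a tiny residual --- is routine distributed-proof bookkeeping, although one must choose the field size, the embedding dimension $m$, and the constant $\delta$ so that all the lower-order terms simultaneously fit inside the $O(\log n)$ (resp.\ $\polylog(n)$) budget, and must rely on log-space uniformity of the circuit in the $\NC$ case to keep the residual's wiring-predicate evaluations $\polylog(n)$-time.
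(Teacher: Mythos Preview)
Your proposal is correct and follows essentially the same approach as the paper: both strip off the single low-degree-extension evaluation from the RRR/GKR verifier, have each node compute its local summand of $\widehat{x}(z)$ and aggregate these ``up the tree'' with prover-supplied partial sums, broadcast and consistency-check the evaluation point $z$, and then apply the RAM compiler to the now-sublinear residual verifier. The only minor deviation is in parameter conventions (the paper takes $|\F|=\polylog(n)$ and $m=O(\log n/\log\log n)$ so that $z$ itself is $O(\log n)$ bits, whereas you wrote $|\F|=\poly(n)$), but this does not affect the argument.
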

This can be used in turn for the $\GNI$ problem and obtain a  $\dIP[O(1),O(\log n)]$ 
even for the case where one of the graphs does not correspond to the communication 
graph. 
Another example is verifying the a tree is a minimal spanning tree ($\MST$). One can 
verify 
that a tree is a MST by a centralized algorithm with small space. Thus, we get that $\MST 
\in \dIP[O(1), O(\log n)]$. Without interaction, there is a matching upper bound and lower 
bound of $O(\log n \log W)$, where $W$ is an upper bound on the weights 
\cite{KormanK07}.


\subsection{Below the $O(\log n)$-Regime}
At this point, there is still a gap between our above mentioned results and the $\Omega(\log \log n)$ lower bound of \cite{KolOS18}. One reason for this gap is that constructing protocols with $o(\log n)$ proofs seems quite hard. The prover is somewhat limited as the basic operations such as pointing a neighboring node, counting, specifying a specific node ID, all require $\log n$ bits.

Perhaps surprisingly, we show that using our RAM compiler with additional rounds of interaction can lead to an exponentially improvement in the proof size for a large family of graph problems.
Obtaining these improved protocols calls for developing a totally new infrastructure that can replace the basic $O(\log n)$-primitives (i.e., with a logarithmic proof size) with an equivalent $O(\log \log n)$-primitives (e.g., verifying a spanning tree).
While these do not yield a full RAM compiler, they are indeed quite general and can be easily adapted to classical graph problems. Two notable examples are $\DSym$ and problems that can be verified by computing an aggregate function of the vertices.

The $\DSym$ problem is similar to the $\Sym$ problem except that the automorphism is fixed and given to all nodes. This problem was studied by \cite{KolOS18} where they showed that $\DSym \in \dAM[O(\log n)]$ but any distributed $\NP$ proof for requires a proof of size $\Omega(n^2)$. We show that using a five message protocol we can reduce the proof size to $O(\log \log n)$:
\begin{theorem}
	$\DSym \in \dMAMAM[O(\log \log n)]$.
\end{theorem}
Depending on the problem, our techniques can be used to get even smaller
proofs. In particular, if the aggregate function is over constant size elements
then the proof be of constant size.
For example, we show that the $\Clique$ problem can be solved using a proof of
size $O(1)$ in only three rounds. In contrast, without interaction, there is an
$\Omega(\log n)$ lower bound \cite{KormanKP10}.
\begin{corollary}
	$\Clique \in \dMAM[O(1)]$.
\end{corollary}
For instance, we show an $\dMAM[O(1)]$ protocol for proving that the graph is {\em not two-colorable}.
This is in contrast to non-interactive setting \cite{GoosPW18} that requires $\Omega(\log n)$ bits for this problem.
Another interesting example is the
``leader election'' problem where it is required to verify the exactly one
nodes in the network is marked as a leader. As this problem can also be casted
as an aggregate function of constant sized element, we get:
\begin{corollary}
	$\LeaderElection \in \dMAM[O(1)]$.
\end{corollary}

\paragraph{Argument Labeling Schemes.}
Can the interaction be eliminated? The simple answer for that is {\em no}! We
have observed by now several examples where few rounds of interactions break
the non-interactive lower bounds (e.g.\ for Symmetry and Asymmetry).
However, this does not seem to be the end of the story. In the centralized setting there are various techniques for eliminating interaction from protocols, especially public-coins ones. A ``standard'' such technique is the Fiat-Shamir transformation or heuristic that converts a public-coins interaction to one without an interaction. Here, we assume that parties have access to a \emph{random oracle}, and that the prover is computationally limited: it can only perform a bounded number of queries to the random oracle.
In such a case, we end up with an ``argument'' system rather than with a ``proof"
system. In an argument system proofs of false statements {\em exist} but it is
computationally hard to find them. Therefore, such protocols do not contradict
the lower bounds for proof labeling schemes. We call such a protocol an
``argument labeling scheme''. These systems can have significant savings in
distributed verification systems. More details are in
\Cref{sec:conclusions}.


\subsection{Related Work}\label{sec:related}
The concept of distributed-NP is quite broad and contains (at least) three frameworks.
This area was first introduced by Korman-Kutten-Peleg \cite{KormanKP10} that formalized the model of
proof-labeling schemes (PLS). In their setting, communications are restricted to happen exactly once between neighbors.  A more relaxed variant is \emph{locally checkable proofs} (LCP) \cite{GoosS16} introduced by G{\"o}{\"o}s and Suomela which allows several rounds of verification in which nodes can also exchange their inputs.
The third notion which is also the weakest is \emph{non-deterministic local decision} (NLD) introduced by Fraigniaud-Korman-Peleg~\cite{fraigniaud2011local}. In NLD the prover cannot use the identities of the nodes in its proofs, that is the proofs given to the nodes are oblivious to their identity assignment.

We note that when allowing prover--verifier interaction some of the differences between these models disappear. At least in the $O(\log n)$-proof regime, using more rounds of interactions allows the nodes to send their IDs to the prover, and the prover can use these IDs in its proofs. Our protocols with $o(\log n)$-bit proofs are not based on the actual identity assignment, but rather only on their port ordering.

Prior to the distributed interactive model of \cite{KolOS18}, Feuilloley, Fraigniaud and Hirvonen \cite{FeuilloleyFH16} considered the first interactive proof system which consists of three players: a centralized prover, a decentralized disprover and a distributed verifier (the network). This model gives considerably more power to the verifier as it can get some help from the strong disprover. \cite{FeuilloleyFH16} showed that such interaction between a prover and a disprover can considerably reduce the proof size. The most dramatic effect is for the nontrivial automorphism problem which requires $\Omega(n^2)$ bits with no interaction, but can be verified with $O(\log n)$ bits with two prover--disprover rounds.

Very recently, Feuilloley et al.~\cite{FeuilloleyFHPP18} considered another generalization of \cite{KolOS18} where instead of allowing several rounds of interaction between the prover and the verifier, they allow several \emph{verification} rounds. That is, the prover gives each node a proof at the first round, it then disappears and the nodes continue to communicate for $t$ many rounds.
They showed that for several ``simple" graph families such as trees, grids, etc.\ every proof labeling with $k$ bits,
can be made an $k/t$-bit proof when allowing $t$ verification rounds.
Note that our distributed protocols can simulate such a scheme, but since our protocols use a small number of interactive rounds, the reduction in the proof size that we get from the framework of \cite{FeuilloleyFHPP18} in negligible.

\section{Our Techniques}
\subsection{The RAM Program Compiler}
Many non-interactive distributed proofs (known as ``proof 
labels'') \cite{KormanKP10} are based on the basic primitive of verifying that 
a given marked subgraph is a spanning tree \cite{AfekKY97}.
In particular, in most of these applications, the subgraph itself is given as part of the proof to the nodes
(i.e., a vertex $u$ gets its parent $\parent(u)$). I.e., the prover computes a spanning tree for the vertices to facilitate the verification of the problem in hand (e.g., cliques, leader election etc.). Indeed, throughout we will use the prover to help the network compute various computations to facilitate the verification of the problem in hand.
We start by briefly explaining the proof labeling of spanning trees, which becomes useful in our compiler as well.

\paragraph{A Spanning Tree.}
The proof contains of several fields, which will be explained one by one along with their roles.
The first field in the proof given to $u$ is its parent in the tree $\parent(u)$. This can be indicated by sending
$u$ the port number that points to its parent. Let $T$ be 
the graph defined by the $\parent(u)$ pointers. We must verify that $T$ is 
indeed a tree (\ie contains no cycles) and that it spans $G$. To verify that 
there are no cycles, the second field in the proof of $u$ contains the distance to the 
root in the tree, $d(u)$. The root should be given distance 0, and each node $u$ 
verifies with its parent in the tree that $d(u) = d(\parent(u))+1$. If there is 
a cycle in $T$, then no value for $d(u)$ can satisfy this requirement for all 
nodes on the cycle.
Finally, to be able to verify that $T$ \emph{spans} $G$, the third field in the proof is the 
ID of the root. Nodes verify with their neighbors that have the same root ID. If $T$ does 
not span $G$ then there must be two trees with two different roots. Since the graph is 
connected there must be an edge from one tree to the other which will spot the 
inconsistency of the root IDs.

The tree is used as a basic component in many protocols as it allows 
summing values held by each node (or computing other aggregative functions). For example, suppose we want to use compute $\sum_u x_u$ where $x_u$ is some number that is known to node $u$. Let $T_u$ be the 
subtree of $T$ rooted at $u$. We can use the prover to help us in this computation. Since the prover is untrusted, we will also need to verify this computation. This is done as follows.
The prover sends $u$ the value $X_u = \sum_{v \in T_u}x_v$, the sum of the $x_u$ values in the subtree $T_u$. Then, $u$ verifies that $X_u$ is consistent with the values given to its children in the tree. That is, $X_u=x_u+\sum_{i \in [k]} x_{v_i}$ where $v_1,\ldots,v_k$ are its children (the leaves have no children). If all values are consistent then we know that the root $r$ of the tree has the desired value $X_r = \sum_{u \in G}x_u$. We call such a procedure summing up the tree'' as it will be useful later on in different contexts.

\paragraph{A Reduction to Set Equality.}
Our main observation is that obtaining a general RAM compiler translates into
a specific problem of Set Equality. Let $\pi$ be a standard interactive protocol (with a centralized verifier). We construct a distributed protocol $\pi'$ as follows. First, we let the prover compute a spanning tree $T$ of the graph as described above, and assign IDs in the range of $1$ to $n$ for the nodes in the graph. The correctness of the spanning tree computation is verifies is in the labeling schemes described above.
We later describe how to also verify the correctness of the consecutive IDs in $[1,n]$. We will also solve this by a reduction to set equality.

The high level idea is to use the fact that the protocol is public-coin, and thus allows the prover to run the centralized verifier on its own. We now need the prover to convince the network that it simulated the verification algorithm correctly. For that purpose the verification of the RAM computation made by the prover is distributed among the $n$ nodes. Since the centralized RAM program consists of $O(n)$ steps, each vertex can be in charge of \emph{locally} verifying constant number of steps in this program. 
To verify that the computation is correct \emph{globally}, we will reduce the problem to Set Equality. 

We now explain it in more details.
Let $u_1,\ldots,u_n$ be the nodes ordered by their assigned IDs. 
Given this ordering, we can split the communication between the prover and verifier in $\pi$ to equally parts where node $i$ is responsible to communicate and store the responses of the $\ith{i}$ chunk of the messages. Since $\pi$ is a public coin protocol, the messages to the prover from each node are simply random coins. Finally, we need to simulate the verifier of $\pi$ by a distributed protocol. We assume that the verifier is implemented by a RAM program.

Consider a RAM program $C$. An execution of $C$ can be described as a sequence of read and 
write instructions to a
memory with $\poly(n)$ cells, where each operation consists of a short local state $s$ and a triplets $(v,a,t)$ where 
$v$ is the value (value read from memory or written 
to memory), $a$ is the address in the memory, $t$ is the timestamp of memory 
cell (\ie when it was last changed). We set the size of a cell to be $\log n$ bits such that each tuple the state $s$ and the triplets $(v,a,t)$ can be represented by $O(\log n)$ bits.

Let $R$ be the set of all read triplet operations and let $W$ be the set of 
all write triplet operations. Note that in general it might be that $R \ne 
W$, \eg if a cell is written once but read multiple times. Following the steps 
of \cite{BlumEGKN94} in the context of memory checking, we can transform any 
program to a {\em canonical form} where $R=W$ while paying only a constant factor in the running time. We assume from hereon that $C$ is given in this 
canonical form. Thus, we have that $R$ and $W$ describe an honest execution of 
the program if and only if $R=W$.

With this in mind, we can design the final step. Let 
$\timec$ be the running time of verifier $V$.
The prover runs the verifier and writes the list of triplets and local states 
of the program $C$. Each node is responsible for $\timec/n$ steps of the 
program, and the prover divides the triples and states of each instruction to 
the nodes. Each node that is responsible of step $i$ verifies that the states 
and triples are consistent with the instructions of step $i$ in the program 
$C$. What the node cannot verify locally is that the values read from the 
memory are consistent with the program. That is, we are left to verify is that 
the two sets $R$ and $W$ defined by the triplets are equal (as multisets). That 
is, we need a protocol for the problem $\SetEquality$.

\paragraph{A Protocol for Set Equality.}
As we have shown a protocol for Set Equality is the basis for the compiler. 
Actually, this protocol is used for other problems as well, and we describe it in its 
generality. Assume each node has an input $a_i$ and $b_i$ where $a_i,b_i$ are 
$O(\log n)$ long bit strings, and let $\A=\{a_i : i 
\in [n]\}$ and similarly $\B=\{b_i: i \in [n] \}$. We want to verify that 
$\A=\B$ as multisets.
We will describe here an $\dMAM[O(\log n)]$ protocol for this problem, which 
captures the mains ideas. In \Cref{sec:set-equality} we show how this protocol 
can be compressed to two message $(\dAM[O(\log n)])$, and we also show how to support each node holding two {\em lists} of up to $n$ elements (instead of single elements $a_i$ and $b_i$).

In the first message, we let the prover compute a spanning tree $T$ of the 
graph along with a proof as described above.
Then, to check that $\A=\B$ we define a polynomial $P_{\A}(x)$ and $P_{\B}(x)$ 
over a 
field $\F$ of size $n^3$ as follows:
$$
P_{\A}(x) = \prod_{u \in V}(a_u - x) \text{, and } P_{\B}(x) = \prod_{u \in 
	V}(b_u - x).
$$
As we show in the analysis, it holds that $\A = \B$ if and only if 
$P_{\A} \equiv P_{\B}$. Moreover, since the polynomials have low degree 
compared to the field size ($n$ vs.\ $n^3$) in order to check 
if they are equal it suffices to compare them on a random field element (if the 
two polynomials are different they can agree on at most $n$ element in the 
field).

We let the root of the tree $T$ sample a random field element $s \in \F$, and 
send it to the prover. The prover sends $s$ to all nodes of the graph. Nodes 
compare $s$ with their neighbors to verify that everyone has the same element 
$s$. Then, we are left with evaluating the two polynomials 
$P_{\A'}(s)$ and $P_{\B'}(s)$. To compute these polynomials we use a spanning 
tree $T$, and compute them ``up the tree''. We let the prover give each node 
$u$ the evaluation of the 
polynomials on the subtree $T_u$, that is, $A_u=\prod_{u \in T_u}(s-a_u)$ and 
$B_u=\prod_{u \in T_u}(s-b_u)$. Nodes check 
consistency with their children in the $T$ to assure that all partial 
evaluations are correct. That is, they check that 
$$
A_u=a_u \cdot \prod_{i \in [k]}A_{v_i} \text{, and } B_u= b_u \cdot \prod_{i 
	\in [k]}B_{v_i}
$$
where $v_1,\ldots,v_k$ are the children of $u$ in the tree.
Finally, the root $r$ of the tree holds the two complete evaluations of 
polynomials $A_r=P_{\A}(s)$ and $B_r=P_{\B}(s)$ and verifies that $A_r=B_r$. If 
all verifications pass then we know that with high probability $\A=\B$.

\paragraph{Assigning IDs.}
In the description above we assumed that unique IDs in the range 
of $1$ to $n$ are honestly generated. We show that this assumption is without loss of generality. Let the ID of node $i$ be $a_i$. Each node verifies that $a \le a_i \le n$ ($n$ is known to all nodes). We want to verify that the $a_i$ are all distinct. That is, we want to verify that the $a_i$'s are a permutation of $[n]$. This is also called the $\Permutation$ problem.

This 
is solved by reducing it to the Set Equality problem. Each node $i$ sets 
$y_i=a_i + 1 \mod n$. Let $\A = \{a_i: i \in [n] \}$ and $Y = \{y_i : i \in 
[n]\}$. Our key observation here is that the $a_i$'s are distinct if and only 
if $\A=Y$. 
Thus, we run the set equality protocol on $\A$ and $Y$. Note that this can be 
performed in parallel to the compiler's protocol, and thus does not add to the 
round complexity. 

\subsection{Asymmetry and Graph Non-Isomorphism}
We first give a short description of a {\em standard} (centralized) interactive protocol for $\ASym$ which is a simple adaptation of the public-coin protocol for graph non-isomorphism \cite{GoldwasserS89,GoldreichMW91} (see also \cite{BabaiM88}). From hereon we denote this protocol by the ``$\GNI$ protocol''. Then we show how to transform it to a distribute protocol.

Let $S$ be the set of all graphs that are isomorphic to $G$. That is, $S=\{G':G \cong G'\}$. The main observation of the $\GNI$ protocol which follows here directly is that if $G$ has no (non-trivial) automorphism then $|S| = n!$ while if $G$ does have an automorphism then $|S| \le n!/2$. Thus, the focus of the protocol is on estimating the size of $S$.

The verifier samples a hash function $g\colon \bit^{n^2} \to \bit^{\ell}$, where
$\ell$ is roughly $n\log n$ and sends it to the prover. The prover seeks for a graph $G' \in S$ such that $g(G')=0^{\ell}$. The main observation is that the probability that such a graph $G'$ exists is higher when $S$ is larger which allows the verifier to distinguish between the cases of $S$. That is, the verify will accept if $g(G')=0^{\ell}$.

Let us begin with an immediate solution for sparse graphs. Suppose that the graph $G$ is sparse (has $O(n)$ edges) and thus can be represented by $O(n\log n)$ bits. One can observe that in this case the total communication of the ``GNI protocol'' is linear in the input size, that is, $O(n\log n)$ and thus can be distributed among the nodes such that each node gets $O(\log n)$ bits. Finally, the verifier is required to compute the hash function $g(G')$. We need a very fast (linear-time) pairwise hash function for this. Luckily, Ishai et al.~\cite{IshaiKOS08} (see \Cref{cor:linear_ikos}) constructed such a hash function that can be computed in $O(n)$ operations over words of size $O(\log n)$. Thus, applying our RAM compiler with this hash function gives a $\dAMAM[O(\log n)]$ protocol for the problem: the first message is sending $g$ and messages 2-3 are sending $G'$ and verifying that $g(G')=0^{\ell}$.

The protocol above of course works only for sparse graphs as they had a small representation. While graphs in general have representation of size roughly $n^2$, since the size of the set $S$ is at most $n!$, any graph in $S$ can be indexed to have size $n^2$. Thus, we want to hash the set $S$ using a hash function $h$ to a set $S'$ such that $|S|=|S'|$ and each elements in $S'$ is represented using $O(n\log n)$ bits. While this approach is simple, it has a major caveat: computing $h(G)$ is exactly the task we wished to avoid! However, there is an important difference: the function $g$ had exactly $\ell \approx n\log n$ bits of output where $h$ has $cn\log n$ for a large constant $c$. This slackness in the constant lets us compose a special hash function $h$ that can be computed locally. Them, we will apply $g$ to the smaller elements of $S'$ and compute it using the RAM compiler as before. Together, we will verify that $g(h(G'))=0^{\ell}$.

In more details, our hash function $h$ will be composed of $n$ hash functions.
Each node $u$ chooses a seed for an $\epsilon$-almost-pairwise
hash function
$$
h_u\colon \bit^{n^2} \to \bit^{3\log n}
$$
where $\epsilon = 1/n$. The seed length of $h_u$ is $O(\log n)$ bits. Let
$h_1,\ldots,h_n$ be the $n$ chosen hash function ordered by the index of the nodes. Let
$G=x_1,\ldots,x_n$ where $x_i \in \bit^n$ is the indicator vector for the
neighbors of node $i$ in $G$. Then, we define a hash function $h \colon \bit^{n^2} \to \bit^{3n\log n}$ as
$$
h(G)=h_1(x_1) \concat \ldots \concat h_n(x_n).
$$
Using $h$ we can define the set $S' = \{h(G): G \in S\}$. It is easy to see that $|S'| \leq |S|$.
The fact that $h$ is locally computable means that it has a very bad collision probability. If two inputs are differ only on a single bit then the probability that they collide depend only on a single $h_i$ which rather small compared to the total range of $h$. To show that there will be no collisions under under $S$ we exploit the specific properties of $S$. The key point is that $S$ contains only graphs that are all isomorphic to each other and hence there are not many isomorphic graphs that differ only on a small part. This lets us bound the collision probability of two graphs as a function of their hamming distance $k$ and union bound over the number of isomorphic graphs of distance $k$. We show that with high probability we have that $|S|=|S'|$ and thus we can apply the protocol for $S'$ instead of $S$.

\paragraph{Graph Non-Isomorphism.}
The end result is a protocol for $\ASym$ in $\dAMAM[O(\log n)]$. In \Cref{sec:gni} we show how to adapt this protocol for $\GNI$, where we assume that in the $\GNI$ problem formulation nodes can communicate on both graphs $G_0$ and $G_1$.
We note that while this improves upon the $\dAMAM[O(n\log n)]$ of \cite{KolOS18}, our protocol works only when the $\GNI$ problem is defined such that nodes can communicate on {\em both} graphs $G_0$ and $G_1$. The protocol of \cite{KolOS18} works also on the definition $\GNI$ where only $G_0$ is the communication graph and $G_1$ is given as input nodes. That is, each node $u$ is given a list of its neighbors in $G_1$ but cannot communicate with them directly. This is not an issue when the proof complexity is $O(n\log n)$ as the prover can send each node $u$ its neighbors in the graph $G'$. However, when restricting the communication size to $O(\log n)$ this raises many difficulties, which seem hard to overcome.

\subsection{A Compiler for Small Space and Low Depth}
We describe how to get a compiler for small space computation (Item 1 in 
\Cref{thm:intro-small-space-depth}).
The main tool behind the construction is the interactive protocol of Reingold, 
Rothblum and Rothblum~\cite{ReingoldRR16}.
They show that for every statement that can be evaluated in polynomial time and 
bounded-polynomial space there exists a constant-round (public-coin) 
interactive protocol with an (almost) linear verifier. This is an excellent 
starting point for us, as our RAM compiler is most efficient for linear verifiers.

There is a subtle point here however. A linear-time in \cite{ReingoldRR16} is with respect to the size of the graph, \ie $m = 
O(n^2)$, whereas a linear time for our RAM compiler is with respect to the number of vertices $n$. 
To handle this, we first reduce the running time of the centralized verifier to 
$O(n)$ before applying our 
RAM compiler. Indeed, as already observed in \cite{ReingoldRR16}, the running time of the 
verifier can be made {\em sublinear} (\eg $m^{\delta}$ for some small constant 
$\delta$) if the verifier is given an {\em oracle access to a low degree extension 
of 
the input} (the input is the graph and possibly additional individual inputs 
held by each node). Our protocol will run the RAM-compiler on this sublinear version of the verifier while 
providing it this query access.
Luckily, evaluating a point of a low degree extension of the 
input is a task that is well suited for a distributed system, as it is a linear 
function of the input and hence can be computed ``up the tree'' using the 
prover. Thus, the \cite{ReingoldRR16} protocol can be compiled to a distributed 
one with constant number of rounds and $O(\log n)$ proof size.

A protocol with the same properties is given by Goldwasser, Kalai and 
Rothblumin~\cite{GoldwasserKR15} in the 
context of low depth circuits (as opposed to small space).
Let the class ``uniform $\NC$'' be the class of all language computable by a 
family of $O(\log(n))$-space uniform circuits of size $\poly(n)$ and depth 
$\polylog(n)$.
They showed the any 
languages computable by ``uniform $\NC$'' there is a public-coin interactive 
protocol where verifier runs in time 
$\polylog(n)$ given oracle access to a low degree extension of the input and 
the 
communication complexity is $\polylog(n)$.
Using the same approach as we did 
for the \cite{ReingoldRR16} protocol, we can also compile this protocol to a 
distributed one with polylogarithmic number of rounds and proof size.

\subsection{Below the $\log n$ Barrier}
To construct protocols with $o(\log n)$ proofs, we need to re-develop the basic ``distributed NP'' primitives only with a proof size in the required regime. Similar to the generality of the basic tree 
construction in distributed NP proofs, these tools are useful for many problems.

\paragraph{Constructing a Spanning Tree.}
We begin by showing how to compute a spanning tree in the graph using only 
$O(1)$ bits. We let the prover compute a BFS tree in the graph. 
However, the prover cannot even give a node $u$ 
its parent $\parent(u)$ in the graph, let alone prove its validity.

We take a different approach, using the specific properties of a BFS tree. 
If a node $u$ is in level $i$ in the BFS tree, then its neighbors are all in level $i-1$, $i$ or 
$i+1$. Thus, we let the prover give each node its distance from the root modulo 3. This 
gives each node sufficient information to divide its neighbors into three groups: 
neighbors in the same level $i$ as $u$, neighbors that are one level closer to the root, 
$i-1$, and neighbors that are one level below, $i+1$. The node $u$ defines its parent 
$\parent(u)$ to be its neighbors in level $i-1$ with the minimal port number (all neighbors 
of each node $u$ are ordered by an arbitrary port numbering that is known to the prover).
This way, each node $u$ has a defined parent $\parent(u)$ in the graph, except 
if it had no neighbors of level $i-1$ which means that it is the root.

Let $T$ be the graph defined by $(u,\parent(u))$. As in the standard proof 
labeling scheme for verifying a spanning tree, we first verify that $T$ is a 
tree (has no cycles), and that verify that it is also spanning.

First, we verify that there are no cycles in $T$. Towards this end, we let each 
node 
$u$ sample a uniform bit $b_u$ and send it to the prover. Let $P_u$ be the path in the 
tree that the prover computed from $u$ to the root.
The prover sends each 
node $u$ the number $s(u)=\sum_{v \in P_u}b_u \mod 2$, that is the sum of the $b_v$'s 
on the path from $u$ to the root modulo 2. Nodes exchange this value with their parent 
in the tree. Each node $u$ 
verifies that $s(u)=s(\parent(u)) + b_u \mod 2$. In the analysis, we show that if $T$ 
contains a cycle, then with 
probability $1/2$ the nodes will reject (this happens when the sum of the $b_u$ values 
on a cycle is odd).

By now, we know that $T$ contains no cycles. However, it might still be the case that $T$ 
is a forest. In such a case it will contain more than one root node. To eliminate this, we 
have 
the prover broadcast the value $b_r$ where $r$ is the root of the tree he computed. If 
there are more than one root in $T$, then with probability $1/2$
their $b_r$ values will be different and thus nodes will detect this inconsistency.
This insures that $T$ has no cycles and a single root thus it must be a spanning tree of 
$G$. Of course, the soundness can be 
amplified by standard (parallel) repetition.

A corollary of the constructing such a tree is that the root of the tree is a unique chosen 
node in the network. Thus, this protocol also solves the ``Leader Election'' problem 
($\LeaderElection$) with a constant size proof in 3-rounds.

\paragraph{Super Protocols.}
Our next step is to show how to run what we call ``super protocols''. A super 
protocol simulates a protocol with proof size $O(\log n)$ using only $O(\log 
\log n)$ bits, by making computation on a super graph $H$ that contains $n/\log 
n$ super-nodes. The super graph is defined by decomposing the graph into blocks 
of size roughly $\log n$ such that each block will simulate a single node in 
the protocol. The benefit of this approach is that a block has a proof capacity 
of $O(\log n)$ by having each node get only a single bit. In other words, a 
super-node (that corresponds to the block of $\log n$ nodes) can be given a 
proof of size $O(\log n)$ in a distributed manner: giving a single bit proof 
for each of node in that block.

This brings along several challenges as no node knows the $O(\log n)$ proof, 
but rather it is distributed among several nodes. To be able to work with these 
``fragmented proofs" we will need to come up with protocol that work on the 
super graph. Suppose a node $u$ in the super graph $H$ represents a block $B$. 
To simulate a local verification of $u$ in the super graph $H$, we need all 
nodes $B$ to cooperate to perform this verification. Towards this end, we will 
use the RAM compiler on a program $C$ that performs the verification, but we 
run the compiler only on the block $B$, as if it was the entire graph. Since 
the size of the block is $n'=O(\log n)$ the cost of this compiler is only 
$O(\log 
n') = O(\log \log n)$! Furthermore, the node $u$ performs consistency checks 
with its neighbors in $H$. Here again we use the RAM compiler, but on a graph 
that contains $u$ and a child $v$ of $u$. The graph of these two blocks is 
connected, and of size $O(\log n)$. This is carefully performed in parallel for 
all children $v$.

This was a very high level overview, and we proceed with formally explaining 
how to defines the blocks and the corresponding super graph.
The spanning tree $T$ (whose construction was described before) is partitioned into {\em edge-disjoint} subtrees $T_1,\ldots, T_k$, which we call \emph{blocks}. The precise protocol for this decomposition is given in \Cref{sec:block-decomposition}. The main point here is that at the end of the protocol, each node knows its neighbors within the block.

Using the block decomposition, we show how to reduce the proof size in the 
protocol for $\SetEquality$ to $O(\log \log n)$, albeit at the expense of more 
rounds.
The prover orders the nodes inside each block and sends each node its index $i$ 
inside the block. Since the blocks are of size $O(\log n)$ the index $i$ 
requires only $O(\log \log n)$ bits. To verify that the indexes are indeed a 
permutation, we apply the permutation protocol described above. However, we run 
it on each block separately as if the block was the whole graph. Since each 
block is 
of size $n'=O(\log n)$ the final cost of this protocol within each block is 
only $O(\log n') = O(\log \log n)$!

We wish to run this protocol in {\em parallel} for all blocks in the graph. 
This works if the blocks {\em vertex disjoint}, however, the block we have are 
only {\em edge disjoint}. Nodes that participate in several blocks will get a 
proof for each block which blows up the proof size. Instead, we show how such 
node get divide their proofs among the blocks. At the end, we are able to run 
the protocols in parallel without paying an additional cost for these nodes.

The next step of the $\SetEquality$ protocol, is to have the root choose a 
field element $s$ described by $O(\log n)$ bits. Let $r$ be the root of the 
tree $T$ 
and let $T_r$ be the block containing $r$. We let the {\em block} $T_r$ to 
distributively choose $s$, where each node picks a single bit. The 
prover reconstructs $s$ and can continue with the protocol. The main 
challenge now is that no individual node knows $s$, only the prover.

After $s$ has been chosen and sent to the prover, the next step of the protocol 
is to compute the products $\prod_{u \in G}(s-a_u)$ and $\prod_{u \in 
G}(s-b_u)$ and verify that they are equal. First, we compute each product 
within a block. Let $T_u$ be a block rooted at $u$, then we want the block to 
compute $a'_u=\prod_{u \in T_u}(s-a_u)$. Thus, we let the prover compute $a'_u$ 
and send it to the block $T_u$. To verify this, we can the RAM compiler on the 
block for a program $C$ that reconstructs $s$, computes $\prod_{u \in 
T_u}(s-a_u)$ and finally compares it to $a'_u$ (and similarly for the $b_u$'s). 
Again, this is performed for 
all blocks in parallel and has a cost of $O(\log \log n)$ bits.

Each node $u$ in the super graph $H$ now has the value $a'_u$, and we verified 
that $a'_u$ is indeed the product of all elements inside this block. Now, the 
prover computes the values $A_u = \prod_{u \in H_u}$ where $H_u$ is the subtree 
of $H$ rooted at $u$, and sends $A_u$ to the block $T_u$ (and similar for 
$B_u$). Now, node $u$ needs to verify this value by computing the product of 
$a'_v$ for all its children $v_i$.

We note that the block of $u$ and its children blocks are connected. 
Assume for simplicity, that $u$ has only a constant number of children blocks. 
Let $G'$ be the graph that contains all these blocks. Then, we have that $G'$ 
consists of $O(\log n)$ vertices. Thus, we run the RAM compiler on this graph, 
for a program $C$ that on input all the values of the nodes, collects the bits 
of $s$ and reconstructs it, then reconstructs $A_u$ and $A_{v_i}$ for all the 
children blocks, and verifiers $A_u = a'_u \cdot \prod_{i} A_{v_i}$. The size 
of the 
graph is $O(\log n)$ and thus again running this will cost $O(\log \log n)$ 
bits. 

This worked since we assumed that there are only a few child blocks, however 
the number of such blocks in general might be large. In such a case, we compute 
$\prod_{i} A_{v_i}$ by computing them in pairs $A_{v_i} \cdot A_{v_{i+1}}$, 
such that for each pair the graph is always of size $O(\log n)$. This takes 
some delicate care of details.
While this process is sequence and will take many iterations (as the number of 
children) we show how to parallel this using the prover. 

There many technical challenges to make this plan go through and we refer the 
reader to \Cref{sec:loglog} for the full details. The result is a five message 
protocol: first the prover sends the tree (and it is verified in messages 2-3), 
then the network chooses $s$ and then we run the RAM compiler in messages 3-5.

Once we have a protocol for $\SetEquality$ using $O(\log \log n)$ bits of 
 proof, we immediately get a protocol for $\DSym$. In this problem, the nodes 
 know a permutation $\pi$ and need to verify that it is an automorphism. We 
 simply run the $\SetEquality$ protocol on the two sets of edges for $G$ and 
 $G'=\pi(G)$.

\paragraph{A Protocol for $\Clique \in dMAM[O(1)]$.} 
We describe a protocol for the clique problem, where the goal is to prove that the graph 
contains a clique of size $K$ where $K$ is known to all. The prover marks a clique of size 
$K$ selects one of the nodes in the clique to be a leader. We run the leader protocol 
described above to verify that indeed a single leader is selected. Finally, each marked 
nodes verify that indeed $K-1$ of its neighbors are marked and that one of them is the 
leader. This assures that there are exactly $K$ marked nodes and that they form a clique.
\section{Definitions}
\subsection{Interactive Proofs with a Distributed Verifier}
Our definition follows the definition in \cite{KolOS18}.
An interactive proof is a protocol between a verifier and a powerful prover,
where the goal of the prover is to convince the verifier that $x \in L$ for
some common instance $x$ and language $L$. Usually, the verifier and prover are
turning machines with different computational power. Here, we consider the case
where the verifier is {\em distributed}.

Our model consists of a network of $n$ computation units that communicate in
synchronous
rounds. The communication pattern between the units is defined by an $n$-vertex
graph $G$. In additional, each node $u$ may hold an additional input $I(v) \in
\bit^{n}$. Let $I$ be the set of all inputs. Then, the graph $G$ and the inputs
$I$ define an instance $x$, and the goal of the network is to determine if $x
\in L$ for some language $\cL \subset \mathcal{G} \times I$, where
$\mathcal{G}$ is a family of $n$ vertex graphs and $I$ is a set of inputs where $I(u)$ is the input of node $u$.

The network is equipped with an one extra entity, $\prover$, which we call the
\emph{prover}. This prover is connected to all the vertices in $G$, and knows
the entire input instance $\langle G,I \rangle$. Roughly speaking, the goal of
this powerful prover is to convince the network that $x \in L$, where if $x
\notin L$ we ask that the network will not be convinced no matter what the
prover does. The prover knows the entire graph: it knows the ordering or the neighbors for each node $u$ in the graph.

\paragraph{The Complexity Measures.}
Our
primary goal in this paper is to minimize the bandwidth, that is, the size of messages sent in each round (within the network and also between the nodes and the prover). The total amount of messages sent is called the proof size (or proof complexity) of the protocol.

\paragraph{The class $\dIP[r,\ell]$:}
Let $\cL$ be a language of graphs and inputs and let $r,\ell$ be two
parameters. For a verifier $\verifier$ and a prover $\prover$ we let $\langle \verifier,\prover \rangle$ denote the protocol between them and we let $\VerifierAlg^{out}(u)$ be final output of the vertex $u$ in the protocol. 
We say that $\cL \in \dIP[r,\ell]$ if there exists an $r$-round
protocol (\ie $r$ messages) with verifier $\VerifierAlg$ with the following properties:
\begin{enumerate}
	\item \textbf{Completeness:} For every $\langle G, I \rangle \in \cL$, there exist a prover $\ProverAlg$ such thats for $\langle \verifier,\prover \rangle$ it holds that
	$\Pr[\forall u, \VerifierAlg^{out}(u)=1] > 2/3$.
	\item \textbf{Soundness:} For every $\langle G, I \rangle \notin \cL$ and every prover $\ProverAlg^*$, we have for $\langle \verifier,\prover \rangle$ it holds that 	$\Pr[\forall u, \VerifierAlg^{out}(u)=1] < 1/3$.
\end{enumerate}
The probabilities are taken over the random coins of the nodes of the
distributed verifier $\verifier$ in the protocol between verifier and the prover $\langle \verifier,\prover \rangle$.

When $r=1$ and prover goes first, this is the standard notion of distributed proofs (or proof labeling schemes). When $r=2$ the verifier sends the first message this is the analog of the $\AM$ calls and denoted as $\dAM[\ell]$. Similarly, we define $\dMAM[\ell]$ for three rounds and $\dAMAM[\ell]$ for four message and so on.

\subsection{Limited Independence}
A family of functions $\cH$ mapping domain $\bit^n$ to range $\bit^m$ is
$\epsilon$-almost pairwise independent if for every $x_1 \ne x_2 \in \bit^n$,
$y_1,y_2 \in \bit^m$, we have
\begin{align*}
\Pr_{h \in \cH}[h(x_1) = y_1 \wedge h(x_2) = y_2] \le \frac{1+\epsilon}{2^{2m}}.
\end{align*}

\begin{theorem}\label{thm:almost-pairwise}
There exists a family $\cH$ of $\epsilon$-almost pairwise independent
	functions from $\bit^n$ to $\bit^m$ such that choosing
	a random function from $\cH$ requires $O(m + \log n + \log(1/\epsilon))$
	bits.
\end{theorem}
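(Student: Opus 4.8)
The plan is to build $\cH$ by \emph{derandomizing} a pairwise-independent affine hash family using the small-bias generator of Naor and Naor, exploiting the two facts that a full-rank linear image of a $\delta$-biased distribution is again $\delta$-biased, and that a $\delta$-biased distribution is pointwise close to uniform.

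Concretely, work over $\F_2$ and consider the affine maps $h_{A,b}(x)=Ax+b$, where $A$ is an $m\times n$ matrix and $b\in\bit^m$; write the description as $\sigma=(A,b)\in\bit^{k}$ with $k=mn+m$. The first ingredient is the routine linear-algebra fact underlying pairwise independence of this family: for any fixed $x_1\ne x_2\in\bit^n$, the linear map
\[
L\colon \sigma=(A,b)\;\longmapsto\;\bigl(Ax_1+b,\;Ax_2+b\bigr)\in\bit^{2m}
\]
has rank exactly $2m$ (its kernel consists of the pairs with $A(x_1-x_2)=0$ and $b=-Ax_1$, of dimension $m(n-1)$, so the image has dimension $mn+m-m(n-1)=2m$).

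The second ingredient is to replace the $k$ truly random bits defining $\sigma$ by a $\delta$-biased distribution $D$ on $\bit^{k}$ --- one for which $\bigl|\mathbf{E}_{\sigma\sim D}[(-1)^{\sum_{i\in T}\sigma_i}]\bigr|\le\delta$ for every nonempty $T\subseteq[k]$ --- where $\delta:=\epsilon\cdot 2^{-2m}$. By the Naor--Naor construction, such a $D$ is produced from a uniform seed of length $O(\log k+\log(1/\delta))=O(m+\log n+\log(1/\epsilon))$, and this seed length is precisely the seed length of the family $\cH$ we output. To verify $\epsilon$-almost pairwise independence, fix $x_1\ne x_2$ and $y_1,y_2$. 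Since $L$ has rank $2m$, the map $S\mapsto L^{\mathsf{T}}S$ is injective, so every nontrivial Fourier coefficient of $L(D)$ equals a nontrivial Fourier coefficient of $D$ and is therefore at most $\delta$ in absolute value; that is, $L(D)$ is $\delta$-biased on $\bit^{2m}$. Expanding a single point mass of $L(D)$ in the Fourier basis and bounding each of the $2^{2m}-1$ nontrivial coefficients by $\delta$ shows that $L(D)$ puts mass at most $2^{-2m}+\delta$ on every point, hence
\[
\Pr_{h\in\cH}\bigl[h(x_1)=y_1 \wedge h(x_2)=y_2\bigr]=\Pr_{\sigma\sim D}\bigl[L(\sigma)=(y_1,y_2)\bigr]\le 2^{-2m}+\delta=\frac{1+\epsilon}{2^{2m}}.
\]

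The only external input is the explicit small-bias generator, which we cite and which carries the real construction effort; the Fourier estimate and the rank computation are routine. The point that needs care is the parameter coupling: the point-mass Fourier bound loses a factor $2^{2m}$, so we are forced to take $\delta$ as small as $\epsilon\,2^{-2m}$, and the construction survives precisely because $\log(1/\delta)=2m+\log(1/\epsilon)$ still lies within the target budget $O(m+\log n+\log(1/\epsilon))$ (using $\log m=O(m)$ and $\log(mn)=O(m+\log n)$). If one wants a shorter description string, taking $A$ to be an $m\times n$ Toeplitz matrix shrinks $k$ to $n+2m-1$ while preserving everything above, since $t\mapsto A(t)(x_1-x_2)$ is still surjective for $x_1\ne x_2$; the asymptotics are unchanged.
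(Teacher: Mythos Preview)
Your construction is correct and is essentially the standard one: derandomize the affine pairwise-independent family $h_{A,b}(x)=Ax+b$ by sampling $(A,b)$ from a Naor--Naor $\delta$-biased space with $\delta=\epsilon\cdot 2^{-2m}$, then use that full-rank $\F_2$-linear images preserve $\delta$-bias and that a $\delta$-biased distribution on $\bit^{2m}$ is within $\delta$ of uniform in $\ell_\infty$. The rank computation, the Fourier bound, and the parameter accounting all check out.

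As for comparison with the paper: there is nothing to compare. The paper states this theorem as a known tool in its preliminaries (Section on ``Limited Independence'') and does not supply a proof; it simply cites the existence of such families as folklore/standard and moves on. So your write-up is not an alternative to the paper's argument but rather a self-contained justification of a fact the paper takes for granted. If you wanted to tighten it, the Toeplitz remark you make at the end is the cleaner choice (it keeps $k=O(n+m)$ rather than $k=\Theta(mn)$, so $\log k$ is genuinely $O(\log n + \log m)$ without the wasteful $\log m\le m$ step), but asymptotically it makes no difference and the full-matrix version is already fine.
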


\paragraph{Circuits.}
Some of our results used the notions of circuit. In this work, we consider circuits of constant fan-in and fan-out. The term ``linear size'' circuits refers to circuits whose size is linear in the sum of their input size and output size.

\paragraph{Linear Hash Functions.}
Ishai et al.~\cite{IshaiKOS08} showed how to construct a pairwise independent hash
function that can be computed by a linear-sized circuit. Specifically:

\begin{corollary}\cite[Follows from Theorem 3.3]{IshaiKOS08}
\label{cor:linear_ikos}
Let $\F$ be a field of size $n$.
There exists a family $\cH$ of pairwise independent hash functions from
$\F^{n}$ to $\F^{n}$ such that choosing a random function from
$\cH$ requires $O(n)$ field elements and evaluating any $h \in \cH$ can be
performed by an $O(n)$-sized circuit with gates that operate over $\F$.
\end{corollary}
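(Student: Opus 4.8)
The plan is to read the statement as an instantiation of the Ishai--Kushilevitz--Ostrovsky--Sahai constant-overhead machinery, so the work is mostly in matching parameters. Recall the standard template for a pairwise independent family $\cH$ from $\F^n$ to $\F^n$: take $h_{M,v}(x) = Mx + v$, where $v \in \F^n$ is uniform and $M$ is drawn from some distribution over $\F^{n\times n}$ with the property that, for every nonzero $z \in \F^n$, the vector $Mz$ is uniformly distributed on $\F^n$. Given this property, for any $x_1 \ne x_2$ the difference $h(x_1) - h(x_2) = M(x_1 - x_2)$ is uniform, and the independent shift $v$ makes the pair $(h(x_1), h(x_2))$ jointly uniform on $\F^{2n}$; hence such an $\cH$ is (exactly) pairwise independent. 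The two obstacles are (a) describing $M$ with only $O(n)$ field elements and (b) evaluating $x \mapsto Mx$ with an $\F$-arithmetic circuit of only $O(n)$ gates --- a generic $M$ costs $\Theta(n^2)$ and even a Toeplitz $M$, via FFT, costs $\Theta(n\log n)$.

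Next I would invoke Theorem 3.3 of \cite{IshaiKOS08}, which provides exactly such a distribution over linear maps. The construction is built from linear-time encodable linear codes over $\F$ of constant rate and constant relative minimum distance (which exist over any sufficiently large field, in particular one of size $n$): these yield a family of $\F$-linear maps that can be sampled with $O(n)$ field elements and evaluated by an $\F$-arithmetic circuit with $O(n)$ gates, while still mapping every nonzero vector to a (near-)uniform image. Feeding this $M$ into the affine template above, together with a uniform shift $v$ (another $n$ field elements and $n$ addition gates), produces the family $\cH$ claimed in the corollary: the seed is $O(n)$ field elements and evaluation of any $h \in \cH$ takes $O(n)$ gates over $\F$.

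What remains is bookkeeping. First, check that the field size $|\F| = n$ is compatible with the code family used inside \cite{IshaiKOS08} --- it is, since those codes can be taken over an arbitrary field. Second, verify that the constant hidden in the $O(n)$ overhead of \cite{IshaiKOS08} is absorbed into our $O(n)$ bounds on seed length and circuit size, and that prepending/appending the $O(n)$-gate addition of the random shift $v$ preserves linear circuit size. Third, confirm the pairwise independence claim: if Theorem 3.3 delivers exact uniformity of $Mz$ for $z \ne 0$ we get exact pairwise independence by the argument above; if it delivers only $\epsilon$-almost uniformity, one records instead the $\epsilon$-almost pairwise guarantee (cf.\ \Cref{thm:almost-pairwise}), which is what the downstream applications in fact use. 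The one genuinely nontrivial point --- getting a \emph{strictly} linear-size circuit rather than the $n\log n$ FFT-based one, without sacrificing (near-)exact pairwise independence --- is precisely the content of \cite[Theorem 3.3]{IshaiKOS08}, so for the corollary it suffices to import that theorem and perform the parameter check.
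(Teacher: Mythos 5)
The paper gives no proof of this corollary at all—it simply cites Theorem 3.3 of \cite{IshaiKOS08}—and your argument is a correct fleshing-out of exactly that citation: the affine template $h_{M,v}(x)=Mx+v$, the reduction of pairwise independence to uniformity of $Mz$ for $z\ne 0$, and the import of the linear-size, $O(n)$-seed linear map from IKOS. Your hedge about exact versus $\epsilon$-almost pairwise independence is a fair and honest caveat; the downstream use in the $\GNI$/$\ASym$ protocol tolerates either, so nothing breaks.
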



\begin{definition}[Aggregate Function]\label{def:aggregate-function}
	We say that a function $f : \bit^{n \times m} \to \bit^n$ is an aggregate
	function if there exists a function $g : \bit^{2n} \to \bit^n$ such that
	$f(x_1,\ldots,x_m) = y_m$ where $y_i = g(x_i,y_{i-1})$ for $2 \le i \le m$ and
	$y_1 = x_1$, and $g$ is computable in $O(n)$ by a RAM program with operations over words of length $w=O(\log n)$.
\end{definition}

\subsection{Graph Definitions}
We usually denote the graph by $G=(V,E)$ where $V$ is the set of vertices and $E$ is the set of edges. We let $N(u)=N_G(u)$ denote the neighborhood of $u$ in $G$. We also call the vertices in $V$ nodes.

\begin{definition}[Isomorphism]
 We say that two graphs $G=(V,E)$ and $G'=(V',E')$ are isomorphic if there exists a bijection $\pi$ between $V$ and $V'$ such that for any two nodes $u,v$ it holds that $(u,v) \in E$ if and only if $(\pi(u),\pi(v)) \in E'$. We denote this by $G \cong G'$.
\end{definition}

\begin{definition}[Automorphism]
A graph $G=(V,E)$ has an automorphism if there exists a non-trivial permutation $\pi$ such that for every $u,v \in V$ it holds that $(u,v) \in E$ if and only if $(\pi(u), \pi(v)) \in E$ (we call such a graph Symmetric).
\end{definition}

%
\section{A RAM Program Compiler}
In this section we show our RAM program compiler. We take standard interactive protocols over $n$-vertex graphs and transform them into distributed protocols. The cost of the distributed protocol depends on the running time of the verifier in the protocol when implemented as a RAM program.

A construction of a spanning tree in the graph $G$ is a basic tool in distributed proofs in general \cite{KormanKP10} and in our context as well. Here, we let the prover compute a spanning tree $T$ rooted at an arbitrary  node $r$ and send each node its parent $\parent(u)$ in the tree (the parent of the root is $\bot$). Note that once each node knows its parent in the tree, it also knows its children in the tree.

Then, to prove that this is 
indeed a tree, the prover additionally gives each node its distance from the 
root, $d(u)$ in the tree $T$. Each node verifies consistency with its parent, 
\ie $d(u) = d(\parent(u))+1$ (the root $r$ verifies that $d(r)=0$). One can observe that verifying the distances from the root 
assures that there are no cycles in $T$ as otherwise there must be a node $u$ and its parent $\parent(u)$ with inconsistent distances. Finally, to prove that the tree is spanning the prover gives each node the ID of the root where nodes verify consistency of the ID with their neighbors.

Using this tree, we develop an {\em interactive} protocol for a new problem we 
call $\SetEquality$ (defined next). This protocol will be used several times in 
our compiler (and later on) and in particular is used in a protocol for the 
$\Distinctness$ problem and $\Permutation$ program (also defined next). Next, 
we describe the $\SetEquality$ problem.

\subsection{$\SetEquality \in \dAM[O(\log n)]$}\label{sec:set-equality}
The $\SetEquality$ equality checks the equality of two (multi)sets and is formally defined as follows.
\begin{definition}[$\SetEquality$]
In this problem each node $u$ holds two lists of $\ell$ elements $a_{u,1},\ldots,a_{u,\ell}$ and $b=b_{u,1},\ldots,b_{u,\ell}$ where for all $i \in [\ell]$ it holds that $a_{u,i},b_{u,i} \in \bit^{c\log n}$ for some constant $c \in \N$ and $\ell \le n$. Let $\A=\{a_{u,1} : u \in V, i \in [\ell]\}$ and $\B=\{b_u : u \in V, i \in [\ell]\}$ be two {\em multisets}. The goal of the $\SetEquality$ problem is to prove that $\A=\B$ as multisets. 
\end{definition}
Let $G=(V,E)$ be an $n$-vertex graph and let $\F$ be a field of size $n^{c + 3}$.
We interpret the elements of $\A$ and $\B$ as elements in the field $\F$.
To check that $\A=\B$ (as multisets) we define a polynomial $P_{\A}(x)$ and $P_{\B}(x)$ 
according to the elements of $\A$ and $\B$ respectively. That is, we define
$$
P_{\A}(x) = \prod_{u \in V, i \in [\ell]}(a_{u,i} - x) \text{, and } P_{\B}(x) = \prod_{u \in V, i \in [\ell]}(b_{u,i} - x).
$$
Note that $P_{\A}$ and $P_{\B}$ are polynomial of degree at most $n\ell$. We 
show that $\A = \B$ if and only if $P_{\A} \equiv P_{\B}$. Since the 
polynomials have low degree (compared to the field size), in order to check if 
they are equal it suffices to compare them on a random field element. For clarity of presentation, let us assume that nodes have shared randomness. At 
the end, we show to sample this shared randomness using the prover.

Thus, let $s \in \F$ be a random field element defined from the shared 
randomness. Then, we are left with evaluating the two polynomials 
$P_{\A}(s)$ and $P_{\B}(s)$. To compute these polynomials we use a spanning 
tree construction, as described above. We let the prover compute a spanning tree $T$ and prove its validity.
We use the tree $T$ to compute the two polynomials on $s$. 
Towards this end, the prover sends each node $u$ the evaluation of the 
polynomials on the subtree $T_u$: $A_u=\prod_{u \in T_u, i \in [\ell]}(s-a_{u,i})$ and $B_u=\prod_{u \in T_u, i \in [\ell]}(s-b_{u,i})$. Nodes check 
consistency with their children in the $T$ to assure that all partial 
evaluations are correct. That is, they check that 
$$
A_u=\prod_{i \in [\ell]}a_{u,i} \cdot \prod_{j \in [k]}A_{v_i} \text{, and } B_u= \prod_{i \in [\ell]}b_{u,i} \cdot \prod_{j 
\in [k]}B_{v_i}
$$
where $v_1,\ldots,v_k$ are the children of $u$ in the tree.
Finally, the root $r$ of the tree holds the two complete evaluations of 
polynomials $A_r=P_{\A}(s)$ and $B_r=P_{\B}(s)$ and verifies that $A_r=B_r$.

This completes the description of the protocol assuming the element $s$ is shared randomness. To construct such shared randomness we do the 
following. We let each node $u$ 
sample $s_u$ at random, along with a random number $\alpha_u \in [n^2]$. The 
node $u^*$ with the minimal $\alpha_{u^*}$ ``wins'' in terms that we set $s = 
s_{u^*}$ and $\alpha=\alpha_{u^*}$ (observe that we cannot have the prover 
decide who wins, as otherwise $s$ could be biased). The prover will announce to everyone the winning $\alpha$ and $s$. Nodes verify the consistency of $s$ and $\alpha$ with their neighbors, and thus assure that a nodes in the graph has the exact same elements $s$ and $\alpha$. We are left to verify that indeed $\alpha$ is the minimal one value.

To verify this, each node $u$ will check that indeed $\alpha \le 
\alpha_u$ where we expect exactly a single node $u^*$ to have equality. We 
count the number of such nodes by having the prover send each node $u$ the 
number of nodes that have equality in its subtree. That is, the prover sends node $u$ the value $Q_u = \sum_{v \in T_u}I_{\alpha = \alpha_u}$ where $I_{\alpha = \alpha_u}=1$ if $\alpha = \alpha_u$ and 0 otherwise. The nodes check consistency of the $Q_u$ with their children in the tree and finally the root $r$ verifies that $Q_r=1$. This assumes a common random string $s$.
The formal protocol is given in \Cref{fig:protocol-setquality}. 

\protocol
{A protocol for set equality.}
{A distributed AM protocol for checking the equality of two 
	multi-sets.}
{fig:protocol-setquality}
{
\vspace{3mm} \textbf{Input: each node $u$ has elements $a_u$ and $b_u$}
\begin{enumerate}
	\item V $\Rightarrow$ P (message 1): Each node $u$ samples $s_u$, and 
	$\alpha_u \in [n^2]$ and sends it to the prover.
	\item P $\Rightarrow$ V (message 2): The prover sends a spanning tree $T$ along with a proof.
	\item P $\Rightarrow$ V (message 2): 
	Let $u^* =\arg \min_u \alpha_u$ and let $s=s_{u^*}$. The prover sends each node $u$ the following
	\begin{enumerate}
		\item The values $s$ and $\alpha_{u^*}$.
		\item The values $A_u=\prod_{v \in T_u, i \in [\ell]}(a_{v,i} - s)$ and 
		$B_u=\prod_{v \in T_u, \i \in [\ell]}(b_{v,i} - s)$ (computed over 
		$\F$).
		\item The value $Q_u = \sum_{v \in T_u}I_{\alpha_u = \alpha_{u^*}}$.
	\end{enumerate}
	\item Local: nodes exchange their proofs and verify that proofs for $T$. Let $v_1,\ldots,v_k$ be the children of $u$ in the tree $T$. Then,  
	$u$ verifies that
	\begin{enumerate}
		\item $A_u=\prod_{i \in [\ell]}a_{u,i} \cdot \prod_{j \in [k]}A_{v_j}$ and that $B_u= \prod_{i \in [\ell]}a_{u,i} 
		\cdot \prod_{j \in [k]}B_{v_j}$.
		\item $Q_u = I_{\alpha=\alpha_u} + \sum_{i \in [k]}Q_{v_i}$, and 
		the root $r$ verifies that $Q_r = 1$.
	\end{enumerate}
\end{enumerate}
}

We show correctness and soundness of the protocol.
\paragraph{Correctness.}
The protocol succeeds as long as the $\alpha_{u^*}$ is uniquely the minimal 
value. However, it is easy to see that $\Pr[\exists i \ne j : \alpha_i = 
\alpha_j] \le 1/n$. Thus, we continue the analysis as if all the $\alpha_i$'s 
are distinct.
Assume that $\mathcal{A} = \mathcal{B}$ as multisets. 
Then for any $s \in \F$ it holds that $\prod_{u \in V, i \in [\ell]}(a_{u,i}-s) = \prod_{u \in V, i \in [\ell]}(b_{u,i}-s)$. For any tree $T$ with root $r$ it holds that $A_r = \prod_{u \in V,i \in [\ell]}(a_{u,i}-s) = \prod_{u \in V,i \in [\ell]}(b_{u,i}-s) = B_r$ and also that $Q_r=1$. Thus, the root 
$r$ will output 1, 
and in addition all intermediate nodes will output 1 after their local 
verification.

\paragraph{Soundness.}
Assume that $\mathcal{A} \ne \mathcal{B}$ as multisets. 
Suppose that $\prod_{u \in V,i \in [\ell]}(a_{u,i}-s) \ne \prod_{u \in V,i \in [\ell]}(b_{u,i}-s)$. In order for the prover to cheat, 
it must give the root $r$ values $A^*_r,B^*_r$ such that either $A^*_r \ne A_r$ 
or $B^*_r \ne B_w$, since otherwise the node $r$ will output 0. However, since the node $r$ performs the local check with its neighbors in the tree, it holds that the prover must give wrong values to one of its children as well. This 
continues until the prover gives a wrong value to a leaf, where the leaf can 
verify locally and output 0 indicating that it revived a wrong proof.

Thus, we bound the probability that the two products collide (notice that the 
sets are fixed before the choice of $s$). Consider the polynomial 
$f(x)=\prod_{u \in V,i \in [\ell]}(a_{u,i}-x) - \prod_{u \in V,i \in [\ell]}(b_{u,i}-x)$, which is of degree at most $n$ over the field $\F$.
\begin{claim}
	$f$ is not the zero polynomial.
\end{claim}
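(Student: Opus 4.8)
The plan is to prove the contrapositive: if $f \equiv 0$ then $\A = \B$ as multisets, which contradicts the standing assumption of the soundness case. The point is that, over a field, a product of linear factors determines the multiset of its roots, so the polynomial $\prod_{u,i}(a_{u,i}-x)$ encodes $\A$ losslessly, and similarly $\prod_{u,i}(b_{u,i}-x)$ encodes $\B$.

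First I would record that the embedding of elements into $\F$ is injective: each $a_{u,i}$ and $b_{u,i}$ lies in $\bit^{c\log n}$ and is thus identified with an integer in $\{0,\ldots,n^{c}-1\}$, and since $|\F| = n^{c+3} > n^{c}$ distinct strings are sent to distinct field elements. Hence ``$\A \ne \B$ as multisets of strings'' is equivalent to ``$\A \ne \B$ as multisets of field elements'', and it suffices to derive a contradiction from the latter.

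Now suppose toward contradiction that $f$ is the zero polynomial, i.e. $\prod_{u \in V, i \in [\ell]}(a_{u,i}-x) = \prod_{u \in V, i \in [\ell]}(b_{u,i}-x)$ in $\F[x]$; pulling out the common sign $(-1)^{n\ell}$ this reads $\prod_{u,i}(x-a_{u,i}) = \prod_{u,i}(x-b_{u,i})$. Since $\F$ is a field, $\F[x]$ is a unique factorization domain in which each $x-\gamma$ is irreducible, so the multiset of linear factors on the left must coincide with the one on the right. Equivalently — and this is the only place where one uses that $\F$ is an integral domain — for every $\gamma \in \F$ the multiplicity of $\gamma$ as a root of $\prod_{u,i}(x-a_{u,i})$ equals $|\{(u,i) : a_{u,i}=\gamma\}|$, and likewise with the $b$'s; since the two polynomials are equal, these counts agree for every $\gamma$, which is exactly $\A=\B$ as multisets, a contradiction. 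Hence $f \not\equiv 0$.

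I do not anticipate a real obstacle here. The only points requiring care are (i) recording that $|\F|$ was chosen large enough that the string-to-field embedding is injective, so the multiset inequality is preserved by it, and (ii) appealing to unique factorization in $\F[x]$ (equivalently, the root-multiplicity characterization) rather than merely comparing degrees or a handful of evaluations, since the latter would not recover the full multiset structure.
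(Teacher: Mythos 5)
Your proof is correct, but it takes a genuinely different route from the paper's. The paper argues directly: assuming $\A \ne \B$, it exhibits a point $z$ at which $f$ does not vanish --- if some $z$ lies in $\A \setminus \B$ (as sets) then $\prod(a_{u,i}-z)=0$ while $\prod(b_{u,i}-z)\ne 0$; and in the remaining case where the two multisets have the same support but different multiplicities, it factors out the common part $\C = \A \cap \B$ and repeats the witness argument on the quotient polynomial $g = f / \prod_{c \in \C}(c-x)$. You instead argue the contrapositive: if $f \equiv 0$ the two products are equal in $\F[x]$, and unique factorization (equivalently, root multiplicities) forces the multisets of linear factors, hence $\A$ and $\B$, to coincide. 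Your route buys a uniform treatment of the multiplicity case, which is exactly where the paper's write-up is at its sketchiest, and you also make explicit a point the paper leaves implicit: that $|\F| = n^{c+3} > 2^{c \log n}$ so the string-to-field embedding is injective and multiset inequality over strings transfers to multiset inequality over $\F$. The paper's approach is more elementary in that it only needs evaluation at a single point rather than the UFD structure of $\F[x]$, but both are sound.
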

\begin{proof}
We know that $\A \ne \B$.
Suppose that there exists an element $z \in \A \setminus \B$. Then, we get 
that $\prod_{u \in V,i \in [\ell]}(a_{u,i}-z)=0$ and $\prod_{u \in V,i \in [\ell]}(b_{u,i}-z) \ne 0$, therefore 
$f(z) \ne 0 $ and thus $f$ is not the zero polynomial. A similar arguments 
holds if $z \in \B \setminus \A$. Since $\A$ and $\B$ are {\em multisets} 
there is a third possibility that the multisets share the same elements 
only with 
different multiplicities. Let $C$ be the multiset of their intersection $\C = 
\A \cap \B$. Define
$$
g(x)=f(x)/\prod_{c \in \C}(c-x).$$
It suffices to show that $g$ is not the zero function. Define $\A' = \A 
\setminus \C$ and $\B' = \setminus \C$. For these subsets we know that there 
must be an element that is in one set and not in the other. Assume without loss 
of generality that there must exist an element $z \in A' \setminus \B'$. 
Then, 
since $z \notin \C$ we get that $g(z) \ne 0$ and therefore $f$ is not the zero 
polynomial.
\end{proof}
The polynomial $f$ has at most $n\ell \le n^2$ roots and since the field is of size 
$n^{c+3}$ we get that
\begin{align*}
\Pr_r\left[\prod_{u \in V,i \in [\ell]}(a_{u,i}-s) = \prod_{u \in V,i \in [\ell]}(b_{u,i}-s)\right] = \Pr_r[f(s)=0] \le \frac{n\ell}{n^{c+3}} \le \frac{1}{n}.
\end{align*}

\paragraph{Communication Complexity.}
Computing the tree $T$ and its proof take $O(\log n)$ proof size, as shown in 
\cite{KormanKP10}. Elements in the field $\F$ are represented using $O(\log n)$ 
and each node is given a constant number of elements ($s,A_u,B_u$). We have 
that $\alpha \in [n^2]$ and thus also has short representation. The $Q_u$ are a constant number of bits. Altogether, each node sends and receives $O(\log n)$ bits.

\subsection{Distinctness}
In the $\Distinctness$ problem each node has a single value $a_i$ and the goal to verify that all values are distinct. That is, the output of the protocol is 1 if and only if it holds that for all $i \ne j$ such that $i,j \in [n]$ we have that $a_i \ne a_j$.

We show that this problem can be actually reduced to the $\SetEquality$ problem. Assume that the values $a_i$ are sorted such that $a_1 \le \ldots \le a_n$. The prover sends node $i$ the value $a_{i+1 \mod n}$. Denote by $y_i$ the actual value received by a node $i$. Then, node $i$ sets a bit $b_i$ to be 1 if and only if $a_i < y_i$. 

Let $A$ be all the original values $A=\{a_i\}_{i \in [n]}$ and let $Y$ be the set of all values given by the prover. Then, we run the protocol for $\SetEquality$ to verify that $A=Y$. Moreover, we run a sum protocol to verify that $\sum_{i=1}^{n}b_i = 1$. 

\protocol
{A protocol for distinctness.}
{A distributed MAM protocol for checking that each node in the graph has a 
unique identity 
$i \in [n]$.}
{fig:protocol-distinctness}
{
	\vspace{3mm} \textbf{Input: each node $i$ has a value $a_i$, and assume $a_1 \le a_2 
	\le \ldots \le a_n$.} \\
	\textbf{Output: 1 if and only if all values are distinct.}
	\begin{enumerate}
		\item P $\Rightarrow$ V (message 1): prover gives node $i$ the value 
		$a_{i+1 \mod n}$. Let $y_i$ be the received value.
		\item Local: node $i$ sets $b_i = 1$ if and only if $a_i < y_i$.
		\item P $\Rightarrow$ V (message 1): prover sends a proof that 
		$\sum_{i=1}^{n}b_i=1$.
		\item P $\Leftrightarrow$ V (messages 2-3): prover and verifier 
		interact to assure that $A=Y$, where $\A=\{a_i\}$ and $Y=\{y_i\}$.
	\end{enumerate}
}

\paragraph{Completeness.}
If all values $a_i$ are distinct then the honest prover will set $y_i=a_{i+1 \mod n}$. Thus, we will have that $b_i=0$ for all $i < n$ and $b_n=1$ and therefore $\sum_{i=1}^{n}b_i=1$. Moreover, we have that the values of $Y$ are exactly the values of $A$ shifted by 1. That is, as sets we have that $A=Y$ and thus the $\SetEquality$ protocol will pass as well.

\paragraph{Soundness.}
To show soundness we define an $n$-vertex directed graph with nodes being $A \cup Y$. Since the $\SetEquality$ protocol have passed successfully, we know that that $A=Y$ as multi-sets. It therefore holds that for any $i$ there exists a $j$ such that $y_i=a_j$. We then add the directed edge $(i,j)$ to the graph.

By the construction, we get that the in-degree and the out-degree of each node in this graph are exactly the node's multiplicity in $A$, which is at least 1. Thus, by an Euler argument, the graph can be decomposed to edge-disjoint cycles. However, since $\sum_{i=1}^{n}b_i=1$ we know that all but one edge are strictly increasing in values. Thus, the decomposition can contain only a single cycle, and thus the in-degree and out-degree are exactly 1, which means that the values of $A$ are all distinct.

\paragraph{The Permutation Problem.}
A specific instance of the $\Distinctness$ problem is when for $i \in [n]$ we 
have that $1 \le a_i \le n$. In such a case, we are actually checking if the 
sequence $a_1,\ldots,a_n$ form a permutation. We denote this problem by 
$\Permutation$.

We observe that in this problem, the first message from the prover 
is redundant: node $i$ can compute by itself the value $y_i=a_{i+1} \mod n = 
a_i+1 \mod n$. Thus, we only need to run messages 2-3 of the $\Distinctness$ 
protocol which yields an $AM$ protocol with the same proof complexity. That is, 
we have that $\Permutation \in \dAM[O(\log n)]$. The formal protocol is given 
in \Cref{fig:protocol-permutation}.

\protocol
{A protocol for $\Permutation$.}
{A distributed AM protocol for checking a permutation.}
{fig:protocol-permutation}
{
	\vspace{3mm} \textbf{Input: each node $i$ has a value $a_i$.} \\
	\textbf{Output: 1 if and only if the $a_i$'s form a permutation.}
	\begin{enumerate}
		\item Local: node $i$ sets $y_i = a_i +1 \mod n$.
		\item P $\Leftrightarrow$ V (messages 1-2): prover and verifier 
		interact to assure that $A=Y$, where $\A=\{a_i\}$ and $Y=\{y_i\}$.
	\end{enumerate}
}
\subsection{The Compiler}
\label{sec:RAM_compiler}
We present a general compiler that takes standard interactive protocols and 
transforms them into distributed interactive protocols. Let $\pi \in \IP$ be an 
interactive protocol.

We show how to constructed the distributed version $\pi'$ of $\pi$ for an 
$n$-vertex graph $G$ with input $I$. First (message 1), we let the prover give unique IDs in the range of $1$ to $n$ to the nodes. This is verified using the $\Distinctness$ protocol (in parallel to messages 2-3). This lets us order the nodes $u_1,\ldots,u_n$ according to their assigned IDs.

Once the nodes are ordered, we can split the communication between the verifier and the prover to small parts for each node. Suppose that in the protocol $\pi$ the verifier sends a message $R$ to the prover. Since the protocol is public-coin, we know that $R$ is simply a random string. Thus, in 
the distributed version each node $u$ will send the prover a small random 
string, $R_u$, where $|R_u| = |R|/n$. The prover collects all $R_u$ and 
composes the random string according to the order of the nodes $R=R_1,\ldots,R_n$. 
Then, in the protocol $\pi$ the prover responds with a message $Y=\pi((G,I),R)$. In the distributed version $\pi'$ the prover distributes the 
string $Y$ among the $n$ nodes. Each node $u$ gets $Y_u$ where $|Y_u| = |Y|/n$ and $Y=Y_1,\ldots,Y_n$. This continues for all rounds of the protocol $\pi$.
If the total communication complexity of $\pi$ is $O(n)$ then 
the communication per node in $\pi'$ so far in the protocol is $O(1)$.

Let $\verifier$ be the verifier in the protocol $\pi$. At the end of the protocol, the verifier has pairs $(R,Y)$ for each round of the protocol distributed among the nodes. Let $\vec{R}$ be the collection of all the $R$'s and let $\vec{Y}$ be the collection of all the $Y$'s. According to $\pi$ in order to decide whether to accept we need to compute $\verifier((G,I),\vec{R},\vec{Y})$. However, computing this in a distributed manner is challenging as each 
node has a different part of the input to the $\verifier$'s program.

Here we let the prover help us in computing $\verifier((G,I),\vec{R},\vec{Y})$ 
by a three message protocol which we describe next. If the running time of 
$\verifier$ is $\timec$ then the 
communication complexity (per node) of the final protocol will be $\timec\log n 
/n$.
In general, our compilers takes as input a description of any $r$-round 
$\IP$ protocol where the computation of the verifier can be done a time 
$\timec$ 
by a RAM program, and transforms it to a distributed $r+2$-round protocol with 
proof complexity of $O(\timec/n\log n)$. In particular, for $O(n)$ time 
programs the proof size if $O(\log n)$.

\paragraph{A Canonical Form for RAM Programs.}
A RAM program is modeled as a CPU that has a small state (\eg containing the context of the registers) and performs a sequence of instructions to an external memory (the input to the program is assumed to be stored in the memory). Each instruction operations (\ie can read and write) on the local state and on a single cell in the external memory. The instructions are numbered and we say that instruction number $i$ happened on time $i$. Without loss of generality, we have each memory cell contain a timestamp of the last time it was updated. That is, if at time $i$ the program writes to memory address $a$ then the timestamp in cell $a$ will be updated to $i$.

Observe that using this formation, the set $\mathcal{W}$ of (value, address, time) triples which are written is equal to the set $\mathcal{R}$ of (value, address, time) that are read. However, $\mathcal{W}$ and $\mathcal{R}$ might differ as {\em multisets}, as if the program writes a value once and then reads it multiple times. We follow the footsteps of Blum et al.~\cite{BlumEGKN94} and show that any program can be easily transformed into a canonical form where $\mathcal{W}$ and $\mathcal{R}$ are equal as multisets as well, while paying only a constant factor in the running time. In short, we make the program read any location after writing it, and vice versa. Formally, we replace the read and write operations with the follows.

\noindent\textbf{Write of value $v$ to address $a$ at time $t$ with state $s$:}
\begin{enumerate}
	\item read value $v'$ and time $t'$ stored at address $a$.
	\item write value $v$ and time $t$ to address $a$.
	\item update state $s$.
\end{enumerate}

\noindent\textbf{Read address $a$ at time $t$ with state $s$:}
\begin{enumerate}
	\item read value $v'$ and time $t'$ stored at address $a$.
	\item write value $v'$ and time $t$ to address $a$.
	\item update state $s$.
\end{enumerate}

In our setting, the state $s$ and timestamps are each of size $O(\log n)$ and 
the memory is of size $\poly(n)$. Thus, a tuple $(s,v,a,t)$ (corresponding o 
the state, value to read/write, memory address and timestamp) can be described 
using $O(\log n)$ bits, and the execution of a $C$ can be described by a list 
of $\timec$ such tuples.

%
%

\paragraph{The Compiler.}
We let the prover run the computation of the program to get the description of 
its execution, \ie a list of $\timec$ tuples $\{(s,v,a,t)_i\}_{i \in 
[\timec]}$. We divide the steps among the $n$ nodes, such that each node is 
assigned $\timec/n$ arbitrary steps of the execution. If the output of the 
program $y$ is a Boolean value, then an arbitrary node $v^*$ that is assigned 
the last instruction of the program will have the final output $y$.
Denote by $I_u$ the set of step numbers that are assigned to node $u$. Then the prover sends node $u$ the tuples $\{(s,v,a,t)_i\}_{i \in [I_u]}$.

Our goal now is to verify that this is an honest execution of the program. Define $R$ to be the set of all $(v,a,t)$ for all the read operations and let $W$ be defined similarly for the write operations. Each node holds a list of tuples where each tuple is either in $R$ or in $W$. Thus, to verify that the sets are equal we run the $\SetEquality$ protocol as described in \Cref{sec:set-equality}.

Then, we check that the addresses and write values correspond to the 
instructions of the program. Let $(s,v,a,t)$ be the tuple corresponding to 
instruction $i$ of the program given to node $u$, and suppose it is a write 
instruction. Then $u$ runs the $\ith{i}$ instruction of the program $C_i$ on 
state $s$ to get address $a'$, value $v'$ and new state $s'$, that is, 
$C_i(s)=(s',a',v')$. Then ,$u$ checks that indeed $a'=a$ and $v'=v$ (for a read 
instruction we check only $a'=a$). Denote by $S$ the set of all pairs $(s,i)$ 
given to the nodes and denote by $S'$ the set of all pairs $(s',i+1 \mod 
\timec)$ where $s'$ is the state computed above (if $s$ is the state of the 
last instruction then we let $s'$ be the state of the first instruction). Then, 
are we need to verify that $S=S'$ and again use the protocol for $\SetEquality$ 
for this.
The protocol is given in \Cref{fig:protocol-ram-compiler}.

\protocol
{A Compiler for RAM Programs.}
{A distributed compiler for RAM programs}
{fig:protocol-ram-compiler}
{
\begin{enumerate}
	\item P $\Leftrightarrow$ V (messages 1-3): prover and verifier interact
	to 	establish unique IDs for the nodes in $[n]$.	
	\item P $\Rightarrow$ V (message 1): prover sends node $i$ the tuples
	$(v,a,t,s)_j$ for all $j \in I_i$. Let $\mathcal{R}$ be the set of read operations
	and let $\mathcal{W}$ the set of write operations (from all nodes).
	\item P $\Leftrightarrow$ V (messages 2-3): prover and verifier interact to
	prove set equality of $\mathcal{R}=\mathcal{W}$.
	\item Local: node $u$ with tuple $(s,v,a,t)_j$ computes $C_j(s)=s',v',a'$ 
	and verifies that $a'=a$ (and that $v'=v$ if $C_j$ is a write instruction). 
	Let $S=\{(s,j)\}$ be the set of all such states and let $S'=\{(s',j+1 \mod 
	n)\}$.
	\item P $\Leftrightarrow$ V (messages 2-3): prover and verifier interact to
	prove set equality of $S=S'$.
	\item Local: node $v^*$ with the final output $y$ of the program verifies
	that $y=1$.
\end{enumerate}
}

\paragraph{Completeness.}
The completeness follows directly from the construction. An honest prover will
provide true IDs for the nodes, follow the computation of the RAM program each
provide each node with the correct memory values read by the computation. Then,
we will have that $\mathcal{R}=\mathcal{W}$ and thus the node $v^*$ will accept.

\paragraph{Soundness.}
If the prover did not provide unique IDs in the range of $[n]$ to the nodes,
then this will be detected by the distinctness protocol. If the prover provided
quadruplets such that $\mathcal{E} \ne \mathcal{W}$ then, this will be detected by the $\SetEquality$
protocol. Thus, it must be the case that $\mathcal{R}=\mathcal{W}$ but the RAM program outputs 0.
This means that the out $y$ that $v^*$ receives is 1, and thus there must be an
inconsistency in the computation. However, by the canonical form of the
computation, we know that any inconsistency implies that $\mathcal{R} \ne \mathcal{W}$. 

\paragraph{Number of Rounds.}
The protocol $\pi'$ will have some addition rounds to $\pi$. In the first round, have the prover sends the IDs of the nodes. If $\pi$ begins with a message from the prover to the verifier, then this can be sent in parallel to this message. Otherwise, a simple solution is to have this as the first message before the message from the verifier. This would add the round complexity by 1. 

The last message of the protocol $\pi$ is an $\mathsf{M}$ message. At this point, we run a protocol to simulate the computation of $\verifier$. This is an $\mathsf{MAM}$ protocol. Thus, we can have the first message be in parallel to the last message of the $\pi$. Then, we have an addition 2 message.

Overall, if $\pi$ is an $r$ message protocol then $\pi'$ will be either $r+2$ message, if the prover goes first in $\pi$, or an $r+3$ message protocol otherwise.

We observe that we can avoid that addition first message in the case where $\pi$ starts with the verifier. In the first message, the nodes chooses random values $\alpha_u$ and send them to the prover. Then, we force the prover to send the IDs (in the second message) according to the ordering of the $\alpha_u$ values. Let $u_1,\ldots,u_n$ be the nodes ordered according to the $\alpha_u$ values and let $\ID_u$ be the ID given to node $u$. The prover additional sends each node $u_i$ the value $\alpha_{u_{i+1 \mod n}}$. Each node $u$ sets $a_u = (\alpha_u,\ID_u)$ and $b_u=(\alpha_{u_{i+1 \mod n}}, \ID_{u} + 1 \mod n)$. The nodes run a $\SetEquality$ equality protocol for $\A$ and $\B$ where $\A = \{a_u\}$ and $B=\{b_u\}$. We observe that the prover is honest if and only if $\A = \B$ (the reasoning is very similar to the soundness argument in the $\Distinctness$ protocol).

\section{Graph Asymmetry and GNI}\label{sec:asym}
The graph Asymmetry language consists of all graphs that do {\em not} have a non-trivial automorphism, i.e.\ every non-identity permutation of its vertices yields a different graph.
\begin{definition}[Graph Asymmetry]\label{def:problem-asymmetry}
	The language $\ASym$ contains all graphs that do {\em not} have a non-trivial automorphism (all asymmetric graphs).
\end{definition}

We show a public-coin protocol for graph asymmetry that uses our RAM program compiler. The protocol
consists of 4 rounds with $O(\log n)$ communication complexity. Formally, we
show that
\begin{theorem}
	$\ASym \in \dAMAM[O(\log n)]$.
\end{theorem}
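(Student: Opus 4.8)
The plan is to follow the two-phase strategy outlined in the introduction: first reduce graph asymmetry to a problem verifiable in linear time in the number of vertices $n$ (using $O(\log n)$-bit proofs), and then invoke the RAM compiler (\Cref{thm:ram-compiler}) on the reduced problem. The starting point is the standard Goldwasser--Sipser $\GNI$ protocol, adapted to asymmetry: let $S = \{G' : G \cong G'\}$ be the set of graphs isomorphic to $G$; then $|S| = n!$ if $G$ is asymmetric and $|S| \le n!/2$ otherwise, so it suffices to estimate $|S|$ via the lower-bound hashing argument. The verifier picks a pairwise-independent hash $g$ with output length $\ell \approx n\log n$ and checks that the prover can exhibit $G' \in S$ with $g(G') = 0^\ell$. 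The obstacle is that $G'$ has an $n^2$-bit representation, so distributing it over the $n$ nodes costs $\Omega(n)$ bits per node, and evaluating $g$ on a full $n^2$-bit string is not a linear-in-$n$ RAM computation.

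To overcome this I would use the locally-computable hash $h$ described in the techniques section: each node $u$ picks a seed (of $O(\log n)$ bits) for an $\epsilon$-almost-pairwise-independent hash $h_u\colon\bit^{n^2}\to\bit^{3\log n}$ with $\epsilon = 1/n$ (via \Cref{thm:almost-pairwise}), and $h(G') = h_1(x_1)\concat\cdots\concat h_n(x_n)$ where $x_i\in\bit^n$ is the adjacency row of vertex $i$. Set $S' = \{h(G') : G'\in S\}$, so each element of $S'$ has only $3n\log n$ bits. The key combinatorial claim — which I expect to be the main technical obstacle — is that with high probability $|S'| = |S|$, i.e.\ $h$ is injective on $S$. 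Since $h$ is locally computable, two graphs differing in a single adjacency entry collide with probability governed by a single $h_i$, roughly $n^{-3}$; more generally two graphs at Hamming distance $k$ collide with probability at most $n^{-3\lceil k/(2n)\rceil}$-ish after accounting for which rows differ. One then union-bounds over all pairs of distinct isomorphic graphs at each Hamming distance $k$, using that the number of graphs isomorphic to $G$ within Hamming distance $k$ of a fixed one is small — because an isomorphism that moves few edges must be close to a fixed permutation, bounding the count by something like $n^{O(k/n)}$. Balancing the collision probability against this count (this is where the slackness between $\ell \approx n\log n$ and the $3n\log n$ output of $h$ is used) gives $\Pr[|S'| < |S|] = o(1)$.

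With $h$ in hand, the distributed protocol runs as follows. In message $1$ (A) each node $u$ sends random bits that jointly specify a seed for $g\colon\bit^{3n\log n}\to\bit^\ell$ (and as usual the nodes run the ID-assignment and spanning-tree setup in parallel). In message $2$ (M) the prover sends each node $u$: its seed for $h_u$, its index $i$, its slice of the claimed witness $G'\in S$ (node $i$ gets row $x_i$, which is $n$ bits — but note this is exactly the part we must avoid, so instead we only need each node to hold $h_i(x_i)$, the $3\log n$-bit hash of its row, together with enough auxiliary data to certify $G'\cong G$), plus the spanning-tree and permutation proofs. Verifying $G' \cong G$ with $O(\log n)$-bit proofs reduces to a $\SetEquality$ instance on the edge sets of $G$ and $\pi(G')$ under the claimed isomorphism $\pi$ (each node stores $\pi$ of itself and its neighbors, locally forms its incident edge-pairs), handled by the $\dAM[O(\log n)]$ protocol of \Cref{sec:set-equality} run in parallel. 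In messages $3$--$4$ (A, M) we run the RAM compiler on the residual computation: given the $n$ values $h_i(x_i)$ distributed one per node (together with the seed of $g$, reconstructed up the tree), evaluate $g$ on their concatenation and check the result is $0^\ell$. Because $g$ is a single pairwise-independent hash on an $O(n\log n)$-bit input, we can take it to be the linear-algebraic hash of Ishai \etal (\Cref{cor:linear_ikos}), which is computable by an $O(n)$-operation RAM program over $O(\log n)$-bit words; hence by \Cref{thm:ram-compiler} this phase costs $O(\log n)$ proof size and $2$ extra rounds, which overlap with messages $2$--$4$ already present, yielding a $\dAMAM$ protocol overall.

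Finally I would assemble soundness and completeness. Completeness: if $G$ is asymmetric then $|S| = n!$ and with high probability $|S'| = n!$ over the seeds of the $h_u$'s, so (for the right choice of $\ell$) a witness $G'$ with $g(h(G')) = 0^\ell$ exists with constant probability, and the honest prover supplies it; all $\SetEquality$, permutation, spanning-tree, and RAM-compiler sub-protocols accept by their completeness. Soundness: if $G$ is symmetric then $|S| \le n!/2$, so $|S'| \le n!/2$, and for any (cheating) prover the probability that some element of $S'$ hashes to $0^\ell$ under $g$ is at most $|S'|/2^\ell \le n!/(2\cdot 2^\ell)$, which is small by the choice of $\ell$; the only other way to cheat is to break one of the sub-protocols ($\SetEquality$, $\Distinctness$/$\Permutation$ for the IDs, the spanning-tree labels, or the RAM-compiler's canonical-form check), each of which fails with probability $\ge 1-o(1)$. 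A union bound over the $O(1)$ many sub-protocols plus the hashing step gives soundness error below $1/3$ after constant-factor amplification, and the per-node communication is $O(\log n)$ throughout. The main obstacle remains the injectivity-of-$h$-on-$S$ claim; everything else is either a direct application of an earlier result or a routine composition of the sub-protocols.
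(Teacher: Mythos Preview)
Your high-level plan matches the paper's: shrink $S$ to $S' = h(S)$ via a locally computable hash $h = h_1\concat\cdots\concat h_n$, then apply the linear-time hash $g$ of \Cref{cor:linear_ikos} to $h(G')$ via the RAM compiler. But several details in your protocol description are off.

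First, you have the prover send the seed for $h_u$ in message~2. This breaks soundness: if the prover chooses $h$ \emph{after} seeing $g$, then for any fixed $G'$ it can search over seeds to force $g(h(G'))=0^\ell$, and the bound $|S'|/2^\ell$ no longer controls anything. In the paper each node picks its own $h_u$ in message~1 together with its share $K_i$ of the seed for $g$; you even say ``each node $u$ picks a seed'' earlier, so this is presumably a slip, but it must be fixed.

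Second, you never explain how node $u$ obtains $h_i(x_i)$ when $x_i$ is the $i$th row of $G'$ --- you note that sending $x_i$ itself is too expensive and then punt. The paper's mechanism is the crux: the prover specifies $G'$ only by sending each node $u$ the single value $u'=\pi(u)$; nodes then exchange these new names with their $G$-neighbors, so node $u$ learns $N_{G'}(u')=\{\pi(v):v\in N_G(u)\}$ and applies $h_u$ to that row locally. Because $G'$ is \emph{defined} as $\pi(G)$, isomorphism to $G$ is automatic; your extra $\SetEquality$ step on edge sets is unnecessary (what one may want to check is that $\pi$ is a permutation, which is a $\Permutation$ instance, not an edge-set equality).

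Third, your sketch of the injectivity claim $|S'|=|S|$ is the wrong granularity. The paper does not work with bit-Hamming distance; it defines $\Delta(H,H')$ as the number of \emph{rows} on which $H,H'$ differ. If $k$ rows differ, the collision probability under $h$ is at most $((1+\epsilon)/n^3)^k \le n^{-2k}$, and the number of pairs in $S$ at row-distance $k$ is at most $\binom{n}{k}(k!)^2 < n^k\,k!$ (choose the $k$ differing positions, then each graph's $k$ rows are determined by a permutation). Summing $n^k k!/n^{2k}$ over $k$ gives the $1/n$ bound. Your suggested count $n^{O(k/n)}$ and collision bound $n^{-3\lceil k/(2n)\rceil}$ do not line up with this and would not close.

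With these three corrections --- nodes pick $h_u$ in round~1, $G'$ is transmitted as $\pi$ and the rows are recovered locally, and the injectivity argument is done at the row level --- your proposal becomes the paper's proof.
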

We begin by a short description of a standard (centralized) interactive protocol for $\ASym$ which is a simple adaptation of the public-coin protocol for graph non-isomorphism \cite{GoldwasserS89,GoldreichMW91} (see also \cite{BabaiM88}). From hereon we denote this protocol by the ``$\GNI$ protocol''.

Let $S$ be the set of all graphs that are isomorphic to $G$. That is, $S=\{G':G \cong G'\}$. The main observation of the $\GNI$ protocol which follows here directly is that if $G$ has no (non-trivial) automorphism then $|S| = n!$ while if $G$ does have an automorphism then $|S| \le n!/2$. Thus, the focus of the protocol is on estimating the size of $S$.

The verifier samples a pair-wise hash function $g\colon \bit^{n^2} \to \bit^{\ell}$, where
$\ell \in \N$ is the smallest number such that $2^{\ell} \ge 2n!$ and sends it to the prover. The prover seeks for a graph $G' \in S$ such that $g(G')=0^{\ell}$. The main observation is that the probability that such a graph $G'$ exists is higher when $S$ is larger which will allows us to distinguish between the cases of $S$. As observed by \cite{KolOS18} using an almost-pairwise hash function suffices for these purposes, and a seed to such a function takes $O(n\log n)$ bits.
Note that to send $G'$ it is enough to send a permutation $\pi$ such that $G'=\pi(G)$, which can also be represented with $O(n \log n)$ bits. These facts are important, since if we wish to get an $O(\log n)$ distributed protocol per node, we must have the underlying  centralized protocol as communicating a total of $O(n\log n)$ bits.

Our goal is to simulate this protocol with a distributed verifier. The nodes have unique IDs (strings of length $O(\log n)$) and let $u_1,\ldots, u_n$ be the ordering of the nodes sorted by their ID. Let $i=I(u)$ be the index of node $u$ in this ordering, and let $\ID_i$ be the ID of node $u_i$. Note that a node $u$ does not know $i$ in advance, but the prover knows the entire ordering.\footnote{This is without loss of generality, as the nodes can pick unique IDs in the range $1$ to $n^3$ (uniqueness holds w.h.p.) and send them to the prover.}

As our first step, we hash the set $S$ to a set $S'$ of the same
size (w.h.p.), and where elements in $S'$ have a small representation. In particular,
while the graphs in $S$ are represented using roughly $n^2$ bits, the hashed
values in $S'$ will have only $O(n\log n)$ bits of representation. More
importantly, the hash used to compute $S'$ will be {\em local} and can be
easily computed by the nodes of network by considering only their own neighborhood.

In more detail, our hash function $h$ will be composed of $n$ hash functions.
Each node $u$ chooses a seed for an $\epsilon$-almost-pairwise 
hash function (according to 
\Cref{thm:almost-pairwise}):
$$
h_u\colon \bit^{n^2} \to \bit^{3\log n}
$$
where $\epsilon = 1/n$. The seed length is of size $3\log n + \log n + \log(1/\epsilon) = O(\log n)$ bits. Let
$h_1,\ldots,h_n$ be the $n$ chosen hash function ordered by the index of the nodes (\ie by $I(u)$). Let
$G=x_1,\ldots,x_n$ where $x_i \in \bit^n$ is the indicator vector for the
neighbors of node $i$ in $G$. Then, we define a hash function $h \colon \bit^{n^2} \to \bit^{3n\log n}$ as
$$
 h(G)=h_1(x_1) \concat \ldots \concat h_n(x_n).
$$
Using $h$ we can define the set $S' = \{h(G): G \in S\}$. It is easy to see that $|S'| \leq |S|$.
We note that the fact that $h$ is locally computable means that the probability 
of any two graphs colliding under $h$  is not as small as we would like (say in 
a pair-wise independent function) and we need finer analysis to show that  
$|S'|$ is close to $|S|$. The key  point is that $S$ contains graphs that are 
all isomorphic to each other and hence  we are able to show that with high 
probability it holds that $|S'| = |S|$ (this is shown in 
\Cref{claim:size-of-s}).

Assume that indeed $|S'|=|S|$. Then, we can continue to simulate the centralized protocol where we replace the set $S$
with the set $S'$. That is, to sample the
pairwise hash pairwise hash function $g_K\colon \bit^{n\log n} \to \bit^{\ell}$ which has a seed length of $O(n\log n)$ we let each node $u_i$ sample a chunk of the seed, $K_i$, of length $O(\log n)$. Then, we let the seed $K$ of $g$ to be $K=K_1,\ldots,K_n$ where again the ordering of the nodes $u_1,\ldots,u_n$ are according to their IDs. The prover knows this ordering and can construct $K$ accordingly.

The family of functions that we pick for the task is that of Corollary~\ref{cor:linear_ikos}: it has a succinct description and can be evaluated by a linear sized circuit over the field (or a linear-time RAM). It is crucial to use a hash function that can be computed in linear time since we are going to apply our RAM compiler on this computation and want a minimal overhead.

The prover sends the graph $G' \in S$ by sending the permutation $\pi$ in the following way: it sends the node $u$ its new ID $u'=\pi(u)$ in $G'$. Each node $u$ learns the IDs of their neighbors in $G'$, denoted by $N_{G'}(u)$. Moreover, the prover sends each node $u$ its index $i=I(u)$. The validity of these index will be checked next.
Our goal now is to verify that $g_{K}(h(G'))=0^{\ell}$, and to check the validity of the index's $i$ given by the prover. Let $y = h(G')$.
Notice
that $y$ can be computed locally, since $h$ is a local hash function, that is, each node computes $y_i$.
Then, we claim that the rest of the computation, \ie computing $g_{K}(y)$ can be performed by a linear-time RAM program and therefore applying our RAM program compiler of~\Cref{sec:RAM_compiler} on this linear time program will finish the last part of the protocol with proof of size $O(\log n)$.

That is, we write a RAM program $C$ as follows. The input to $C$ is a list of $n$ tuples $(i,\ID_i,K_i,y_i)$. The program composes the seeds $K_i$ according to the ordering to get the seed $K$. Then, it composes $y=y_1,\ldots,y_n$, computes $g_{K}(y)$ and verifies that $g_{K}(y)=0^{\ell}$. Finally, the program verifies the indexes $i$ given by the prover. It create an array $A$ of length $n$ and sets $A[i]$ to be the ID of the node with index $i$. Then, it traverses $A$ and verifies that $A[i] < A[i+1]$ which guarantees that the IDs where given by the right order.
This completes the description of the protocol, see \Cref{fig:protocol-asym} for more details.

\protocol
{A protocol for Graph Asymmetry.}
{A distributed AMAM protocol for graph asymmetry.}
{fig:protocol-asym}
{
\begin{enumerate}
	\item V $\Rightarrow$ P (message 1): Each node $u$ picks a random seed $h_u \in \cH$ for a hash function $h$ and a random seed $K_i \in $ for the hash function $g_K$ and sends them to the prover.
	\item P $\Rightarrow$ V (message 2): The prover composes $K=K_1,\ldots,K_n$ and $h=h_1,\ldots,h_n$ (sorted by the nodes IDs). Then, it finds $\pi$ such that for $G'=\pi(G)$ it holds that $g(h(G'))=0^{\ell}$ and sends each node $u$ the value $u'=\pi(u)$ and its index $i=I(u)$.
	\item Local: Nodes learn the IDs of their neighbors in $G'$ and each node $u$ computes $y_u=h_u(N_{G'}(u'))$.
	\item P $\Leftrightarrow$ V (message 2-4): prover and verifier via the compiler of \Cref{sec:RAM_compiler} to compute the linear time RAM program $C$ that on input $\{(i,ID_i,K_1,y_1)\}_{i \in [n]}$:
	\begin{enumerate}
		\item computes $K=K_1,\ldots,K_n$ and $y=y_1\ldots,y_n$.
		\item verifies that $g_{K}(y)=0^{\ell}$.
		\item verifies that for all $1 \le i \le n-1$, $\ID_{i} < \ID_{i+1}$.
	\end{enumerate}
\end{enumerate}
}

\paragraph{Completeness.}
To show completeness, we need to show that with high probability there is a
graph $G' \in S$ such that $g(h(G'))=0^{\ell}$. We show that the set
$S'$ (with high probability) will have the same size as $S$.
\begin{claim}\label{claim:size-of-s}
	$	\Pr_{h}[|S'| \ne  |S|] \le 1/n$.
\end{claim}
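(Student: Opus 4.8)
The plan is to bound the probability that $|S'|<|S|$; since $|S'|\le |S|$ always, this is exactly the probability that the map $G'\mapsto h(G')$ fails to be injective on $S$, i.e.\ that there are two distinct $G_1,G_2\in S$ with $h(G_1)=h(G_2)$. I would prove this by a union bound over such pairs, organized by the number $k$ of rows on which $G_1$ and $G_2$ disagree. Writing $G_1=x_1\concat\cdots\concat x_n$ and $G_2=x_1'\concat\cdots\concat x_n'$ (rows of the adjacency matrices), let $R=\{u: x_u\ne x_u'\}$, so $|R|=k$.

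First I would bound the collision probability of a single such pair. Since the seeds $h_1,\dots,h_n$ are chosen independently, $h(G_1)=h(G_2)$ iff $h_u(x_u)=h_u(x_u')$ for every $u$; for $u\notin R$ this is automatic, and for $u\in R$ the $\epsilon$-almost-pairwise property of $h_u$ (summing over the $n^3$ possible common outputs, using that the range is $\bit^{3\log n}$) gives $\Pr[h_u(x_u)=h_u(x_u')]\le (1+\epsilon)/n^3$. As these $k$ events are independent, $\Pr[h(G_1)=h(G_2)]\le\big((1+\epsilon)/n^3\big)^{k}$. I would also observe that $k\ge 3$: by symmetry of the adjacency matrices $k=1$ is impossible, and $k=2$ would mean $G_2$ is obtained from $G_1$ by toggling a single edge, contradicting $|E(G_1)|=|E(G_2)|$ (both are isomorphic to $G$).

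The heart of the argument is a combinatorial bound on the number of pairs of $S$ at row-distance $k$. The key structural point is that if $G_1,G_2\in S$ differ only on the rows in $R$, then (again by symmetry of the matrices) they agree on every entry outside the $k\times k$ block $R\times R$; and since moreover $G_1\cong G_2\cong G$, any relabeling $\rho$ with $\rho(G_1)=G_2$ must coincide with some automorphism of $G_1$ on all but $O(k)$ vertices. Hence, for a fixed $G_1$, the number of $G_2\in S$ at row-distance $k$ is at most $n^{O(k)}$, so the number of (unordered) pairs of $S$ at distance $k$ is at most $|S|\cdot n^{O(k)}$. This is precisely where the special structure of $S$ is essential: without the isomorphism constraint, all $2^{\binom{k}{2}}$ fillings of the $R\times R$ block would be admissible and there would be far too many nearby graphs.

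Putting the pieces together, $\Pr[|S'|\ne|S|]\le\sum_{k\ge 3}\big(\#\{\text{pairs of }S\text{ at distance }k\}\big)\cdot\big((1+\epsilon)/n^3\big)^{k}\le |S|\sum_{k\ge 3}n^{O(k)}\big((1+\epsilon)/n^3\big)^{k}$, and with $\epsilon=1/n$ this sum is controlled by its $k=3$ term. I expect the main obstacle to be exactly this last counting estimate: making precise the statement that a relabeling changing only $k$ rows differs from an automorphism on $O(k)$ vertices, pinning down the constant in $n^{O(k)}$, and verifying that the bound on the number of close pairs is small enough to absorb the $|S|$ factor; the probabilistic part is routine. (If downstream one only needs $|S'|\ge(1-o(1))|S|$, the same computation yields it at once via Markov's inequality applied to the number of colliding pairs.)
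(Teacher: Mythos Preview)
Your high-level structure matches the paper: both organize a union bound over colliding pairs $(H,H')\in S\times S$ by their row-distance $k$, and both bound the per-pair collision probability by $\big((1+\epsilon)/n^3\big)^k$ using independence of the $h_u$. The divergence is precisely where you flagged it---the count of pairs at distance $k$---and it is a genuine gap, not just a detail.

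Your count proceeds by fixing $G_1$ and bounding the number of $G_2$'s at distance $k$ by $n^{O(k)}$, then multiplying by $|S|$. But $|S|$ can be as large as $n!$ (this is exactly the completeness case), so even taking your sharper collision bound $n^{-3k}$ and your observation $k\ge 3$, the $k=3$ term alone is of order $n!\cdot n^{O(1)}/n^{9}$, which explodes. No constant in the $O(k)$ exponent saves this; the $|S|$ factor simply cannot be absorbed by a polynomial-in-$n$ denominator. Your closing caveat about ``absorbing the $|S|$ factor'' is therefore not a loose end but the whole difficulty, and the route you sketch (automorphism-up-to-$O(k)$-vertices) does not remove it.

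The paper avoids the $|S|$ factor by counting pairs \emph{per choice of the differing index set $I$}, not per choice of $G_1$. Fixing $I$ with $|I|=k$, it argues that the number of pairs $(H,H')\in S\times S$ that differ exactly on $I$ is at most $(k!)^2$---a bound depending only on $k$. Summing over the $\binom{n}{k}$ choices of $I$ gives at most $\binom{n}{k}(k!)^2\le n^k k!$ pairs at distance $k$, and then (using only the cruder bound $1/n^{2k}$ on the collision probability) $\sum_{k\ge 1} n^k k!\cdot n^{-2k}=\sum_{k\ge 1} k!/n^k=O(1/n)$. The crucial point is that once $I$ is fixed, $H$ and $H'$ share the entire structure outside the $I\times I$ block, and the paper bounds the number of ways to complete that block to a member of $S$ by $k!$ for each of $H,H'$; this is where the isomorphism constraint is used, and it is what kills the $|S|$ dependence. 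Your $k\ge 3$ observation, while correct, is not needed once the counting is done this way.
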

\begin{proof}
Let $H$ and $H'$ be two graphs, and define $\Delta(H,H')=k$ if they differ
on $k$ neighborhoods. That is, if
$H=x_1,\ldots,x_n$ and $H'=x'_1,\ldots,x'_n$ then
$$
\Delta(H,H') = |\{i : x_i \ne x'_i\}|.
$$
Let $I=\{i_1,\ldots,i_k\}$ be the set of $k$ indices on which $H$ and $H'$
differ. Then,
\begin{align*}
\Pr[h(H)=h(H')] = \Pr[\forall i \in I, h_{i}(x_i) = h_{i}(x'_i)]
\le \left(\frac{1+\epsilon}{2^{\ell}}\right)^{k} \le 1/n^{2k}.
\end{align*}
Thus, the probability of collision is smaller as $k$ is larger. We want to take a union over all pairs to show that there are no collisions. One concern here is the high collision probability for small values of $k$. However, what we show is that there are only a few graphs in $S$ that have small distance.

We bound the number of pairs of graphs in $S$ that have distance $k$. There
are
${n \choose k}$
possible locations $i_1,\ldots,i_k$ for which a pair $(H,H')$ of distance
$k$ can differ on. Fix a specific set $i_1,\ldots,i_k$. Since $H,H' \in S$, the
$k$ locations in each graph are a permutation of either $G$.
Thus, there are $k!$ possibilities for each graph, and $(k!)^2$ possibilities for the pair.
All together, we have
$$
|\{(H,H') \in S : \Delta(H,H')=k\}| \le {n \choose k}(k!)^2 < n^k k!.
$$
Thus, we can bound the probability that a colliding pair exists:
\begin{align*}
\Pr_{h}[\exists (H,H' \in S) : h(H)=h(H')] & \le
\sum_{k=1}^{n}
\sum_{\begin{subarray}{c}
	(H,H') \in S\\
	\Delta(H,H')=k
	\end{subarray}
}\Pr[h(H)=h(H')]
\le \sum_{k=1}^{n} \frac{n^k k!}{n^{2k}} < \frac{1}{n}~.
\end{align*}
\end{proof}

Therefore, we condition on the event that $|S|=|S'|$. Recall that $\ell$ is the smallest number such that $2^{\ell} \ge 2n!$.
Following the analysis of the $\GNI$ protocol we show that
\begin{align}\label{eq:1}
\frac{|S|}{2^{\ell}}(1 - \frac{|S|}{2 \cdot 2^{\ell}}) \le \Pr[\exists H \in S : h(H)=0^{\ell}] \le \frac{|S|}{2^{\ell}}~.
\end{align}
The upper bound follows by a simple union bound. For the lower bound we observe that
\begin{align*}
\Pr&[\exists H \in S : h(H)=0^{\ell}] \ge \sum_{H \in S}\Pr[h(H)=0^{\ell}] - \frac{1}{2} \cdot \sum_{H \ne H' \in S}\Pr[h(H)=0^{\ell} \wedge h(H')=0^{\ell}] \\
& \ge \frac{|S|}{2^{\ell}} - \frac{|S|^2}{2} \cdot 2^{-2\ell}
\ge \frac{|S|}{2^{\ell}}(1 - \frac{|S|}{2 \cdot 2^{\ell}})~.
\end{align*}
Let $p=2n!/2^{\ell}$ where $1/2 < p \le 1$. If $G$ has no automorphism then $|S|=n!$ and thus
$$
\Pr[\exists H \in S : h(H)=0^{\ell}] \ge \frac{n!}{2^{\ell}}(1-\frac{n!}{2\cdot2^{\ell}}) = \frac{p}{2}(1-\frac{p}{8}) =\alpha~.
$$
On the other hand, if $G$ has an automorphism then $|S| \le n!/2$ and thus
$$
\Pr[\exists H \in S : h(H)=0^{\ell}] \le \frac{n!}{2 \cdot 2^{\ell}} =\frac{p}{4} =\beta~.
$$
Thus, we have completeness $\alpha$ and soundness $\beta$. Since we can perform parallel repetition, it suffices to show that $\alpha - \beta$ is bounded by a constant. Since $p \ge 1/2$ it holds that
$$
\alpha - \beta = p/2(1-p/4) - p/4 = p/4-p^2/16 \ge 0.1
$$
Thus, repeating a constant number of times, we can push these parameters to $1-\epsilon$ and $\epsilon$ for any constant $\epsilon > 0$ while paying only a constant factor in the communication complexity. This completes the analysis of the protocol.

\subsection{Graph Non-Isomorphism}\label{sec:gni}
The protocol above for $\ASym$ can be easily adapted to solve the $\GNI$ problem with the same complexity. In the $\GNI$ the input is {\em two} graphs $G_0$ and $G_1$ however since there is only one network graph, there are several interpretations of what is the distributed analog of this problem. This is the reason we focused on the $\ASym$ problem where there is no ambiguity.
\begin{definition}[Graph Non-Isomorphism]
	The language $\GNI$ consists of all pairs of graphs $(G_0, G_1)$ where $G_0$ is {\em
	not} isomorphic to the graph $G_1$.
\end{definition}
Here we assume that the communication graph is the union of $G_0$ and $G_1$ where node gets a trinary inputs for each incident edge indicating if the edge is contained in $G_0$, $G_1$ or both.

When adapting the $\ASym$ protocol for $\GNI$ this results in a $\GNI$ problem where nodes can communicate on both graph $G_0$ and $G_1$. That is, there is one set of vertices $V$ for the network and both $G_0$ and $G_1$ are defined over $V$. The edge set of the network is the union of the edges of $G_0$ and $G_1$. The edges are marked if they belong to $G_0$ or $G_1$ or both.

The protocol we presented for $\ASym$ was an adaptation of the protocol for $\GNI$. In the $\GNI$ protocol, we define the set
$$
S=\{(G',\pi) : (G' \cong G_0 \vee G' \cong G_1) \text{ and } \pi \text { is an automorphism of } G' \}~.
$$
The key point is that if $G_0 \cong G_1$ then $|S| = n!$ while if $G_0 \not \cong G_1$ then $|S|=2n!$. Thus, the goal is to estimate the size of $S$ just as in the $\ASym$ protocol. One difference is that here we need to additionally verify that $\pi$ is an automorphism. We employ the $\dMAM[O(\log n)]$ protocol of \cite{KolOS18} for this. The result is the following
\begin{corollary}
	$\GNI \in \dAMAM[O(\log n)]$.
\end{corollary}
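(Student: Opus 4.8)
The plan is to run the distributed $\ASym$ protocol of \Cref{sec:asym} (\Cref{fig:protocol-asym}) essentially verbatim, but on the set
$$
S=\{(G',\pi) : (G' \cong G_0 \vee G' \cong G_1)\ \text{and}\ \pi\ \text{is an automorphism of}\ G'\},
$$
together with one extra sub-protocol that certifies the second coordinate $\pi$. As noted in the excerpt, pairing every graph in an isomorphism class with each of its automorphisms makes the count independent of the automorphism group size: the number of pairs with $G'\cong G_0$ is exactly $n!$ (and likewise for $G_1$), so $|S|=n!$ when $G_0\cong G_1$ and $|S|=2n!$ when $G_0\not\cong G_1$. Thus, exactly as in the Goldwasser--Sipser analysis \cite{GoldwasserS89} underlying the $\ASym$ protocol, the verifier fixes $\ell$ with $2^\ell\ge 4n!$ smallest, samples the linear-time pairwise hash $g_K$ of \Cref{cor:linear_ikos} (seed split into $O(\log n)$-bit chunks $K_i$ over the nodes) and the local $O(\log n)$-seed almost-pairwise hashes $h_u$ of \Cref{thm:almost-pairwise}, and accepts iff the prover exhibits an element of $S$ whose $g_K\circ h$ image is $0^\ell$; by the estimate of \Cref{eq:1} this separates $n!$ from $2n!$ with constant gap, amplified by parallel repetition.

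Concretely: since the communication graph is $G_0\cup G_1$ with each incident edge tagged by its membership in $G_0$, $G_1$ or both, each node can read off its $G_0$-neighborhood and its $G_1$-neighborhood. The prover broadcasts a consistency-checked bit $b\in\{0,1\}$ (all nodes verify it with neighbors), sends each node $u$ a relabeling value $u'=\sigma(u)$, its index $i=I(u)$, the value $\pi(u')$, and the values $\pi(v')$ for its $G_b$-neighbors $v$. Each node then learns its $G'$-neighborhood $N_{G'}(u')=\{\sigma(v):v\in N_{G_b}(u)\}$ from its neighbors, and locally computes the block $y_u=h_u\big(x_u\concat\langle\pi(u')\rangle\big)$, where $x_u\in\bit^n$ is the $G'$-adjacency vector of $u'$; the hashed object is $h(G',\pi)=y_1\concat\cdots\concat y_n$ of $O(n\log n)$ bits, and $S'=\{h(G',\pi):(G',\pi)\in S\}$. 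As in the $\ASym$ protocol we then invoke the RAM compiler of \Cref{sec:RAM_compiler} on the linear-time program that reassembles $K$ and $y$, checks $g_K(y)=0^\ell$, and checks the ID ordering $\ID_i<\ID_{i+1}$; this yields $O(\log n)$ proof size and two extra rounds.

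Two further checks are needed. First, that $\sigma$ (the map $u\mapsto u'$) is a bijection on $[n]$: this is precisely the $\Distinctness$/$\Permutation$ problem, which has an $O(\log n)$-proof $\dAM$ protocol (\Cref{fig:protocol-permutation}), run in parallel; once $\sigma$ is a bijection, the graph the nodes hash is automatically a relabeling of $G_b$, hence $G'\cong G_b$. Second, that $\pi$ is a genuine automorphism of $G'$: for this we run the $\dMAM[O(\log n)]$ automorphism-verification protocol of \cite{KolOS18} — equivalently, a $\SetEquality$ check that the edge multiset $\{(\pi(u'),\pi(v')):(u',v')\in E(G')\}$ equals $E(G')$, which each node can feed with its $O(\log n)$-bit share since the prover supplied $\pi$ on $u'$ and on $u$'s neighbors. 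Both sub-protocols are public-coin, use $O(\log n)$ bits, and compose in parallel with the main protocol; since the prover already moves in message $2$, the automorphism check ($\dMAM$, prover-first) piggybacks on that message and the following two, so the whole construction remains a $\dAMAM$ protocol with $O(\log n)$ proof size. Completeness and soundness follow by combining the $\ASym$ analysis (with $S$ replaced by the pair-set) with soundness of the $\Distinctness$ and automorphism sub-protocols: a cheating prover must either misreport $\sigma$ (caught by $\Distinctness$), pair $G'$ with a non-automorphism $\pi$ (caught by the $\SetEquality$/$\dMAM$ check), or genuinely produce an element of $S$ with $g_K\circ h$ image $0^\ell$, which is the low-probability event when $|S|$ is small.

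\textbf{Main obstacle.} The delicate point is re-deriving the ``$|S'|=|S|$ with high probability'' guarantee (\Cref{claim:size-of-s}) for the pair-set: an element of $S$ now carries the extra coordinate $\pi$, so two elements collide under $h$ when their (adjacency-row, $\pi$-value) blocks agree, and the bound on the number of distance-$k$ pairs must account both for the second base graph (a constant factor) and for the $\pi$-coordinates on the $k$ differing positions. As in \Cref{claim:size-of-s} the rigidity of being isomorphic to a fixed graph keeps the per-coordinate number of completions polynomial in $n$, so the count of distance-$k$ pairs is still $n^{O(k)}k!$; one then simply takes the local hashes $h_u$ to output $c\log n$ bits for a sufficiently large constant $c$ (elements of $S'$ still have $O(n\log n)$ bits, and $g_K$ is applied as before), which drives the union bound over $k$ below $1/n$. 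The remaining work — verifying $\pi$ against a graph $G'$ that is not the communication graph but a prover-supplied relabeling of $G_b$ — is handled exactly because $\sigma$ is certified a bijection, so every node knows its $G'$-neighborhood and the $\dMAM$ automorphism check of \cite{KolOS18} applies directly.
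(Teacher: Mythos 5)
Your overall route is exactly the paper's: define the pair-set $S=\{(G',\pi)\}$ so that $|S|$ equals $n!$ or $2n!$ independently of the automorphism groups, run the $\ASym$ machinery to estimate $|S|$, and certify the $\pi$-coordinate with the $\dMAM[O(\log n)]$ automorphism check of \cite{KolOS18}; the paper's own argument for this corollary is essentially just that sketch. Your round counting, the use of $\Permutation$ to certify the relabeling $\sigma$, and the size gap $n!$ vs.\ $2n!$ are all fine.

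The genuine problem is in your ``main obstacle'' paragraph. You hash the $\pi$-coordinate locally, setting $y_u=h_u(x_u\concat\langle\pi(u')\rangle)$, and claim that the number of distance-$k$ pairs in $S$ is still $n^{O(k)}k!$, so that enlarging the output length of each $h_u$ to $c\log n$ rescues the union bound. This fails for pairs $(H,\pi)\ne(H,\pi')$ with the \emph{same} graph $H$: such pairs differ only in their $\pi$-blocks, and when $\mathrm{Aut}(G_0)$ is large there are on the order of $n!\cdot n^{k}$ of them at distance $k$ (for each of the $n!/|\mathrm{Aut}(G_0)|$ graphs in the class there are $|\mathrm{Aut}(G_0)|$ choices of $\pi$, times the automorphisms with exactly $k$ non-fixed points). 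The union-bound term at $k=2$ is then roughly $n!\cdot n^{2}\cdot n^{-2c}$, which no constant $c$ controls --- and this is precisely the completeness case that matters, since the whole point of the pair trick is to handle highly symmetric $G_0,G_1$ (e.g.\ $G_0=K_n$). The fix is simple and should be stated: do not hash $\pi$ at all. Since $\pi$ occupies only $n\log n$ bits, take the representative of $(G',\pi)$ to be $h(G')\concat\pi$ and feed $g_K$ this string; each node contributes $h_u(x_u)$ and $\pi(u')$, still $O(\log n)$ bits, to the RAM-compiled program. Injectivity of the representative map on $S$ then reduces to injectivity of $h$ on the graphs occurring in $S$, i.e.\ to two applications of \Cref{claim:size-of-s}, and the rest of your argument goes through unchanged.
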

We note that while this improves upon the $\dAMAM[O(n\log n)]$ of \cite{KolOS18}, our
protocol works only when the $\GNI$ problem is defined such that nodes can
communicate on {\em both} graphs $G_0$ and $G_1$, where the protocol of
\cite{KolOS18} works also on the definition $\GNI$ where only $G_0$ is the
communication graph and $G_1$ is given as input nodes. In \Cref{sec:RRR} we show a
protocol for $\GNI$ in this harder definition as well, that has constant many rounds and
$O(\log n)$ bits.

\section{Compilers for Small Space and Low Depth Verifiers}\label{sec:RRR}
In this section  we present compilers that transform any centralized
prover-verifier interaction  on the graph where the verifier uses either small space or requires low-depth
into a distributed constant round
interactive protocol

We start with verifiers that require only small space and show that they can be be turned into a distributed constant round
interactive protocol with a small proof size. Formally, we show the following:
\begin{theorem}
There exists a constant $\delta$ such that if $L$ is a language that can be
decided in time $\poly(n)$ and space $S=n^{\delta}$ then $L \in
\dIP[O(1),O(\log n)]$.
\end{theorem}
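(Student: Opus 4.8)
The plan is to carry out the strategy outlined in the techniques section: take the constant‑round interactive protocol of Reingold, Rothblum and Rothblum~\cite{ReingoldRR16} for bounded‑space computations, make its verifier run in sublinear time by handing it oracle access to a low‑degree extension (LDE) of the input, realize that oracle in the distributed model by a ``sum up the tree'' computation, and then feed the resulting sublinear‑time verifier to the RAM compiler of \Cref{sec:RAM_compiler}. Concretely, I would first recall the ``oracle'' form of the RRR theorem: for a language $L$ decidable in time $\poly(n)$ and space $n^{\delta}$ with $\delta$ a sufficiently small constant, there is an $O(1)$‑round public‑coin interactive protocol $\pi_{\mathrm{RRR}}$ in which the verifier $V$, given oracle access to a fixed low‑degree extension $\hat{x}$ of the input $x$ (here $x$ encodes $\langle G,I\rangle$ over a field $\F$ of size $\poly(n)$), runs in time $m^{\delta'}=o(n)$, makes only $O(1)$ queries to $\hat{x}$, and has total communication $o(n)$; one shrinks $\delta$ so that all of these bounds hold at once. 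Distributing the message rounds of $\pi_{\mathrm{RRR}}$ is routine and costs $O(1)$ extra rounds: it is public‑coin, so in each verifier‑to‑prover round node $u$ sends its chunk $R_u$ of the public randomness and the prover replies by splitting its message into chunks $Y_u$; since the total communication is $o(n)$, each node sends and receives $o(1)\le O(\log n)$ bits, once the nodes have been ordered via unique IDs in $[n]$ (established and checked through the $\Distinctness$/$\Permutation$ protocol exactly as in the RAM compiler).

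The heart of the argument is implementing the LDE oracle distributively. Using the spanning tree $T$ (constructed and verified as in \Cref{sec:RAM_compiler}) together with its assigned index $i=I(u)$, each node $u$ knows exactly which coordinates of $x$ it owns --- its own adjacency information together with its individual input $I(u)$ --- so for any point $\vec{r}\in\F^{O(1)}$ it can locally compute its contribution $c_u(\vec{r})=\sum_{j\in\mathrm{block}(i)}\hat{\chi}_j(\vec{r})\,x_j$ to the expansion $\hat{x}(\vec{r})=\sum_u c_u(\vec{r})$; this is a pure local computation and the nodes are computationally unbounded, so its cost is immaterial. The value $\hat{x}(\vec{r})$ is then obtained by summing up the tree: the prover gives each node the partial sum over its subtree, every node checks consistency with its children, and the root ends up holding $\hat{x}(\vec{r})$, a single field element of $O(\log n)$ bits. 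The one subtlety is that the evaluation point $\vec{r}$ is a function of the entire public transcript, which no single node sees in full; I would resolve this by having the prover announce $\vec{r}$, having neighbors check that they agree on it, and deferring the check that $\vec{r}$ is the prescribed function of the transcript to the RAM program below. Doing this for all $O(1)$ query points keeps the cost $O(\log n)$ per node and adds only a constant number of rounds, which fold into the rounds of the RAM compiler.

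Finally I would invoke the RAM compiler, slightly extended to allow oracle answers as externally supplied, separately certified input. Let $C$ be the RAM program that takes as input the distributed transcript chunks $\{(R_u,Y_u)\}_u$, the claimed point(s) $\vec{r}$, and the claimed value(s) $\hat{x}(\vec{r})$ held by the root, verifies that each $\vec{r}$ is the correct function of the transcript, and then runs $V$'s decision procedure using the supplied values as the answers of the LDE oracle, outputting $V$'s verdict. By construction $C$ runs in time $\timec=o(n)$, so by \Cref{thm:ram-compiler} (applied to the $O(1)$‑round public‑coin protocol $\pi_{\mathrm{RRR}}$ whose verifier is $C$, with the tree‑sum spliced in to supply the oracle answers) we obtain a distributed protocol with $O(1)$ rounds and proof size $O(\timec\log n/n)=O(\log n)$, as required. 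Completeness follows from the honest behaviour of $\pi_{\mathrm{RRR}}$, the exactness of the tree‑sum, and the completeness of the RAM and $\SetEquality$ sub‑protocols. For soundness: if all checks pass, then $\vec{r}$ is the correct point, the tree‑sum certifies that the value handed to $C$ is the true $\hat{x}(\vec{r})$, and the RAM compiler certifies that $C$ --- hence $V$ --- accepted on the real transcript and the real oracle answers, so by soundness of $\pi_{\mathrm{RRR}}$ we get $x\in L$. A constant number of parallel repetitions pushes the error to the required $1/3$.

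I expect the main obstacle to be this LDE‑oracle step: arranging that every node locally produces the right Lagrange‑weighted contribution from its correctly indexed slice of the input, propagating the global evaluation point to all nodes with verification even though it depends on randomness spread across the network, and splicing the tree‑sum cleanly into the RAM compiler so that the soundness chain transcript $\to\vec{r}\to\hat{x}(\vec{r})\to V$'s verdict holds against an adversarial prover. A secondary concern is purely quantitative: verifying that $\delta$ can be chosen small enough that RRR applies and that the verifier's running time, communication, and number of LDE queries are simultaneously $o(n)$, $o(n)$, and $O(1)$, and accounting for rounds so that the total stays $O(1)$.
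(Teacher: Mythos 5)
Your proposal follows essentially the same route as the paper's proof: invoke the oracle form of the RRR theorem to get a constant-round public-coin protocol with a sublinear-time verifier given LDE access, distribute the transcript among the nodes, realize the LDE query by having each node compute its Lagrange-weighted local contribution and summing up the spanning tree, and then apply the RAM compiler to the verifier's decision procedure. The only cosmetic differences are that the paper reduces to a \emph{single} LDE query (via a low-degree curve, citing Kalai--Raz) rather than $O(1)$ queries, and it has nodes check the broadcast point $z$ against their local transcript bits directly rather than deferring that check to the RAM program; both variants are sound.
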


The main tool behind this theorem is the interactive protocol of Reingold, Rothblum and Rothblum~\cite{ReingoldRR16}.
They show that for every language that can be evaluated in polynomial time and
bounded-polynomial space there exists a constant-round interactive proof
such that the verifier runs in (almost) linear time. This is an excellent
starting point for us, as our RAM compiler works great for compiling verifiers
that run in linear time.

However, a linear-time here is with respect to the size of the graph, \ie $m = O(n^2)$, and we wish to reduce the running time to $O(n)$ before we apply the compiler. As already observed in \cite{ReingoldRR16}, the running time of the verifier can be made sublinear (\eg $n^{\delta}$ for some small constant $\delta$) if the verifier is given oracle access to a low degree extension of the input (the input is the graph and possibly additional individual inputs held by each node). Luckily, computing a point in a low degree extension of the input is a task that is well suited for a distributed system, as it is a linear function of the input and hence can be computed ``up the tree''.

The following theorem is a simple adaptation of the result of Reingold et
al.~\cite{ReingoldRR16}. Here we state
the theorem with respect to inputs of length $m$ (the size of the graph), so as not to
confuse it with the parameter $n$, which in our context denotes the number of nodes
in the graph.
\begin{theorem}[Follows\footnote{In the original Theorem
		of \cite{ReingoldRR16} the number of queries to the low degree
		extension of the
		input is bounded only by $O(\poly(S) \cdot m^\delta)$. However, for low
		degree
		extensions, this can be reduced to a single query. The high level idea
		is to
		consider a low degree curve that agrees with all the queried points.
		The prover
		specifies the values for the points on the curve and the verifier
		checks answer
		on a random point on the curve. See \cite[Section 6]{KalaiR08} for
		further
		details.} from~\cite{ReingoldRR16}]\label{thm:RRR}
Let $L$ be a language that can be decided in time $\poly(m)$ and space $S = S(m)$, and $let \delta \in (0, 1)$ be an arbitrary (fixed) constant. There is a public-coin interactive proof for $L$ with perfect completeness and soundness error $1/2$. The number of rounds is $O(1)$. The communication complexity is $(poly(S) \cdot m^\delta)$. The (honest) prover runs in time $\poly(m)$, and the verifier runs in time $O(\poly(S) \cdot m^\delta)$, given a single query access to a low-degree extension of the input.
\end{theorem}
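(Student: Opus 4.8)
The plan is to derive Theorem~\ref{thm:RRR} from the interactive proof of Reingold, Rothblum and Rothblum~\cite{ReingoldRR16} in its holographic form, and then to collapse its many input queries into a single one by the standard low-degree-curve batching trick (as in~\cite[Section~6]{KalaiR08}). So I would start from the fact that, for every fixed constant $\delta' \in (0,1)$, the RRR protocol gives a public-coin interactive proof for $L$ with perfect completeness, soundness error $1/3$ (obtainable from RRR after a constant number of parallel repetitions), $O(1)$ rounds, communication $\poly(S)\cdot m^{\delta'}$, honest-prover time $\poly(m)$, and a verifier that --- apart from $q = \poly(S)\cdot m^{\delta'}$ oracle queries to a low-degree extension $\hat x$ of the $m$-bit input $x$ --- runs in time $\poly(S)\cdot m^{\delta'}$. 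The point I would exploit is that, the protocol being public-coin, all query points $z_1,\dots,z_q$ are functions of the transcript, so the honest prover knows every one of them.

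To set up the batching I would fix the low-degree extension to have \emph{small total degree} $d = \polylog(m)$ --- realize it over a subcube $H^t$ with $|H| = \polylog(m)$ and $t = O(\log m/\log\log m)$, so that each individual degree, hence $d$, is polylogarithmic --- and embed it in a field $\F$ of size $\Theta(d\,q) = q\cdot\polylog(m)$, large enough both to host $q$ distinct interpolation nodes and to dominate the soundness loss below. Then I would append one short block: having seen the public coins, the prover sends the univariate polynomial $p := \hat x \circ \gamma$, where $\gamma\colon \F \to \F^t$ is the curve of degree at most $q-1$ obtained by coordinate-wise Lagrange interpolation with $\gamma(a_i) = z_i$ for fixed distinct $a_i \in \F$. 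Since $\hat x$ has total degree $d$ and $\gamma$ degree $q-1$, $p$ has degree at most $d(q-1) = q\cdot\polylog(m)$ and costs $q\cdot\polylog(m)$ bits. The verifier then draws a fresh uniform $r \in \F$, makes its one remaining oracle query $\hat x(\gamma(r))$, checks $p(r) = \hat x(\gamma(r))$, and, if that holds, runs the RRR verifier feeding it $p(a_1),\dots,p(a_q)$ as the oracle answers.

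It remains to check the parameters. Completeness stays perfect: the honest prover sends $p = \hat x\circ\gamma$, the consistency test passes, and $p(a_i) = \hat x(z_i)$ are the true answers. For soundness, a cheating $p \ne \hat x\circ\gamma$ agrees with $\hat x\circ\gamma$ on at most $d(q-1)$ of the $|\F|$ field points, so the test fails except with probability $d(q-1)/|\F| \le 1/6$, giving overall soundness $\le 1/3 + 1/6 = 1/2$. The added communication is $q\cdot\polylog(m)$ bits, the added round count is $O(1)$, the verifier's extra work (interpolating $\gamma$, evaluating $\gamma(r)$, evaluating $p$ at $r$ and at the $a_i$) is $q\cdot\polylog(m)$, and the honest prover's extra work is $O(dq)$ evaluations of $\hat x$, each $\poly(m)$, hence still $\poly(m)$. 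Since $\delta$ is an arbitrary fixed constant and $S=S(m)$ is polynomially bounded, I would instantiate RRR with some constant $\delta' < \delta$, so that every $q\cdot\polylog(m) = \poly(S)\cdot m^{\delta'}\cdot\polylog(m)$ term is absorbed into $\poly(S)\cdot m^{\delta}$; this gives exactly the claimed bounds, now with a single query to $\hat x$.

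I do not expect a conceptual obstacle --- the reduction is a standard one --- but the step deserving the most care is the parameter balancing in the second paragraph: the low-degree extension must be chosen with polylogarithmic total degree and the field just large enough that the restriction polynomial $p$ fits inside the sublinear communication budget while the soundness loss stays below a constant, and one must confirm (as above) that this does not inflate the verifier's or the honest prover's running time.
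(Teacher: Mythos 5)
Your proposal is correct and follows exactly the route the paper takes: it invokes the holographic form of the \cite{ReingoldRR16} protocol and collapses the $q$ oracle queries into one via the low-degree-curve batching trick of \cite[Section 6]{KalaiR08}, which is precisely what the paper's footnote sketches (the paper gives no further detail than that sketch). Your elaboration of the parameter bookkeeping --- polylogarithmic total degree of the extension, a field large enough to host the $q$ interpolation nodes and keep the $d(q-1)/|\F|$ soundness loss below a constant, and absorbing the $q\cdot\polylog(m)$ overheads by instantiating RRR at $\delta' < \delta$ --- is sound and fills in details the paper leaves implicit.
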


Let $L$ be a language that can be decided in time $\poly(m)$ and space $S=m^{\delta}$, where $\delta$ is a small enough constants such that the communication complexity and verifier running-time from Theorem \ref{thm:RRR} are $O(\poly(S) \cdot m^\delta) \le m^{1/2} \le n$. Thus, we can distribute the communication of the protocol between the nodes such that each node gets a single bit.

Then, we need to simulate the computation of the verifier on the transcript.
Since the running time of the verifier is $n$ (given oracle access to a low
degree extension of the input), we use the RAM program compiler of Theorem
\ref{thm:ram-compiler} to simulate this part. The number of rounds will grow
only by 2 and the proof size is $O(\log n)$.

Finally, we need to implement oracle access to the low degree extension of the
input. We explain exactly what this means and how to compute it. We give a
description of {\em low degree extensions}.

\paragraph{Low Degree Extensions.}
Let $\HH$ be a finite field and let $\F$ be an extension field of
$\HH$, such that $\HH \subset \F$. Fix an integer $m \in \N$, and let $\phi
\colon \HH^m \to \F$ be a function. It is well known that there exists a unique
extension of $\phi$ into a function $\widehat{\phi} \colon \F^m \to \F$ which
agrees with $\phi$ on $\HH^m$ such that $\widehat{\phi}$ is an $m$-variant
polynomial of individual degree at most $|\HH|-1$. The function
$\widehat{\phi}$ is called the low degree extension of $\phi$ with respect to
$\F$, $\HH$ and $m$.	

Furthermore, there exists a
collection of $|\HH|^m$ functions $\{\widehat{\tau}_x\}_{x \in \HH^m}$ such
that each $\widehat{\tau}_x \colon \F^m \to \F$ is an $m$-variant polynomial of
individual degree $|\HH|^m$, and for every function $\phi \colon \HH^m \to \HH$
it holds that
\begin{align*}
\widehat{\phi}(z_1,\ldots,z_m) = \sum_{x \in
\HH^m}\widehat{\tau}_x(z_1,\ldots,z_m) \cdot \phi(x)~.
\end{align*}

\paragraph{Oracle Access to the Low Degree Extension.}
Let $X$ be the input of the verifier. That is, $X$ contains the graph itself
and additional inputs that each node has (\eg randomness and arbitrary other
inputs). We interpret $X$ as
describing the truth table of a function $\phi \colon \HH^m \to \F$. Then, the
oracle of a low degree extension of the input is a query to the function
$\widehat{\phi}$.

Let $z=z_1,\ldots,z_m$ be the query performed by the verifier, where $z \in
\F^m$, and let $v \in \F$ be the expected point. That is, the task of the
verifier is to check that $\widehat{z}=v$. The point $z$ and $v$ is defined in
the
transcript of the protocol and thus each
node has a single bit of $z$. We let the prover broadcast $z$ and $v$ to all
the nodes
which in turn verify consistency with their local bit. Now, we need to compute
\begin{align*}
\widehat{\phi}(z_1,\ldots,z_m) = \sum_{x \in
	\HH^m}\widehat{\tau}_x(z_1,\ldots,z_m) \cdot \phi(x)~.
\end{align*}
where all the nodes know $z=z_1,\ldots,z_m$ and each node knows a part of the
truth table of $\phi(\cdot)$. In the \cite{ReingoldRR16} protocol, it was shown
how to compute this in (almost) linear time by a centralized prover. Here, we
show how to compute this by a distributed verifier (where local computation is
free).

Let $X_u$ be all the elements $x \in \HH^m$ such that the node $u$ knows
$\phi(x)$. This includes all edges incident to $u$ and all bits
of $u$'s additional input. Then, $u$ can locally compute $S_u$ where
$$
S_u=\sum_{x \in X_u}\widehat{\tau}_x(z_1,\ldots,z_m) \cdot \phi(x).
$$
Using this notation we
have that
$$
\widehat{\phi}(z) = \sum_{u \in G}S_u.
$$
Finally, the nodes compute the sum $\sum_{u \in G}S_u$ ``up the tree''.
That is, the prover sends and tree $T$ with root $r$ (along with a proof) and
for each node $u$ he sends the sum $\sum_{v \in T_u}S_v$, nodes check
consistency with
their children to assure these values. The root $r$ has value $S_r=\sum_{v \in
G}S_v = \widehat{\phi}(z)$ and verifies that indeed $\widehat{\phi}(z)=v$. This
completes the description of the compiled protocol.

\paragraph{Communication Complexity.}
In the \cite{ReingoldRR16} protocol, the field $\HH$ is set to be of size
$O(\log n)$ and the field $\F$ is of size $\polylog(n)$. The parameter $m$ is
set such that $m=\log_{|\HH|}(n) = O(\log{n}/\log \log n)$. Thus, we get that
the point $z \in \F^m$ can be written using $O(\log \log n) \cdot m = O(\log
n)$ bits. The variable $v$ can be written using $O(\log \log n)$ bits.
Altogether, the total communication received by a node is bounded by $O(\log
n)$. Finally, by repeating the protocol a constant number of times (in
parallel) we can improve the soundness arbitrarily  while increasing the proof
size by only a constant.

\paragraph{Using the compiler for $\GNI$.}
In \Cref{sec:gni} we have seen a protocol for the $\GNI$ problem. However, the protocol
works only for the setting in which both graphs $G_0$ and $G_1$ are communication
graphs. A more difficult formulation of the problem is where $G_0$ is the communication graph
and $G_1$ is given as input to the nodes. That is, each node gets as input its neighbors
in $G_1$ but it cannot communicate with them directly.

We observe that our compiler for small space can be used to get a protocol for the
$\GNI$ problem in this stricter formulation using a constant number of rounds and a
proof of size $O(\log n)$. For this, we need to show a standard (centralized) interactive
protocol (with public coins) where the verifier uses small space at the end to verify the interaction.

We observe that the standard Goldwasser-Sipser interactive protocol for
graph non-isomorphism, as discussed in Section~\ref{sec:asym},  can be implemented in small space by choosing the right hash
function.
We need a hash function $h \colon \bit^{n^2} \to \{n\log n\}$ that has a small collision
probability, and the new requirement is that  verifying that $h(G')=0$ can be done in small space. First, as in Section~\ref{sec:asym} we let $h$ be the concatenation of $n$ hash
functions $h_1,\ldots,h_{n/\log n}$ such that $h_i \colon \bit^{n^2} \to \bit^{\log^2 n}$.
The $h_i$'s are chosen independently by the nodes and sent to the prover. Each
$h_i$
is chosen from a family $\cH$ of almost pair-wise hash functions that can be evaluated in small
space (\ie $\polylog(n)$), and the seed length is $O(\log^2 n)$. Many
One example is to define
$$
h_{a,b}(x)= b + x_1a^1 + x_2a^2 + \dots + x_{\ell}a^{\ell} + a^{\ell+1},
$$
where  $x=x_1,\ldots,x_{\ell}$ and
$a,b,x_i \in GF[2^{\log^2 n}]$. The probability of collision under $h_{a,b}$ is bounded by $n^2/n^{\log n} \le 1/n^{10}$.
One can observe that this family has all the required properties. Thus, $h$ has seed length
$O(n\log n)$ and can be computed in $O(\log n)$ space. Using our compiler for small
space computations we get the required protocol.

\subsection{A Compiler for All $\NC$ Computation}

We have shown how to compile the \cite{ReingoldRR16} protocol into a distributed verification protocol
that has constant rounds and a proof of size $O(\log n)$. The crux of this solution is based on the fact that 
given oracle access to a low degree extension of the input, the verifier can be 
made very efficient. This allowed us to use the RAM compiler while supplying 
the low degree extension  via a computation on a spanning tree.
%

The protocol of Goldwasser, Kalai and
Rothblum~\cite{GoldwasserKR15} shares similar properties with the RRR protocol which will allows us to compile this protocol as well. The protocol of \cite{GoldwasserKR15} considers verifier whose computation can be implemented by low depth circuits (as opposed to small space).
Let the class ``uniform $\NC$'' be the class of all language computable by a
family of $O(\log(n))$-space uniform circuits of size $\poly(n)$ and depth
$\polylog(n)$.
They showed that for any
language computable by a uniform $\NC$ circuit there is a public-coin interactive
protocol where the verifier runs in time
$\polylog(n)$ given oracle access to a low degree extension of the input and the
communication complexity is $\polylog(n)$.

Similarly to our compilation of the \cite{ReingoldRR16} protocol, we can also compile the GKR protocol with a slightly larger cost. The number of rounds and proof size will be
$\polylog(n)$, compared to the $O(1)$-round and $\log n$ proof size in the case of \cite{ReingoldRR16}. We therefore have:
\begin{theorem}\label{thm:distributed-gkr}
For any language $L$ in uniform $\NC$, it holds that $L \in
\dIP[\polylog(n),\polylog(n)]$.
\end{theorem}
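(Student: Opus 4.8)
The plan is to mirror, almost line for line, the compilation of the \cite{ReingoldRR16} protocol from the previous subsection, replacing the RRR protocol by the GKR protocol \cite{GoldwasserKR15} for uniform $\NC$. The three features we need from a centralized protocol for our machinery to apply are: it is public-coin; its communication is small (here $\polylog(n)$); and, after the interaction, the verifier's acceptance test runs in small time given a single query to a low-degree extension of the input. The GKR protocol has all three: it is public-coin, has $\polylog(n)$ rounds and $\polylog(n)$ total communication, and after the $\polylog(n)$ layers of sum-check its verifier's decision procedure takes $\polylog(n)$ time and touches the input only through a constant number of evaluations of its low-degree extension — which, by the low-degree-curve trick used in the footnote of \Cref{thm:RRR}, we may assume is a single evaluation. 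We choose the field and dimension parameters so that a query point $z\in\F^m$ and an answer $v\in\F$ are each describable with $\polylog(n)$ bits.

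First I would have the prover set up and prove a spanning tree $T$ rooted at $r$, exactly as in \Cref{sec:RAM_compiler}. Then I would simulate the GKR interaction round by round: in each of the $\polylog(n)$ rounds in which the centralized verifier speaks, the root $r$ samples that round's public random string, sends it to the prover, and the prover broadcasts it to the whole network; in each round in which the prover speaks, the prover broadcasts its message. In both cases every node checks consistency of the broadcast value with its neighbors, and $r$ additionally checks that the coins broadcast are the ones it sampled. Since the graph is connected, these checks force all nodes to agree on one transcript, and since $r$ is an honest node (only the prover is adversarial in this model) the coins are genuinely uniform; interleaving the coin generation with the prover's messages, rather than revealing future coins, preserves the public-coin soundness guarantee. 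Because GKR has $\polylog(n)$ rounds and $\polylog(n)$ total communication, this costs the network $\polylog(n)$ rounds and each node stores $\polylog(n)$ transcript bits.

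It then remains to simulate the verifier's final decision procedure distributively, and as in the RRR compilation I would split it in two. The part that does not touch the input — checking all the sum-check equations and evaluating the low-degree extensions of the (log-space uniform) wiring predicates at the transcript's random points — is a $\polylog(n)$-time RAM computation on data the nodes already hold, so it is discharged by the RAM-program compiler of \Cref{thm:ram-compiler}, adding two more rounds and negligible proof size. The single remaining input-LDE query is answered ``up the tree'': the prover broadcasts the query point $z$ and claimed value $v$ (nodes check them against their transcript bits), each node $u$ locally computes its contribution $S_u=\sum_{x\in X_u}\widehat{\tau}_x(z)\cdot\phi(x)$ from the input positions $X_u$ it knows (its incident edges and its own input), the prover sends subtree sums $\sum_{w\in T_u}S_w$ which nodes check against their children, and $r$ verifies that $\sum_u S_u=v$. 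Feeding $v$ into the RAM-compiled part closes the argument; completeness and soundness follow from those of GKR together with the soundness of the spanning-tree verification, of the $\SetEquality$ subroutine inside the RAM compiler, and of the up-the-tree summation. The round count — spanning tree, the $\polylog(n)$ GKR rounds, the $+2$ from the RAM compiler, and $O(1)$ up-the-tree rounds — is $\polylog(n)$, and the per-node proof is dominated by the stored transcript and the $O(\log n)$-size messages of the sub-protocols, hence $\polylog(n)$; a constant number of parallel repetitions drives the soundness error below any constant.

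The main obstacle I expect is the bookkeeping of the public-coin round-by-round simulation: ensuring the prover cannot equivocate (broadcast different transcripts to different parts of the network) while keeping everything within $\polylog(n)$ rounds, and being careful to relocate the GKR verifier's per-round computations (in particular the wiring-predicate evaluations) into the single $\polylog(n)$-time post-interaction RAM computation that the compiler handles — rather than, incorrectly, attempting to distribute them round by round.
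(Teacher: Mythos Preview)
Your proposal is correct and follows essentially the same approach as the paper: exploit that the GKR protocol is public-coin with $\polylog(n)$ rounds and communication, and that its verifier runs in $\polylog(n)$ time given a single low-degree-extension query to the input; then compile exactly as was done for the RRR protocol, handling the transcript-dependent part with the RAM compiler and the LDE query ``up the tree.'' The only cosmetic difference is that you have the root alone generate the public coins and broadcast them (with neighbor-consistency checks), whereas the paper's general compiler template distributes the coin bits across the nodes; since the total coin length is $\polylog(n)$ either choice fits in the budget, and your version is arguably cleaner here.
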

 Actually, the formal statement is more general. If $L$ can be decided by circuits of depth
$d$ and size $T$ then $L \in \dIP[d \cdot \log T \cdot \polylog(n), d \cdot \log T \cdot
\polylog(n)]$ which is non-trivial even for circuits of depth $n$.

One type of problems for which the GKR based protocol may be more useful than the
RRR based one is for problems based on shortest path problems. For instance, proving
that the diameter of the graph is a given value. This problem is in $\NC$ and 
hence our protocol is applicable to it.  
\section{Below the $\log n$ Barrier}\label{sec:loglog}
Our protocols so far used $O(\log n)$ sized proofs, which appears as  a very natural
barrier as simple tasks as counting and even pointing to a neighbor seem to
require $\log n$ bits. Nevertheless, in this section we show to ``push'' the protocols above
to use only $\log \log n$ at the price of more interaction rounds. Our main
result is a 5-message protocol for the {\em Decisional} Symmetry problem which is the same as $\Sym$ except that the permutation is fixed and the task is to decide whether it is an automorphism. Kol \etalcite{KolOS18} showed that $\DSym \in \dAM[O(\log n)]$.

We show that by adding more rounds of interaction we can further reduce the
proof size to $O(\log \log n)$.
\begin{theorem}
	$\DSym \in \dMAMAM[O(\log \log n)]$.
\end{theorem}
We begin by presenting a simple $\dMAM[O(\log)]$ protocol, then we show how to
simulate this protocol using $O(\log \log n)$ bits. Let $G$ be the communication graph and let $G'=\pi(G)$ be the graph after applying the fixed permutation. Each node $u$ knows its neighbors $N(u)$ in $G$ and its neighbors $N_{G'}(u')$ in $G'$ where $u'=\pi(u)$. Then check that the two graphs are equal, we need to verify that the set of edges in $G$ and in $G'$ are the same. Thus, we run the $\SetEquality$ protocol on the two sets of edges.

We re-develop the basic ``distributed NP'' tools but pushed down to the
$O(\log \log n)$ regime. Similar to the generality of the basic tree
construction in distributed NP proofs, these tools are basic and can be used
for many other problems as well. We show how to compute a tree in the graph
using only $O(\log \log
n)$ bits.

\subsection{Tool 1: Constructing a Rooted Tree}
The basic tool used for all our protocol was a spanning tree of a graph. Moreover, a
useful property of a tree is that it defines a unique node in the graph, the root,
which plays an important role in the protocols above.
While constructing this tree is simple using messages of length $\log n$ bits, it is
a challenging task using a proof of only $O(1)$ bits.

In the first message of the protocol, the prover computes a BFS tree in $G$
rooted at an arbitrary node $r$. Here it is crucial that the tree is a {\em
BFS} tree.
Then, it
sends each node $u$ its distance from the root modulo 3, denoted by $d_3(u)$.
Nodes exchange the value $d_3(u)$, to learn the distances of their neighbors.
Recall that in a BFS tree, the neighbors of a node $u$ can be either in the
same level of the
tree, one level higher, or level lower. The $d_3(u)$ values enable each node
to partition its neighbors into these three groups: a neighbor $v$ such that
$d_3(v)=d_3(u)$ is in the same level as $u$, if $d_3(v) = d_3(u)-1 \mod 3$ then
$v$ is one level higher than $u$ and if $d_3(v) = d_3(u)+1 \mod 3$ then $v$ is
one level lower than $u$. Each node $u$ sets its parent in the tree
$\parent(u)$ to be
its neighbor $v$ with $d_3(v) = d_3(u)-1 \mod 3$ with the minimal port number.
If no such neighbor exists, then this node is a root.

Let $T$ be the resulting graph. That is, $T$ is defined on the vertex set
and has edges $\{(u,\parent(u))\}_{u \in G}$.
Note that if the prover is honest, then the graph $T$ is indeed a tree,
however, it might be different than the BFS tree computed by the prover. In any
case, the only property we require from $T$ is that it  be a spanning tree of $G$. If the
prover
is dishonest, then $T$ might not be a tree at all, and in particular might contain
cycles.

To combat such cycles,
each node samples a uniform bit $b_u$ and sends it to the prover. For each node $u$ let
$P_u$ be the path from the node $u$ to the root $r$ in the (alleged) tree $T$. The prover
sends each
node $u$ the value $s(u)=\sum_{v \in P_u}b_v \mod 2$, that is the sum of the $b_v$'s
on the path from $u$ to the root modulo 2. Nodes exchange these values with their parent
in the tree. Each node $u$
verifies that $s(u)=s(\parent(u)) + b_u \mod 2$. If $T$ contains a cycle, then we claim that with
probability at least $1/2$ the nodes will reject. Indeed, let $C$ be a cycle in $T$. If $\sum_{u \in
C}b_u= 1
\mod 2$ (which happens with probability $1/2$), then
the values $s(u)$ on this cycle must be inconsistent and thus there will be at least one node that
rejects.

So we know that $T$ contains no cycles or the cheating prover is caught with a reasonable probability. However, it might still be the case that $T$
is a forest. In such a case it will contain more than one root node. To eliminate this, we
have
the prover broadcast the value $b_r$ to all nodes in the network, which in turn check for
consistency. If there are more than one roots in $T$, then with probability $1/2$
their $b_r$ values will be different and thus nodes will detect this inconsistency.
This insures that $T$ has no cycles and a single root thus it must be a spanning tree of
$G$. Of course, the soundness can be
amplified by standard (parallel) repetition.

The result of computing a tree is that there is a single root $r$, a unique chosen node
among the nodes in the network. In particualr, the result is a protocol for
``leader election'' that uses a small proof.
\begin{corollary}
	$\LeaderElection \in \dMAM[O(1)]$.
\end{corollary}

The formal protocol is given in \Cref{fig:protocol-tree}.

\protocol
{A protocol for computing a tree $T$ in $G$.}
{A distributed protocol for computing a tree in the graph with $O(1)$ bits of proof.}
{fig:protocol-tree}
{
\begin{enumerate}
	\item $\prover \Rightarrow \verifier$ (message 1): prover picks an
	arbitrary root node
	$r$ in the graph,  computes a BFS tree $T$ rooted at $r$, and sends
	each node $u$ the value $d_3(u)=\depth(T,u) \mod 3$, \ie that depth of
	the node $u$ in the tree $T$ mod 3.
	\item Local: nodes exchange the value $d_3(u)$ and each node $u$ sets
	$\parent(u)$ to be its neighbor $v$ with $d_3(v) = (d_3(u)-1 \mod 3)$ breaking ties using
	port number (e.g., preferring the smaller port number). If no such node $v$ exists, then $u$ is a root. Nodes
	notify their parents so that parents learn all their children in the tree.
	\item $\verifier \Rightarrow \prover$ (message 2): each node $u$ samples a random
	bit $b_u$ and sends it to the prover.
	\item $\prover \Rightarrow \verifier$ (message 3): prover sends node $u$
	the values $s(u) = \sum_{v \in P_u}b_v \mod 2$ ($P_u$ is the path from $r$
	to $u$ on the tree $T$) and $b_r$.
	\item Local: nodes exchange values $s(u)$ and $b_r$. They verify that
	$s(u)=s(\parent(u))+b_u \mod 2$ and that $b_r$ is the same among their
	neighbors.
\end{enumerate}
}

\paragraph{Completeness.} Completeness follows directly from the construction.
The honest prover picks exactly one root $r \in S$. Computes a BFS tree $T$
rooted at $r$,  and gives the correct distances modulus 3 and the correct values
$s(u)$. Thus, all the
consistency checks of nodes will pass.

\paragraph{Soundness.}
First, observe that regardless of the values sent by the prover, each node
(except the roots) will identify a single neighbor as its parent in the
graph $T$.

\paragraph{Case 1: The graph $T$ has no root.}
If $T$ has no root then it must contain a cycle. Let $C=v_1,\ldots, v_k$ be such a cycle.
With probability half it holds that $\sum_{v \in C}b_v \mod 2=1$. Recall that each
node verifies that $s(u)=s(\parent(u))+b_u \mod 2$. Thus, we have that $s(v_i) = s(v_{i+1})
+ b_{v_i} \mod 2$. However, since $\sum_{v \in C}b_v \mod 2=1$ there are no set of
values $s(v_i)$ that will satisfy these conditions.

\paragraph{Case 2: The graph $T$ has more than one root.}
Let $r,r'$ be two roots. Then, with probability half it holds that $b_r \ne b_{r'}$. In such a
case, the prover cannot broadcast value 0 or 1 and there must be a node that will reject
this consistency check of the broadcast.

\subsection{Tool 2: Proofs that Grow with the
Degree.}\label{tool:degree-sized-proofs}
We have constructed a tree $T$ in the graph $G$. The degree of a node $u$ in
the tree $T$ is the number of children $u$ has in $T$, denoted by
$\Delta(u)=\Delta_T(u)$.
In the rest of the protocol, it would be very helpful if each node $u$ could
get a proof of size $O(\Delta(u) \cdot \log \log n)$. However, some nodes might
have large $\Delta(u)$ and we cannot send such a proof directly.

Instead, we let each node $u$ get the proof of its parent $\parent(u)$. The
leaves of the tree get their own proof in addition to the proof of their
parents. Then, each node sends its proof to its parent. Since each node has at
most one parent in the tree, its gets a single proof of size $O(\log \log n)$
(the
only exception is the leave which get two such proofs). If a node $u$ has
degree $\Delta(u)$ then it gets $\Delta(u)$ proofs from its children, each of
size $O(\log \log n)$. Thus, we can simulate each node $u$ receiving a proof of
size $O(\Delta(u)\log \log n)$ within our $O(\log \log n)$ budget. From hereon,
we will describe the protocol with such a proof size and at the end such a
transformation is applied.

\subsection{Tool 3: Decomposition into Blocks}\label{sec:block-decomposition}
We have constructed a tree in the graph and have also increased the proof
capacity of nodes with high degree. We air to further increase the proof
capacity. The high level idea is to decompose the tree $T$ connected components
of
size roughly $\log n$. Then, each block can {\em act as a super-node} that
has capacity $\log n$ even if each real node gets only a single bit. Then, we
need
to simulate what the single node would have computed in a distributed manner in
the block.

We devise a protocol to decompose the tree $T$ into {\em
edge-disjoint} subtrees $T_1,\ldots, T_k$, which we call \emph{blocks}. Let $H$
be a graph with $k$ vertices, where node $v_i \in H$ corresponds to a block
$T_i$. There is an edge $(v_i,v_j)$ if $T_i \cap T_j = \{r\}$ where $r$ is a
root of $T_i$ and not of $T_j$. We
require the following from the decomposition protocol:

\begin{enumerate}
	\item $\forall i \in [k-1]$ it holds that $|T_i| \in [\log n, 3\log n]$.
	\item Blocks intersect at roots: If $i \ne j$ then $|T_i
	\cap T_j| \le 1$, and if $w \in T_i \cap T_j$ then $w$ is a root.
	\item The graph $H$ is a tree, and if $T_j$ is the parent of $T_i$ in $H$
	and $r$ is the root of $T_i$ then $\parent(r) \in T_j$.
	\item Each node $u$ knows its neighbors inside each block it belongs to.
\end{enumerate}
Computing such a decomposition by a centralized algorithm is simple and can be
done ``from bottom up'' on the tree, while greedily packing nodes into blocks.
Thus, we first let the prover compute this decomposition which we describe
next. Then, we describe how the prover sends the result back and proves that he
indeed computed the decomposition correctly. The whole decomposition is
performed in the first round of the protocol.

\paragraph{The Centralized Algorithm.}
The decomposition is
computed greedily starting from the leaves and working level by level up to the
root. In level $i$, if there is a node
$u$ such that the size of the subtree $T_u$ is in $[\log n, 2\log n]$ then
declare $T_u$ as a block and remove all nodes in $T_u$ except for $u$.

If there
is a node $u$ such that the size of the subtree $T_u$ is greater than $2\log n$
then we traverse its children $v_1,\ldots, v_\Delta$ according to the port
ordering and we greedily pack them into blocks each of sizes in $[\log n,2\log
n]$ where each block has $u$ as its root (this can be done since $|T_{v_j}| <
\log n$ for all $j \in
[\Delta]$). For each block, we remove all nodes of the block except the root
$u$.

We continues this for all
levels of the tree $T$. At the top level, there might be at most $\log n$
remaining nodes, and we add them to the last declared block. This is the only
block that
might have size more than $2\log n$ (but at most $3\log n$). This
completes the description of the algorithm.

One can easily verify that this decomposition satisfies properties 1-3 (where
the last property is described next). The blocks are by definition of size at
least $\log n$ and at most $3\log n$. The only intersection between blocks is
the roots that are not removed when a block is declared. The edges in $H$ are
always between a root $r$ to a block that is at a higher level in the tree, and
thus $H$ is a tree.

\paragraph{How to Distribute the Output.}
We distinguish between three types of nodes: (1) those who are not a root any
block, (2) those who are a root in exactly one block and (3) those who are
roots in more than one block. Thus, we let the prover send a trinary value to
each node indicating each type.

For nodes who are not roots, the output is very
simple. These nodes must be a member of only one block and their neighbors in
this
block are their neighbors in $T$. Thus, no additional information is required
from the prover.

Nodes that are root in a single block, are a
leaf in the other block. Thus, their children in $T$ are their neighbors in the
first block, and their parent is their only neighbor in the second block.
Again, no additional information in needed from the prover for these nodes to
know their neighbors in the block.

The third type is a bit more complicated. A
node that is a root in many blocks means that its children have been greedily
grouped in several blocks. However, since the algorithm groups according to the
port ordering of the root node, it suffice to know the {\em first} node in each
block in order for the root to divide its children according to the blocks.
Thus, we have the prover mark the first node of each block and these nodes
notify their parent \ie the root, that they are first in their block. This way,
the root knows exactly which children are
in each block.

\paragraph{Soundness.}
First, observe that no matter what a dishonest prover sends, by the definition
of the output of the protocol, the nodes will be decomposed into edge-disjoint
blocks that satisfy properties 2-4.

The only thing we need to verify is that the size of each block is indeed in
the desired range of $[\log n, 3\log n]$. The prover sends each block $T_i$ a
proof of the
number of nodes in $T_i$. That is, each node $u$ in $T_i$ gets the size of its
subtree inside the block $T_i$. Nodes check consistency with their neighbors in
$T_i$. Since that blocks are supposed to be of size $O(\log n)$, the partial
sums can be described using $O(\log \log n)$ bits.

 We note that in this protocol, a root will have a proof of size that is
proportional to the number of blocks that is participates in (which is bounded
by its degree). However, as described before, this can be reduced using the
transformation described above (see \Cref{tool:degree-sized-proofs}).

\subsection{Tool 4: Set Equality via Super Protocols}
Our protocol has computed thus far a spanning tree $T$ in the graph $G$, a
decomposition of $T$ into blocks $T_i$ of size $O(\log n)$ and a super tree $H$
between the blocks. The
advantage of having such blocks is that their {\em total} proof capacity is
$\log n$.
That is, if we consider the ``super'' graph $H$, we can send each node in $H$ a
proof of size $O(\log n)$ by sending each node inside the block $O(1)$ bits.
Then we can run protocols on the graph $H$ with very small proof size. These
would be call ``super protocols'', and would let us simulate the $\SetEquality$
protocol from \Cref{sec:set-equality} using only $O(\log \log n)$ bits.

In the $\SetEquality$ protocol, the root of the tree $T$ chooses a random field
element $s$ which it sends to the prover. The prover broadcast $s$ to all nodes
and they verify that they all got the same value. The element $s$ is $O(\log
n)$ bits long. Thus, we simulate this oh $H$ by having the whole block of the
root choose $s$ together. Then, in the original $\SetEquality$ protocol, then
prover sends each node $u$ the value $A_u=\prod_{v \in T_u, i \in
[\ell]}(a_{v,i} - s)$ (and
similarly for $B$. The prover here will send $A_u$ for each node of $H$ to the
corresponding block.

Now, our goal is to verify that the prover gave correct proofs. That is,
every two neighboring nodes in $H$, need to verify that the have received the
same element $s$, every node $u$ need to compute $A_u$ with its children in
$H$. We show how to simulate simple'' one-round protocols in
$H$ with $O(\log n)$ proof size using only $O(\log \log n)$ bits in the tree
$T$.
Here the term ``simple'' refers to protocols where the local computation
performed by each node is an aggregate function (see
\Cref{def:aggregate-function}). Aggregate functions are functions that works on
many inputs but that can be computed by applying the same operation on pairs of
inputs. The main examples we use are ``Equality''
(verifying that all neighbors have the same value in the proof) and ``Field
Op''
(\eg computing the sum/product of all neighbors value over a field). If $f$ is
an aggregate function then there exists a function $g$ such that in order to
compute $f(x_1,\ldots, x_n)$ it suffices to compute $y_i =
g(x_i,y_{i-1})$ for all $i$ where $y_1=x_1$ (if the function is Equality,
then $g$ would simply be the equality of its inputs, and in case of a ``Op''
then $g$ would be an addition/multiplication field operation).

To simulate the protocol we distributed the $O(\log n)$ proof for a node among
the $O(\log n)$ nodes in the corresponding block, such that each node gets
$O(1)$ bits of the proof, and a $\log \log n$-long index $i$ indicting the
location of these bits in the proof. Let $x_i$ be the proof of block $T_i$.
Then, we need to simulate the local computation, $f$, of a node in $H$ with its
children in the tree $H$. We let the prover compute $y_i = g(x_i,y_{i-1})$ and
since each
$y_i$ is $O(\log n)$ bits long the prover distributes $y_i$ to block $i$. What
is left is to
{\em verify} that indeed the $y_i$ values given by the prover are correct. That
is,
for each $i$ we need to verify that $y_i = g(x_i,y_{i-1})$, where block $i$ has
$y_i$ and $x_i$ and block $i-1$ has $x_{i-1}$ and blocks $T_{i-1}$ and $T_i$ a
shared parent in $H$.

Consider the block $T_i$ and its $k_i$ children $T_{i,1},\ldots, T_{i,k_i}$. Algorithm $\A_i$ consists of $k_i+1$ sub-algorithms that will run in parallel.
The role of Algorithm $\A_i$ is to verify that $y_i = g(x_i,y_{i-1})$. The main
idea is to define a graph $G_{i}$ which contains $T_i$, $T_{i+1}$ and a path
that connects them. Then,
since this graph has $O(\log n)$ nodes, we can run our RAM compiler on $G_{i}$
to compute $y_i = g(x_i,y_{i-1})$. Since the centralized algorithm can be
computed in time $O(\log n)$, the cost of such a protocol would be $O(\log \log
n)$.

The idea above works nicely if $T_i$ and $T_{i+1}$ share a root. Then, all
nodes know their neighbors in $G_i$ and can run the compiler protocol.
Moreover, we can run all such protocols in parallel. Each block participates in
at most two-such protocols (one with the previous block, and one with the
successor block). A root $r$ of a block might participate in $\Delta(r)$ such
protocols, which means that it will get $\Delta(r)$ proofs. Again, using
\Cref{tool:degree-sized-proofs} this can be reduced to $O(\log \log n)$ bits.
After this phase, we are left with all $i$ such that $T_i$ and $T_{i+1}$ do not
share a common root.

However, since the parent block is of size $O(\log n)$ and the remaining
children are vertex disjoint, then there can be at most $\log n$ such trees.
Thus, let $G'$ be the graph
containing the parent tree and all the left children. Then this graph has size
at most $n'=\log^2 n$ and we can again run the RAM program compiler on it to
compute all the values $y_i$. The cost of this will be a proof of size $O(\log
n') = O(\log \log n)$.

The final result is a $\SetEquality$ protocol. Notice that the operations in
the original $\SetEquality$ protocol are actually aggregative functions.
We use a ``super protocol'' to
verify that all blocks received the same element $s$. Then, we use another
super protocol to compute the product each the children of a block to compare
with the value $A_u$. Finally, the root block will hold $A_r$ and $B_r$ and we
can run a RAM compiler inside the block to verify their equality.

\paragraph{Rounds.}
The tree is computes in message 1 and verifier in messages 2-3. In message 2
the root block chooses $s$ and sends it to the prover. In message 3 the prover
responds with $s$ and with $A_u$ and $B_u$ for each block. Then, we run the
different super protocols which take 3 rounds as to run the RAM compiler which
will be sent as messages 3-5 (message 3 is used for $s,A_u,B_u$ and for the
first message of the super protocols). In total this is a $\mathsf{MAMAM}$
protocol that uses $O(\log \log n)$ bits.

\begin{corollary}
	$\SetEquality \in \dMAMAM[O(\log \log n)]$.
\end{corollary}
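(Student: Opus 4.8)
The plan is to realize the $\dAM[O(\log n)]$ protocol for $\SetEquality$ from \Cref{sec:set-equality} on the \emph{super-graph} $H$ produced by the block decomposition of \Cref{sec:block-decomposition}, while charging every real node only $O(\log\log n)$ bits. Recall the $O(\log n)$ protocol has three moving parts: a spanning tree; a shared random field element $s\in\F$ chosen by the root; and the ``up the tree'' evaluation of $P_{\A}(s)=\prod_{u,i}(a_{u,i}-s)$ and $P_{\B}(s)$ ending in an equality check at the root, together with the permutation check used to certify node indices. Each of these will be lifted to $H$, whose vertices are the blocks $T_1,\dots,T_k$ of size $\Theta(\log n)$; the key point is that a block has \emph{total} proof capacity $\Theta(\log n)$ even if each of its real nodes gets only $O(1)$ bits plus an $O(\log\log n)$-bit index locating those bits inside the block's fragmented proof.

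First I would run the $O(1)$-bit spanning-tree protocol (\Cref{fig:protocol-tree}) in message~1, verified in messages~2--3, and on top of it the first-round block-decomposition protocol of \Cref{sec:block-decomposition}, obtaining $T$, the edge-disjoint blocks $T_i$ meeting only at roots, and the super-tree $H$ consistent with parent pointers. The degree-amortization trick of \Cref{tool:degree-sized-proofs} lets a block root lying in $\Delta$ blocks, or a super-node with $\Delta$ children in $H$, be handled within the $O(\log\log n)$ budget by routing proofs through $T$. Index certification is done \emph{block-locally}: the prover gives each real node its rank inside its block, so the $O(\log n)$-bit fragmented proof of a super-node can be reassembled in the right order, and we run the $\Permutation$ protocol of \Cref{fig:protocol-permutation} on each $T_i$ as if it were the whole graph; since $|T_i|=O(\log n)$ this costs $O(\log\log n)$ bits, with all blocks run in parallel and shared-root vertices splitting their proof across the two blocks.

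Next I would simulate the random element and the product evaluation. The root block of $T$ jointly samples $s\in\F$, one bit per node, and sends it to the prover in message~2; the prover re-broadcasts $s$ with the partial products in message~3. Consistency of $s$ across neighboring super-nodes and the product relations are \emph{aggregate}-function checks (\Cref{def:aggregate-function}), verified by ``super protocols'': for an edge of $H$ joining two blocks that share a root, the combined graph is connected and has $O(\log n)$ vertices, so the RAM compiler of \Cref{sec:RAM_compiler} applied to it costs $O(\log\log n)$ bits; the at most $\log n$ children of a block that do \emph{not} share its root are handled together in one graph of size $O(\log^2 n)$, still giving an $O(\log\log n)$ proof. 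Concretely, the prover first gives each block $T_u$ the value $a'_u=\prod_{v\in T_u, i}(s-a_{v,i})$ (and $b'_u$), checked by an in-block RAM compiler on a program that reassembles $s$, recomputes the product and compares; then it gives each super-node $u$ the subtree product $A_u=\prod_{w\in H_u}a'_w$ (and $B_u$), and we check $A_u=a'_u\cdot\prod_i A_{v_i}$ over the children $v_i$ of $u$ in $H$ by the same mechanism, folding the children in pairs $A_{v_i}\cdot A_{v_{i+1}}$ so each invoked subgraph stays of size $O(\log n)$ and parallelizing the folding through the prover. Finally the root super-node holds $A_r=P_{\A}(s)$ and $B_r=P_{\B}(s)$ and checks $A_r=B_r$ with one more in-block RAM compiler. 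Soundness is inherited: the polynomial-identity argument over a field of size $n^{\Theta(1)}$ is exactly that of \Cref{sec:set-equality}, and everything else reduces to soundness of the tree, decomposition, permutation and RAM-compiler sub-protocols. Round-wise, the tree occupies message~1, its verification and the decomposition checks fit in messages~2--3, $s$ is chosen in message~2, and the broadcast plus all super protocols occupy messages~3--5, so the whole protocol is a $\dMAMAM$ protocol, which gives $\SetEquality\in\dMAMAM[O(\log\log n)]$.

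The main obstacle is the up-the-super-tree aggregation when a super-node has many children: the proof of a super-node is genuinely fragmented over $\Theta(\log n)$ real nodes and is never materialized at a single node, so \emph{every} relation must be checked by a RAM compiler on a \emph{connected} subgraph small enough that $O(\log(\mathrm{size}))=O(\log\log n)$. Making this work simultaneously over all edges of $H$ in the presence of high-degree super-nodes forces the pairwise folding of children products together with careful use of \Cref{tool:degree-sized-proofs}, and the real work is the bookkeeping: guaranteeing the auxiliary subgraphs are connected and of size $\polylog(n)$, and that parallel RAM-compiler instances do not collide on the vertices shared between neighboring blocks.
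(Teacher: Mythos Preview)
Your proposal is correct and follows essentially the same approach as the paper: build the $O(1)$-bit spanning tree and the block decomposition, have the root block jointly sample $s$, distribute $s$, the in-block products $a'_u,b'_u$, and the subtree products $A_u,B_u$ as fragmented $O(\log n)$-bit labels on the super-tree $H$, and verify all of the aggregate relations (equality of $s$, in-block products, and the parent--children product identity with pairwise folding) by running the RAM compiler on connected subgraphs of size $O(\log n)$ or $O(\log^2 n)$, using the degree-amortization trick for high-degree block roots. Your round accounting (tree in message~1, verification and $s$ in messages~2--3, super protocols in messages~3--5) and your identification of the bookkeeping for fragmented proofs and parallel in-block compilers as the main technical issue also match the paper's treatment.
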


\protocol
{A protocol for computing a tree $\SetEquality$.}
{A distributed protocol for computing $\SetEquality$ in the graph with $\log
\log n$ bits of proof.}
{fig:protocol-set-equality-loglog}
{
\begin{enumerate}
	\item $\prover \Leftrightarrow \verifier$ (message 1-3): prover and
	verifier interact to compute a tree $T$ and a block decomposition of the
	graph.
	\item $\verifier \Rightarrow \prover$ (message 2): The root block $T_r$
	distributively samples an element $s$.
	\item $\prover \Rightarrow \verifier$ (message 3): prover sends $s$ to all
	blocks, such that each node gets $O(1)$ bits.
	\item $\prover \Rightarrow \verifier$ (message 3): prover sends each block
	the value $A_u$ and $B_u$.
	\item $\prover \Leftrightarrow \verifier$ (message 3-5): prover and
	verifier run a super protocol for equality of $s$ in all blocks.
	\item $\prover \Leftrightarrow \verifier$ (message 3-5): prover and
	verifier run a super protocol for ``product'' to verify the values $A_u$
	and $B_u$.
\end{enumerate}
}

\subsection{$\DSym$, $\Clique$ and More}
The tools described above are quite powerful in the sense that they allow us to solve several different problems using a proof of size $O(\log \log n)$, except for $\SetEquality$. One particular example is the problem $\DSym$ which is similar to the $\Sym$ problem except that the automorphism is fixed. That is, a permutation $\pi$ is given to all nodes, and the goal is to decide if $\pi$ is an automorphism of the graph. This problem was studied by \cite{KolOS18} where they showed that $\DSym \in \dAM[O(\log n)]$ but any distributed $\NP$ proof for requires a proof of size $\Omega(n^2)$.

We show that using more interaction, we can reduce the proof size to $O(\log \log n)$. Each node knows its neighbors in the graph $G$, and applies $\pi$ to learn its neighbors in $\pi(G)$. Now, we need to verify that the set of edges in $G$ is the same as in $\pi(G)$ which is simply solved by a running the $\SetEquality$ protocol described above. Thus, we get the following corollary:
\begin{corollary}
	$\DSym \in \dMAMAM[O(\log \log n)]$.
\end{corollary}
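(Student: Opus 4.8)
The plan is to reduce $\DSym$ to $\SetEquality$ and then invoke the $\dMAMAM[O(\log\log n)]$ protocol for $\SetEquality$ established just above. Recall that in $\DSym$ the communication graph is $G=(V,E)$ and every node receives the same permutation $\pi$ on $V$; the goal is to decide whether $\pi$ is an automorphism, equivalently whether $E=\pi(E)$ where $\pi(E):=\{(\pi(u),\pi(v)):(u,v)\in E\}$. Since $\pi$ is known to everyone, a node $u$ knows $\pi(u)$ as well as $\pi(v)$ for each neighbour $v\in N_G(u)$, so it can compute \emph{locally} two lists of $O(\log n)$-bit strings: $a_{u,v}:=\big\langle\min(u,v),\max(u,v)\big\rangle$ and $b_{u,v}:=\big\langle\min(\pi(u),\pi(v)),\max(\pi(u),\pi(v))\big\rangle$, one entry of each for every $v\in N_G(u)$. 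Writing these always as the sorted pair gives a canonical encoding of an undirected edge; let $\A=\{a_{u,v}\}$ and $\B=\{b_{u,v}\}$ be the two resulting multisets.

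The correctness of the reduction rests on the identity ``$\pi$ is an automorphism of $G$ $\iff$ $\A=\B$ as multisets''. Each edge $e\in E$ contributes the same canonical string $a_{u,v}$ from both of its endpoints, so $\A$ is exactly two copies of the canonically encoded edge set of $G$; likewise $\B$ is two copies of the canonically encoded $\pi(E)$. Because $\pi$ is a bijection, $|E|=|\pi(E)|$, so no degenerate multiplicity cases arise, and because both an edge and its $\pi$-image are stored as sorted pairs of labels, the image of $e$ under $\pi$ is encoded precisely as the corresponding edge would be. Hence $\A=\B$ iff $E=\pi(E)$ iff $\pi$ is an automorphism, as claimed.

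It now suffices to run the $\dMAMAM[O(\log\log n)]$ protocol for $\SetEquality$ on the instance in which node $u$ holds the two lists $(a_{u,v})_{v\in N_G(u)}$ and $(b_{u,v})_{v\in N_G(u)}$. Each list has length $\deg(u)\le n$ and entries of $O(\log n)$ bits, exactly the parameter range handled in \Cref{sec:set-equality} and by its $O(\log\log n)$ refinement. Completeness and soundness are inherited verbatim: on a ``yes'' instance the honest prover plays the honest $\SetEquality$ strategy, while on a ``no'' instance $\A\neq\B$ and some node rejects with constant probability, which is amplified to $2/3$ by a constant number of parallel repetitions. The number of messages and the proof size are those of the $\SetEquality$ protocol, i.e.\ five messages and $O(\log\log n)$ bits, giving $\DSym\in\dMAMAM[O(\log\log n)]$.

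The reduction itself is almost immediate; the delicate point, which lives entirely inside the $\SetEquality$ protocol, is the cost bookkeeping when nodes hold long lists. In the $\log\log n$ regime the partial products $\prod_{v\in T_u,\,i}(s-a_{v,i})$ are re-derived by running the RAM compiler \emph{within} each block, so the time of that computation scales with the total number of list entries inside the block rather than with the number of nodes in it. To keep this within the $O(\log\log n)$ budget one wants each block to carry only $O(\log n)$ list entries; this is obtained by running the greedy bottom-up packing of \Cref{sec:block-decomposition} with the accumulated \emph{edge count} of a subtree in place of its vertex count as the packing measure. With that adjustment the compiled $\SetEquality$ protocol still costs $O(\log\log n)$, and the argument above goes through; this is the step I would expect to require the most care.
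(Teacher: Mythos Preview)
Your reduction is exactly the paper's: each node locally computes its incident edges in $G$ and in $\pi(G)$ and then invokes the $\dMAMAM[O(\log\log n)]$ $\SetEquality$ protocol on the two edge multisets. The paper states this in two sentences and leaves the details implicit; your write-up is more careful (canonical sorted-pair encoding, the ``two copies of each edge'' observation) but otherwise identical.

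Your final paragraph goes beyond the paper. The paper's Tool~4 description tacitly treats each node as holding a single element, so the question of how the in-block RAM compiler copes when nodes carry $\deg(u)$ list entries is not spelled out there. Your proposed fix---running the greedy block packing of \Cref{sec:block-decomposition} on accumulated edge count rather than vertex count---is a natural way to restore the $O(\log\log n)$ budget, though note that a single vertex of degree $\omega(\log n)$ would still overflow a block under that rule; one then also needs to split a high-degree node's list across its children's blocks via the mechanism of Tool~2. None of this is in the paper's (two-line) proof of the corollary, so flagging it is a plus rather than a deviation.
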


\paragraph{Summing Up the Tree.}
Another application of the super protocols described above is that we ``sum up the tree'' within the $\log \log n$ capacity. Suppose each node has a value $a_i \in \bit^{O(\log n)}$ and we wish to verify that $\sum_{i=1}^{n}a_i=K$ where $K$ is known to all.

For every root $r_i$ of block $T_i$ the prover computes $A_{r_i} = \sum_{u \in T_{r_i}}a_u$ the sum of values in the subtree rooted at $r_i$. Note that $A_{r_i}$ is $O(\log n)$ bits long and the prover cannot send it to $r_i$. Instead, the prover distributed $A_{r_i}$ among the nodes of the block $T_i$.

Then, we run a super protocol for the ``addition'' operation. That is, here $y_i = x_i + y_{i-1}$. This ensures that for the root $r$ of $T$ is in the block $T_r$ then this block has the correct value $\sum_{i=1}^{n}a_i$. Finally, to check if this value equals $K$ we run the RAM compiler inside the block $T_r$. The total proof size of this protocol is $O(\log \log n)$. Moreover, this can be done in three rounds: the tree and block decomposition are given in the first message and verified in messages 2-3. The values $A_{r_i}$ are given also in the first message. Then, we run the RAM compiler which is three messages that can be perform in parallel to messages 1-3.

Using this we get several different problems in the $O(\log \log n)$. For
example, it was shown by \cite{KormanKP10} that the problem $\Clique$ of
proving that a graph contains a clique of size $K$ (where $K$ is a fixed
parameter) can be done by a distributed NP proof of size $O(\log n)$. Plugging in our
addition operation we get a 3-round protocol with $O(\log \log n)$ proof size.

For the particular problem of $\Clique$ we get actually modify the protocol to depend
only on the leader election protocol and get a constant size proof. The prover marks a
clique of size
$K$ and selects one of the nodes in the clique to be a leader. We run the leader protocol
described above to verify that indeed a single leader is selected. Finally, each marked
nodes verify that indeed $K-1$ of its neighbors are marked {\em and} that one of them is
the
leader. If each node has $K-1$ neighbors then we know that there are at least $K$ nodes
marked as the clique. If there are more, then there will a node that is not the neighbor of
the selected leader. Thus,  this assures that there are exactly $K$ marked nodes and that
they form a clique. Formally, we ge that
\begin{corollary}
	$\Clique \in \dMAM[O(1)]$.
\end{corollary}

\section{Extension and Open Problems}\label{sec:conclusions}

We have shown that a  distributed verifier interacting with a prover in a randomized manner is very powerful. To a large extent our results show that it will hard to prove lower bounds in this model, especially super-polylogarithmic lower bounds.

\subsection{Argument Labeling Systems}
Can the interaction be eliminated? As discussed in Section~\ref{sec:our-results} this is not possible without changing the model. A common approach for eliminating interaction is the Fiat-Shamir transformation or heuristic (first used in~\cite{FiatS86}) that converts a public-coins interaction into one without interaction. In the Fiat-Shamir setting the parties have access to a random oracle, and the prover is computationally limited: it can only perform a (polynomially) bounded number of queries to the random oracle.
This results in an {\em argument} system rather than with a proof system. In such a system, proofs of false statements {\em exist}, but it is computationally hard to find them. Therefore, such protocols do not contradict the lower bounds for proof labeling schemes. We call such a system an ``argument labeling scheme''.

Applying such a transformation should be done with some care in general, and even more so in our setting. First, the error probability should be small, say, $2^{-\lambda}$ where $\lambda$ is what is known as the ``security parameter''. That is, the cheating prover has limited running time, but gets $1^{\lambda}$ (that is, $\lambda$ in unary representation) as input. We need the running time of the prover as a function of $\lambda$ to be significantly less than $2^{\lambda}$. There are general statements regarding the type of protocol for which the Fiat-Shamir transformation preserves soundness works (see~\cite{CanettiCHLRR18} and references therein). These include constant round protocols and the GKR protocol, so the protocols considered in this work are covered.

To use the Fiat-Shamir transformation in the distributed setting, we need to
apply the random oracle $R$ to the {\em entire input}, in our case, the graph.
While each node has access to the random oracle, they still do not know the entire
graph and thus cannot compute $R(G)$. Instead, we let each node apply $R$ to
its local neighborhood. Then, we combine all the results using a spanning tree.
That is, for a node $u$, let $v_1,\ldots,v_k$ be its children in the tree.
Suppose that $R$ compresses strings of arbitrary length into strings of length $\lambda$.
We define the values $y_u$  computed by node $u$. The computation works from the leaf nodes up to the root. The leaf nodes set the value
$y_u=R(N(u))$ (where $N(u)$ are the neighbors of $u$). Then, an inner node $u$
where children's values $y_1, y_2, \ldots, y_k$ computes $y_u=
R(y_1,\ldots,y_k,N(u))$. Finally, the value of the oracle is $y_r$ where $r$ is the root vertex
(this is also known as ``Merkle Tree computation").

Note that since the
 tree is chosen by the prover, it introduces another source of cheating for a dishonest prover.
For each tree used by the prover,it gets a single value
 of the oracle function. It can be proven that every change in the tree
 requires at least one new call to the oracle (w.h.p.). 
Since the prover has limited running time, it allows him to try a bounded number of trees, in fact, polynomially many, which can be shown to have an only negligible
 effect on the success probability.

With this approach, we can obtain an argument labeling systems with $O(\lambda \log n)$ bits for all the problems discussed in this work in a setting where the prover is polynomial given the witness. This includes ``permutation" and ``Symmetry" as well as all problems solvable small space and in NC.

However, once we settled for computational soundness we can get even more general results: we can apply it to the setting of verification of computation in the style of Kilian~\cite{Kilian92} and Micali~\cite{Micali00} (see also Barak and Goldreich~\cite{BarakG08}) where the correctness of a computation inside $\PP$ is proven using a PCP proof.
The proof itself is not communicated to the verifier but rather committed using
a collision resistant hash function (a cryptographic primitive which  exists 
in the random oracle world (as well as from collision resistant hashing).

The main challenge in incorporating these techniques in our setting is that the
input is assumed to be encoded by a linear code, as in \cite{BabaiFLS91}. In
our case, the input is the graph and we cannot encode the graph using a
distributed verifier since if it is dense it is too large. Instead, we have the
prover encode the graph and broadcasts  a short commitment of the encoding to
the nodes. Then, the verifier needs to check a small number of locations in the
PCP proof (say, $\log n$ locations) and a similar amount in the encoded input.
The network needs to verify the PCP proof of \cite{BabaiFLS91}. Some of the
work can simply be done
by a chosen leader, but the sensitive part that requires ``the whole village"
is verifying the correctness of these locations in the encoded graph. This {\em
can} be done by a distributed verifier: If $A$ is the generating matrix of the
code ($A$ is a fixed matrix known to all), then computing a single location in
the encoding corresponds to the inner product of $v$ and $A_i$ where $v$ is a
$n^2$ long vector that represents the graph, $i$ is the index in the encoded
word and $A_i$ is the $\ith{i}$ row of $A$. Each node can compute locally the
its neighborhood the inner product with the corresponding locations in $A_i$
and the full inner product can be easily computed ``up the tree'' in the graph
as we have seen. So the result is that {\em any problem in $\PP$ has an
argument labeling scheme of length $\lambda \cdot \polylog n$ in the random
oracle model}.

An intriguing question is whether the recent exciting results on using the Fiat-Shamir method without random oracles are relevant in the distributed setting (e.g.~\cite{KalaiPY18,CanettiCHLRR18}). A reasonable modeling assumption in this setting is the common random string or common reference string.

\subsection{Open Questions}

There are many interesting questions arising from this work (see also the questions in the body of the paper).
Can we tradeoff interaction for communication? We have seen example where going 
from $\dAM$ to $\dMAM$ (Symmetry) reduce the communication exponentially, so at 
the very least we expect to pay a significant cost. This is particularly useful 
when the   communication is $O(1)$.  Is there a general reduction from private 
coins to public coins with a distributed verifier? Does having {\em shared} 
private randomness help?

Finally, a natural property to consider is distributed protocols that are 
zero-knowledge. For public-coins protocol we 
note that if our compiler (any one of them) gets a zero-knowledge protocol as 
input then the output protocol will also be zero-knowledge. One question is whether we can do 
more in this model.

\section*{Acknowledgments}
We are grateful to Guy N.\ Rothblum and for in depth discussion about
interactive protocols and in particular for the work on
Theorem~\ref{thm:distributed-gkr} which is joint with him. We also thank Ron
D.\ Rothblum for explaining \cite{ReingoldRR16} and the parameters in it.

\bibliographystyle{alpha}
\bibliography{crypto}


\end{document}